\newcommand{\noun}[1]{\textsc{#1}}
\providecommand{\tabularnewline}{\\}
  \theoremstyle{definition}
  \newtheorem{defn}{\protect\definitionname}
  \theoremstyle{plain}
  \newtheorem{lem}{\protect\lemmaname}
  \theoremstyle{plain}
  \newtheorem{prop}{\protect\propositionname}
 \theoremstyle{definition}
  \newtheorem{example}{\protect\examplename}
\theoremstyle{plain}
\newtheorem{thm}{\protect\theoremname}
  \theoremstyle{remark}
  \newtheorem{rem}{\protect\remarkname}
  \theoremstyle{plain}
  \newtheorem{cor}{\protect\corollaryname}
  \theoremstyle{plain}
  \newtheorem*{fact*}{\protect\factname}
  \providecommand{\definitionname}{Definition}
  \providecommand{\examplename}{Example}
  \providecommand{\factname}{Fact}
  \providecommand{\lemmaname}{Lemma}
  \providecommand{\propositionname}{Proposition}
  \providecommand{\remarkname}{Remark}
\providecommand{\corollaryname}{Corollary}
\providecommand{\theoremname}{Theorem}
\begin{document}

\title{Learning and Type Compatibility in Signaling Games\thanks{This material was previously part of a larger paper titled ``Type-Compatible
Equilibria in Signaling Games.'' We thank Dan Clark, Laura Doval,
Glenn Ellison, Mira Frick, Ryota Iijima, Lorens Imhof, Yuichiro Kamada,
Robert Kleinberg, David K. Levine, Kevin K. Li, Eric Maskin, Dilip
Mookherjee, Harry Pei, Matthew Rabin, Bill Sandholm, Lones Smith,
Joel Sobel, Philipp Strack, Bruno Strulovici, Tomasz Strzalecki, Jean
Tirole, Juuso Toikka, Alex Wolitzky, and four anonymous referees for
helpful comments and conversations, and National Science Foundation
grant SES 1643517 for financial support. }}

\author{Drew Fudenberg\thanks{Department of Economics, MIT. Email: \texttt{\protect\href{mailto:drew.fudenberg@gmail.com}{drew.fudenberg@gmail.com}}}
\and Kevin He\thanks{Department of Economics, Harvard University. Email: \texttt{\protect\href{mailto:hesichao@gmail.com}{hesichao@gmail.com}} } }

\date{\text{First version: October 12, 2016}\\ \text{This version: }June 30, 2018}
\maketitle
\begin{abstract}
{\normalsize{}Which equilibria will arise in signaling games depends
on how the receiver interprets deviations from the path of play. We
develop a micro-foundation for these off-path beliefs, and an associated
equilibrium refinement, in a model where equilibrium arises through
non-equilibrium learning by populations of patient and long-lived
senders and receivers. In our model, young senders are uncertain about
the prevailing distribution of play, so they rationally send out-of-equilibrium
signals as experiments to learn about the behavior of the population
of receivers. Differences in the payoff functions of the types of
senders generate different incentives for these experiments. Using
the Gittins index \citep{gittins1979bandit}, we characterize which
sender types use each signal more often, leading to a constraint on
the receiver's off-path beliefs based on ``type compatibility'' and
hence a learning-based equilibrium selection. }{\normalsize \par}

{\normalsize{}\thispagestyle{empty}
\setcounter{page}{0}}{\normalsize \par}
\end{abstract}
\begin{flushleft}
{\small{}\newpage{}}
\par\end{flushleft}{\small \par}

\interfootnotelinepenalty=10000

\section{Introduction}

In a signaling game, a privately informed \emph{sender} (for instance
a student) observes their type (e.g. ability) and chooses a signal
(e.g. education level) that is observed by a \emph{receiver }(such
as an employer), who then picks an action without observing the sender's
type. These signaling games can have many perfect Bayesian equilibria,
which are supported by different specifications of how the receiver
would update his beliefs about the sender's type following the observation
of off-path signals that the equilibrium predicts will never occur.
These off-path beliefs are not pinned down by Bayes rule, and solution
concepts such as perfect Bayesian equilibrium and sequential equilibrium
place no restrictions on them. This has led to the development of
equilibrium refinements like \citet{cho_signaling_1987}'s Intuitive
Criterion and \citet{banks_equilibrium_1987}'s divine equilibrium
that reduce the set of equilibria by imposing restrictions on off-path
beliefs, using arguments about how players should infer the equilibrium
meaning of observations that the equilibrium says should never occur. 

This paper uses a learning model to provide a micro-foundation for
restrictions on the off-path beliefs in signaling games, and thus
derive restrictions on which Nash equilibria can emerge from learning.
Our learning model has a continuum of agents who are randomly matched
each period, with a constant inflow of new agents who do not know
the prevailing distribution of strategies and a constant outflow of
equal size. The large population makes it rational for the agents
to ignore repeated-game effects and ensures the aggregate system is
deterministic, while turnover in the population lets us analyze learning
in a stationary model where social steady states exist, even though
individual agents learn.\footnote{It is interesting to note that \citet{spence_job_1973} also interpreted
equilibria as steady states (or ``nontransitory configurations'')
of a learning process, though he did not explicitly specify what sort
of process he had in mind.} To give agents adequate learning opportunities, we assume that their
expected lifetimes are long, so that most agents in the population
live a long time. And to ensure that agents have sufficiently strong
incentives to experiment, we suppose that they are very patient. This
leads us to analyze what we call the ``\emph{patiently stable}''
steady states of our learning model.

Our agents are Bayesians who believe they face a time-invariant distribution
of opponents' play. As in much of the learning-in-games literature
and most laboratory experiments, these agents only learn from their
personal observations and not from sources such as newspapers, parents,
or friends.\footnote{As we explain in Corollary \ref{cor:1}, our main result extends to
environments where some fraction of the population has access to data
about the play of others.} Therefore, patient young senders will rationally try out different
signals to see how receivers react. This implies some ``off-path''
signals that have probability zero in a given equilibrium will occur
with small but positive probabilities in the steady states that approximate
it, so we can use Bayes rule to derive restrictions on the receivers'
typical posterior beliefs following these rare but positive-probability
observations. Moreover, differences in the payoff functions of the
sender types lead them to experiment in different ways. As a consequence,
we can prove that patiently stable steady states must be a subset
of Nash equilibria where the receiver responds to beliefs about the
sender's type that respect a \emph{type compatibility }condition.
This provides a learning-based justification for eliminating certain
``unintuitive'' equilibria in signaling games. These results also
suggest that learning theory could be used to control the rates of
off-path play and hence generate equilibrium refinements in other
games.

\subsection{A Toy Example}

To give some of the intuition for our general results, we study a
particular stage game embedded in an artificially simple learning
model, and explain why optimal experimentation rules out a seemingly
unappealing equilibrium outcome. Consider the following signaling
game: the sender is either the high type $\theta_{H}$ or the low
type $\theta_{L}$, both equally likely. The sender chooses between
two signals,\textbf{ $s\in\{$In}, \textbf{Out}\}. If the sender plays
\textbf{Out}, the game ends and both parties get 0 payoff. If the
sender plays \textbf{In}, the receiver then chooses an action $a\in\{$\textbf{Up,
Down}\}. Payoffs following the signal \textbf{In} depend on the sender's
type and receiver's action, as in the following matrix. 
\begin{center}
\begin{center}%
\begin{tabular}{|c|c|c|}
\hline 
signal: \textbf{In} & action: \textbf{Up} & action:\textbf{ Down}\tabularnewline
\hline 
\hline 
type: $\theta_{H}$ & $2,2$ & $-2,0$\tabularnewline
\hline 
type: $\theta_{L}$ & $1,-1$ & $-3,0$\tabularnewline
\hline 
\end{tabular}\end{center}
\par\end{center}

Both sender types prefer (\textbf{In},\textbf{ Up})\textbf{ }to \textbf{Out
}and prefer \textbf{Out }to (\textbf{In, Down}), while the receiver
prefers \textbf{Up} over \textbf{Down} after signal \textbf{In} if
he believes there is greater than $\frac{1}{3}$ chance that the sender
has type $\theta_{H}$. 

This game has a perfect Bayesian equilibrium (PBE) where both types
choose \textbf{Out} and the receiver plays \textbf{Down} after \textbf{In},
sustained by the belief that anyone who sends \textbf{In }has probability
$p\le\frac{1}{3}$ of being $\theta_{H}$. This updating requires
the receiver to interpret the off-path \textbf{In} as a signal that
the sender is more likely to be $\theta_{L}$, even though $\theta_{H}$
gets 1 more utility than $\theta_{L}$ does from\textbf{ In }regardless
of the receiver's strategy. So, ``both \textbf{Out}'' is eliminated
by the D1 criterion.\footnote{Any receiver play at the off-path signal \textbf{In} that makes it
weakly optimal for $\theta_{L}$ to deviate to \textbf{In} would also
make it strictly optimal for $\theta_{H}$ to deviate. \citet{cho_signaling_1987}'s
D1 criterion therefore requires the receiver to put 0 probability
on $\theta=\theta_{L}$ after \textbf{In}. However, the PBE passes
their Intuitive Criterion.} 

Now suppose there are three infinitely lived agents: $\theta_{H},$
$\theta_{L},$ and R (for receiver). Suppose that in each period $t\in\{1,2,3,...\}$,
the three agents play a simultaneous-move game, where each sender
type $\theta_{i}$ chooses a signal $s_{t}^{i}$, and R chooses a
single action $a_{t}$ to use against both of the senders. (This is
a deterministic analog of the receiver randomly matching with each
type with probability 1/2 without knowing the sender's type.) At the
end of period $t$, R observes the signal choices of both types, while
$\theta_{i}$ observes $a_{t}$ if and only if $s_{t}^{i}=\text{\textbf{In}}$.
That is, each agent only learns from his/her personal experience;
by choosing the ``outside option'' \textbf{Out}, the sender does
not learn how the receiver would have responded to signal \textbf{In}
that period. 

Agents think that each opponent is committed to some mixed strategy
of the stage game and plays this strategy each period, regardless
of their observations of past play: that is, all agents are strategically
myopic in the sense of \citet{fudenberg1993learning} and do not try
to influence the distribution of strategies they will face in future
rounds. At the beginning of $t=1$, each type $\theta_{i}$ is endowed
with a $\text{Beta}(c_{U},c_{D})$ prior about the probability that
R responds to \textbf{In} with \textbf{Up}, with $c_{D}>c_{U}>0,$
so they assign higher probability to \textbf{Down }than to \textbf{Up.}
R starts with two independent priors $\text{Beta}(c_{I}^{H},c_{O}^{H})$
and $\text{Beta}(c_{I}^{L},c_{O}^{L})$ about the probabilities that
$\theta_{H}$ and $\theta_{L}$ choose \textbf{In} each period, where
we only assume $c_{I}^{H},c_{O}^{H},c_{I}^{L},c_{O}^{L}>0$. The independence
assumption means that R does not learn about the behavior of one type
from the play of the other. 

Agents discount payoffs in future periods at rate $0\le\delta<1$
and choose a signal or action each period so as to maximize expected
discounted payoffs. Because expected utility maximizing agents never
strictly prefer to randomize, each of them has a deterministic optimal
policy, so that each discount factor $\delta$ induces a deterministic
infinite history of play $(s_{t}^{H},s_{t}^{L},a_{t})_{t=1}^{\infty}$$=:Y(\delta)$.
When $\delta=0$, the agents play myopically every period, and because
of our assumption that $c_{D}>c_{U}$, both types choose \textbf{Out}
in $t=1$. They thus gain no information about R's play, do not update
their beliefs, and continue playing \textbf{Out} in every future period.
So, the unintuitive ``both \textbf{Out}'' PBE is the learning outcome
when agents are sufficiently impatient. However, we can show for all
large enough $\delta$, that eventually behavior converges to R playing
\textbf{Up} and $\theta_{H}$ playing \textbf{In} each period.\footnote{In practice, the required patience level is not unreasonably high.
When $c_{D}=1.1$ $c_{U}=1,$ $c_{I}^{H}=c_{O}^{L}=1,$ and $c_{O}^{H}=c_{I}^{L}=3,$
for example, $\delta=0$ yields the pathological PBE as the long-run
outcome, but when $\delta\ge0.92$ the long-run outcome involves $s_{t}^{H}=\textbf{In}$
and $a_{t}=\text{\textbf{Up}}$. } 

We give a sketch of the argument, beginning with characterizing agents'
optimal behavior each period. R observes the same information regardless
of his play, so he plays myopically under any $\delta$. Let $p(h_{t})$
be R's Bayesian posterior belief about the probability that an \textbf{In}
sender has type $\theta_{H}$, given history $h_{t}$. Then $a_{t+1}=\text{\textbf{Up}}$
if $p(h_{t})>\frac{1}{3}$ and $a_{t+1}=\text{\textbf{Down}}$ if
$p(h_{t})<\frac{1}{3}$. 

Now we turn to $\theta_{i},$ whose problem involves active experimentation.
Formally, the dynamic optimization problem facing $\theta_{i}$ is
a one-armed Bernoulli bandit. Choosing $s_{t}^{i}=\text{\textbf{Out}}$
is equivalent to taking the safe outside option while choosing $s_{t}^{i}=\text{\textbf{In}}$
is equivalent to pulling the risky arm and getting a payoff depending
on whether the pull results in a success ($a_{t}=\text{\textbf{Up}}$)
or a failure $(a_{t}=\text{\textbf{Down}}$). The optimal policy for
$\theta_{i}$ involves the Gittins index (defined later in Equation
(\ref{eq:gittins_def})). Type $\theta_{i}$ plays \textbf{In} at
those histories where \textbf{In} has a positive Gittins index. 

Once a type chooses to play \textbf{Out }in some period, she receives
no further information and will continue to play \textbf{Out }in all
subsequent periods. Denote the period in $Y(\delta)$ that $\theta_{i}$
first switches from \textbf{In} to \textbf{Out} as $T(i,\delta)\in\mathbb{N}\cup\{\infty\}$,
where $T(i,\delta)=\infty$ means $\theta_{i}$ plays \textbf{In}
forever. The argument that learning eliminates pooling on \textbf{Out}
follows from three observations:

\textbf{Observation 1}. \emph{The high type switches to }\textbf{\emph{Out}}\emph{
later than the low type does, that is,} $T(H,\delta)\ge T(L,\delta)$.
To see why, suppose by way of contradiction that $T(H,\delta)<T(L,\delta).$
Then, in period $t=T(H,\delta),$ both $\theta_{H}$ and $\theta_{L}$
have played \textbf{In} until now and have seen the same history,
so they hold the same belief about R's play. Yet $\theta_{H}$ chooses
\textbf{Out} at this history while $\theta_{L}$ chooses \textbf{In},
meaning $\theta_{H}$ has a negative Gittins index for \textbf{In
}while $\theta_{L}$ has a positive one. This is impossible, since
$\theta_{H}$'s payoff from \textbf{In} is always 1 higher than that
of $\theta_{L}$, so $\theta_{H}$'s index for \textbf{In} is also
always 1 higher than that of $\theta_{L}$ when the two types have
the same belief about R's play. 

\textbf{Observation 2}. \emph{As the high type becomes patient, she
experiments with }\textbf{\emph{In}}\emph{ arbitrarily many times},
that is, $\lim_{\delta\to1}T(H,\delta)=\infty.$ This follows because
for any fixed full-support prior belief of $\theta_{H}$ about R's
mixed strategy, the Gittins index for \textbf{In} stays close to the
``success payoff'' of 2 for a length of time that grows to infinity
as $\delta\to1$, even in the worst case where R plays \textbf{Down}
in every period.

\textbf{Observation 3}. \emph{If the high type plays }\textbf{\emph{In}}\emph{
sufficiently many times and more often than the low type does, then
eventually R will believe that }\textbf{\emph{In}}\emph{ senders have
greater than $\frac{1}{3}$ chance of being $\theta_{H}$,} that is,
there exists $\bar{N}\in\mathbb{N}$ so that $p(h_{T})>\frac{1}{3}$
for any history $h_{T}$ where (i) $\theta_{H}$ played \textbf{In}
at least $\bar{N}$ times and (ii) $\theta_{L}$ played \textbf{In}
no more than $\theta_{H}$ did. This follows from the fact that R's
belief about $\theta_{i}$'s play after $n_{I}^{i}$ instances of
\textbf{In} and $n_{O}^{i}$ instances of \textbf{Out} is $\text{Beta}(c_{I}^{i}+n_{I}^{i},c_{O}^{i}+n_{O}^{i})$. 

From Observation 2, we see that $T(H,\delta)$ is larger than the
$\bar{N}$ of Observation 3 when $\delta$ is sufficiently large.
The history up to period $t$ for any $t\ge\bar{N}$ will therefore
contain at least $\bar{N}$ periods of $\theta_{H}$ playing \textbf{In}
(namely, the very first $\bar{N}$ periods of the game), and by Observation
1 $\theta_{L}$ will have played \textbf{In} no more than $\theta_{H}$
did in this history. So by Observation 3, $p(h_{t})>\frac{1}{3}$
for $t\ge\bar{N}$, meaning $a_{t}=\textbf{Up}$ for $t\ge\bar{N}$.
Since $s_{t}^{H}=\textbf{In}$ for all $t\le\bar{N}$ and observing
\textbf{Up} increases the Gittins index of \textbf{In}, the high type
must always play \textbf{In}. This means $\lim_{t\to\infty}s_{t}^{H}=\textbf{In}$
and $\lim_{t\to\infty}a_{t}=\textbf{Up}$ for large $\delta<1$. 

In this simple learning model, agents are patient and face the same
opponents many times but do not try to influence their future play.
Furthermore, agents believe that opponents' play is stationary but
it changes markedly over time. Finally, the analysis was greatly simplified
because there are only two signals, one of which gives a certain payoff
to the senders; this makes playing \textbf{Out }an absorbing state
and, together with the assumption of Beta priors, lets us explicitly
calculate how the system evolves. This paper's focus is on general
signaling games embedded in a learning model with large populations
and anonymous random matching, eliminating repeated-game effects.
We focus on steady states of the model, where the stationary assumption
is satisfied. Also, we relax the Beta prior assumption and allow learners
to have fairly general non-doctrinaire priors. Many results about
the steady-state model, however, have analogs in the simple model
above. 

Intuitively, $\theta_{H}$ is ``more compatible'' with signal \textbf{In}
than $\theta_{L}$. Definition \ref{def:more_compatible} formalizes
this relation in general signaling games. Observation 1 corresponds
to Lemma \ref{lem:compatible_comonotonic}, which shows that whenever
one type is more compatible than another with a signal, the more compatible
type sends the signal more often. Observation 2 corresponds to Lemma
\ref{lem:nondom_message}, which says a sufficiently patient and long-lived
sender type will experiment many times with all signals that have
the potential to strictly improve that type's equilibrium payoff.
Observation 3 corresponds to Lemma \ref{lem:receiver_learning}, which
says receivers can eventually learn the compatibility relation associated
with each signal, provided senders' play respects the relation and
the more compatible type experiments enough with the signal. Lemmas
\ref{lem:compatible_comonotonic}, \ref{lem:receiver_learning}, and
\ref{lem:nondom_message} are combined to prove the main result of
the paper (Theorem \ref{thm:PS_is_compatible}), a learning-based
refinement in general signaling games. 

\subsection{Outline and Overview of Results}

Section \ref{sec:Model} lays out the notation we will use for signaling
games and introduces our learning model. Section \ref{sec:sender_side}
introduces the Gittins index, which we use to analyze the senders'
learning problem. It also defines type compatibility, which is a partial
order that drives our results. We say that type $\theta^{'}$ is \emph{more
type-compatible with signal $s'$} than type $\theta^{''}$ if, whenever
$s^{'}$ is a weak best response for $\theta^{''}$ against some receiver
behavior strategy, it is a strict best response for $\theta^{'}$
against the same strategy. To relate this static definition to the
senders' optimal dynamic learning behavior, we show that, under our
assumptions, the senders' learning problem is formally a multi-armed
bandit, so the optimal policy of each type is characterized by the
Gittins index. Theorem \ref{thm:index} shows that the compatibility
order on types is equivalent to an order on their Gittins indices:
$\theta^{'}\text{ }$ is more type-compatible with signal $s^{'}$
than type $\theta^{''}$ if and only if, whenever $s^{'}$ has the
(weakly) highest Gittins index for $\theta^{''}$, it has the strictly
highest index for $\theta{}^{'}$, provided the two types hold the
same beliefs and have the same discount factor.

Section \ref{sec:aggregate} studies the aggregate behavior of the
sender and receiver populations. There we define and characterize
the \emph{aggregate responses} of the senders and of the receivers,
which are the analogs of the best-response functions in the one-shot
signaling game. First, we use a coupling argument to extend Theorem
\ref{thm:index} to the aggregate sender behavior, proving that types
who are more compatible with a signal send it more often in aggregate
(Lemma \ref{lem:compatible_comonotonic}). Then we turn to the receivers.
Intuitively, we would expect that when receivers are long-lived, most
of them will have beliefs that respect type compatibility, and we
show that this is the case. More precisely, we show that most receivers
best respond to a posterior belief whose likelihood ratio of $\theta^{'}$
to $\theta^{''}$ dominates the prior likelihood ratio of these two
types whenever they observe a signal $s$ which is more type-compatible
with $\theta^{'}$ than $\theta^{''}$. Lemma \ref{lem:receiver_learning}
shows this is true for any signal that is sent ``frequently enough''
relative to the receivers' expected lifespan, using a result of \citet*{fudenberg_he_imhof_2016}
on updating posteriors after rare events. 

Finally, Section \ref{sec:two_sided} combines the earlier results
to characterize the steady states of the learning model, which can
be viewed as pairs of mutual aggregate responses, analogous to the
definition of Nash equilibrium. We start by proving Lemma \ref{lem:nondom_message},
which shows that any signal that is not \emph{weakly equilibrium dominated}
(see Definition \ref{def:J}) gets sent ``frequently enough'' in
steady state when senders are sufficiently patient and long lived.
Combining the three lemmas discussed above, we establish our main
result: any patiently stable steady state must be a Nash equilibrium
satisfying the additional restriction that the receivers best respond
to certain \emph{admissible beliefs} after every off-path signal (Theorem
\ref{thm:PS_is_compatible}). 

As an example, consider \citet{cho_signaling_1987}'s beer-quiche
game, where it is easy to verify that the strong type is more compatible
with \textbf{Beer} than the weak type. Our results imply that the
strong types will in aggregate send this signal at least as often
as the weak types do, and that a very patient strong type will experiment
with it ``many times.'' As a consequence, when senders are patient,
long-lived receivers are unlikely to revise the probability of the
strong type downwards following an observation of \textbf{Beer}. Thus,
the ``both types eat quiche'' equilibrium is not a patiently stable
steady state of the learning model, as it would require receivers
to interpret \textbf{Beer} as a signal that the sender is weak. 

Finally, Theorem \ref{thm:eqm_dominated_types} provides a stronger
implication of patient stability in generic pure-strategy equilibria,
showing that off-path beliefs must assign probability zero to types
that are equilibrium dominated in the sense of \citet{cho_signaling_1987}. 

\subsection{Related Work}

\citet{fudenbergKreps1988,fudenbergKreps1994learning,fudenberg_learning_1995}
pointed out that experimentation plays an important role in determining
learning outcomes in extensive-form games. As in \citet{fudenberg1993learning},
they studied a model with a single infinitely-lived and strategically
myopic agent in each player role who acts as if the opponent's play
is stationary. Because these models involved accumulating information
over time, they did not have steady states. Our work is closer to
that of \citet{fudenberg_steady_1993} and \citet{fudenberg_superstition_2006}
which also studied learning by Bayesian agents in a large population
who believe that society is in a steady state. A key issue in this
work, and more generally in studying learning in extensive-form games,
is characterizing how much agents will experiment with myopically
suboptimal actions. If agents do not experiment at all, then non-Nash
equilibria can persist, because players can maintain incorrect but
self-confirming beliefs about off-path play. \citet{fudenberg_steady_1993}
showed that patient long-lived agents will experiment enough at their
on-path information sets to learn if they have any profitable deviations,
thus ruling out steady states that are not Nash equilibria. However,
more experimentation than that is needed for learning to generate
the sharper predictions associated with backward induction and sequential
equilibrium. \citet{fudenberg_superstition_2006} showed that patient
rational agents need not do enough experimentation to imply backwards
induction in games of perfect information. Later on, we say about
how the models and proofs of those papers differ from ours.

This paper is also related to the Bayesian learning models of \citet{kalai_rational_1993},
which studied two-player games with one agent on each side, so that
every self-confirming equilibrium is path-equivalent to a Nash equilibrium,
and \citet{esponda_berknash_2016}, which allowed agents to experiment
but did not characterize when and how this occurs. It is also related
to the literature on boundedly rational experimentation in extensive-form
games (e.g. \citet*{jehiel_learning_2005}, \citet*{laslier_stubborn_2014}),
where the experimentation rules of the agents are exogenously specified.
We assume that each sender's type is fixed at birth, as opposed to
being i.i.d. over time. \citet*{dekel_learning_2004} showed some
of the differences this can make using various equilibrium concepts,
but they did not develop an explicit model of non-equilibrium learning. 

For simplicity, we assume here that agents\emph{ }do not know the
payoffs of other players and have full support priors over the opposing
side's behavior strategies. Our companion paper \citet{FudenbergHe2017TCE}
supposed that players assign zero probability to dominated strategies
of their opponents, as in the Intuitive Criterion \citep{cho_signaling_1987},
divine equilibrium \citep{banks_equilibrium_1987}, and rationalizable
self-confirming equilibrium \citep*{dekel1999payoff}. There, we analyzed
how the resulting micro-founded equilibrium refinement compares to
those in past work. 

\section{Model \label{sec:Model}}

\subsection{Signaling Game Notation}

A \emph{signaling game} has two players, a sender (player 1, ``she'')
and a receiver (player 2, ``he''). The sender's type is drawn from
a finite set $\Theta$ according to a prior $\lambda\in\Delta(\Theta$)
with $\lambda(\theta)>0$ for all $\theta$.\footnote{Here and subsequently, $\Delta(X)$ denotes the collection of probability
distributions on the set $X.$} There is a finite set $S$ of signals for the sender and a finite
set $A$ of actions for the receiver.\footnote{To lighten notation we assume that the same set of actions is feasible
following any signal. This is without loss of generality for our results
as we could let the receiver have very negative payoffs when he responds
to a signal with an ``impossible'' action.} The utility functions of the sender and receiver are $u_{1}:\Theta\times S\times A\to\mathbb{R}$
and $u_{2}:\Theta\times S\times A\to\mathbb{R}$ respectively.

When the game is played, the sender knows her type and sends a signal
$s\in S$ to the receiver. The receiver observes the signal, then
responds with an action $a\in A$. Finally, payoffs are realized. 

A \emph{behavior strategy for the sender }$\pi_{1}=(\pi_{1}(\cdot|\theta))_{\theta\in\Theta}$
is a type-contingent mixture over signals $S$. Write $\Pi_{1}$ for
the set of all sender behavior strategies. 

A\emph{ behavior strategy for the receiver} $\pi_{2}=(\pi_{2}(\cdot|s))_{s\in S}$
is a signal-contingent mixture over actions $A$. Write $\Pi_{2}$
for the set of all receiver behavior strategies. 

\subsection{Learning by Individual Agents \label{subsec:learning1} }

We now build a learning model with a given signaling game as the stage
game. In this subsection, we explain an individual agent's learning
problem. In the next subsection, we complete the learning model by
describing a society of learning agents who are randomly matched to
play the signaling game every period. 

Time is discrete and all agents are rational Bayesians with geometrically
distributed lifetimes. They survive between periods with probability
$0\le\gamma<1$ and further discount future utility flows by $0\le\delta<1$,
so their objective is to maximize the expected value of $\sum_{t=0}^{\infty}(\gamma\delta)^{t}\cdot u_{t}$.
Here, $0\le\gamma\delta<1$ is the effective discount factor, and
$u_{t}$ is the payoff $t$ periods from today. 

At birth, each agent is assigned a role in the signaling game: either
as a sender with type $\theta$ or as a receiver. Agents know their
role, which is fixed for life. Every period, each agent is randomly
and anonymously matched with an opponent to play the signaling game,
and the game's outcome determines the agent's payoff that period.
At the end of each period, agents observe the outcomes of their own
matches, that is, the signal sent, the action played in response,
and the sender's type. They do not observe the identity, age, or past
experiences of their opponents, nor does the sender observe how the
receiver would have reacted to a different signal.\footnote{The receiver's payoff reveals the sender's type for generic assignments
of payoffs to terminal nodes. If the receiver's payoff function is
independent of the sender's type, his beliefs about it are irrelevant.
If the receiver does care about the sender's type but observes neither
the sender's type nor his own realized payoff, a great many outcomes
can persist, as in \citet*{dekel_learning_2004}. } Agents update their beliefs and play the signaling game again with
new random opponents next period, provided they are still alive. 

Agents believe they face a fixed but unknown distribution of opponents'
aggregate play, so they believe that their observations will be exchangeable.
We feel that this is a plausible first hypothesis in many situations,
so we expect that agents will maintain their belief in stationarity
when it is approximately correct, but will reject it given clear evidence
to the contrary, as when there is a strong time trend or a high-frequency
cycle. The environment will indeed be constant in the steady states
that we analyze. 

Formally, each sender is born with a prior density function over the
aggregate behavior strategy of the receivers, $g_{1}:\Pi_{2}\to\mathbb{R}_{+,}$
which integrates to 1. Similarly, each receiver is born with a prior
density over the sender's behavior strategies\footnote{Note that the agent's prior belief is over opponents' \emph{aggregate}
play (i.e. $\Pi_{1}$ or $\Pi_{2}$) and not over the prevailing distribution
of behavior strategies in the opponent population (i.e. $\Delta(\Pi_{2})$
or $\Delta(\Pi_{1})$), since under our assumption of anonymous random
matching, these are observationally equivalent for our agents. For
instance, a receiver cannot distinguish between a society where all
type $\theta$ randomize 50-50 between signals $s_{1}$ and $s_{2}$
each period, and another society where half of the type $\theta$
always play $s_{1}$ while the other half always plays $s_{2}$. Note
also that because agents believe the system is in a steady state,
they do not care about calendar time and do not have beliefs about
it. \citet*{fudenbergKreps1994learning} suppose that agents append
a non-Bayesian statistical test of whether their observations are
exchangeable to a Bayesian model that presumes exchangeability. }, $g_{2}:\Pi_{1}\to\mathbb{R}_{+}$. We denote the marginal distribution
of $g_{1}$ on signal $s$ as $g_{1}^{(s)},$ so that $g_{1}^{(s)}(\pi_{2}(\cdot|s))$
is the density of the new senders' prior over how receivers respond
to signal $s$. Similarly, we denote the $\theta$ marginal of $g_{2}$
as $g_{2}^{(\theta)}$, so that $g_{2}^{(\theta)}(\pi_{1}(\cdot|\theta))$
is the new receivers' prior density over $\pi_{1}(\cdot|\theta)\in\Delta(S)$.
\renewcommand{\theenumi}{(\roman{enumi})} 

It is important to remember that $g_{1}$ and $g_{2}$ are beliefs
over opponents' strategies, but not strategies themselves. A new sender
expects the response to $s$ to be $\int\pi_{2}(\cdot|s)\cdot g_{1}(\pi_{2})d\pi_{2}$
while a new receiver expects type $\theta$ to play $\int\pi_{1}(\cdot|\theta)\cdot g_{2}(\pi_{1})d\pi_{1}$. 

We now state a regularity assumption on the agents' priors that will
be maintained throughout. 
\begin{defn}
\label{def:regular_prior} A prior $g=(g_{1},g_{2})$ is \textbf{regular}
if 
\end{defn}
\begin{enumerate}
\item {[}\emph{independence}{]} $g_{1}(\pi_{2})=\underset{s\in S}{\prod}g_{1}^{(s)}(\pi_{2}(\cdot|s))$
and $g_{2}(\pi_{1})=\underset{\theta\in\Theta}{\prod}g_{2}^{(\theta)}(\pi_{1}(\cdot|\theta))$. 
\item {[}$g_{1}$ \emph{non-doctrinaire}{]} $g_{1}$ is continuous and strictly
positive on the interior of $\Pi_{2}.$
\item {[}$g_{2}$ \emph{nice}{]} for each type $\theta$$,$ there are positive
constants $\left(\alpha_{s}^{(\theta)}\right)_{s\in S}$ such that
\[
\pi_{1}(\cdot|\theta)\mapsto\frac{g_{2}^{(\theta)}(\pi_{1}(\cdot|\theta))}{\prod_{s\in S}\pi_{1}(s|\theta){}^{\alpha_{s}^{(\theta)}-1}}
\]
is uniformly continuous and bounded away from zero on the relative
interior of $\Pi_{1}^{(\theta)}$, the set of behavior strategies
of type $\theta$. 
\end{enumerate}
Independence ensures that a receiver does not learn how type $\theta$
plays by observing the behavior of some other type $\theta^{'}\ne\theta$,
and that a sender does not learn how receivers react to signal $s$
by experimenting with some other signal $s^{'}\ne s$. For example,
this means in \citet{cho_signaling_1987}'s beer-quiche game that
the sender does not learn how receivers respond to beer by eating
quiche.\footnote{One could imagine learning environments where the senders believe
that the responses to various signals are correlated, but independence
is a natural special case.} The non-doctrinaire nature of $g_{1}$ and $g_{2}$ implies that
the agents never see an observation that they assigned zero prior
probability, so that they have a well-defined optimization problem
after any history. Non-doctrinaire priors also imply that a large
enough data set can outweigh prior beliefs \citep{diaconis_uniform_1990}.
The niceness assumption in (iii) ensures that $g_{2}$ behaves like
a power function near the boundary of $\Pi_{1}$. Any density that
is strictly positive on $\Pi_{1}$ satisfies this condition, as does
the Dirichlet distribution, which is the prior associated with fictitious
play \citep{fudenberg1993learning}.

The set of histories for an age $t$ sender of type $\theta$ is $Y_{\theta}[t]:=(S\times A)^{t}$,
where each period, the history records the signal sent and the action
that her receiver opponent took in response. The set of all histories
for a type $\theta$ is the union $Y_{\theta}\coloneqq\bigcup_{t=0}^{\infty}Y_{\theta}[t]$.
The dynamic optimization problem of type $\theta$ has an optimal
policy function $\sigma_{\theta}:Y_{\theta}\to S$, where $\sigma_{\theta}(y_{\theta})$
is the signal that a type $\theta$ with history $y_{\theta}$ would
send the next time she plays the signaling game. Analogously, the
set of histories for an age $t$ receiver is $Y_{2}[t]:=(\Theta\times S)^{t}$,
where each period, the history records the type of his sender opponent
and the signal that she sent. The set of all receiver histories is
the union $Y_{2}\coloneqq\bigcup_{t=0}^{\infty}Y_{2}[t]$. The receiver's
learning problem admits an optimal policy function $\sigma_{2}:Y_{2}\to A^{S}$,
where $\sigma_{2}(y_{2})$ is the pure strategy that a receiver with
history $y_{2}$ would commit to next time he plays the game.\footnote{Because our agents are expected-utility maximizers, it is without
loss of generality to assume each agent uses a deterministic policy
rule. If more than one such rule exists, we fix one arbitrarily. Of
course, the optimal policies $\sigma_{\theta}$ and $\sigma_{2}$
depend on the prior $g$ as well as the effective discount factor
$\delta\gamma$. Where no confusion arises, we suppress these dependencies. }

\subsection{Random Matching and Aggregate Play \label{subsec:learning2} }

We analyze learning in a deterministic stationary model with a continuum
of agents, as in \citet{fudenberg_steady_1993,fudenberg_superstition_2006}.
One innovation is that we let lifetimes follow a geometric distribution
instead of the finite and deterministic lifetimes assumed in those
earlier papers, so that we can use the Gittins index. 

The society contains a unit mass of agents in the role of receivers
and mass $\lambda(\theta)$ in the role of type $\theta$ for each
$\theta\in\Theta$. As described in Subsection \ref{subsec:learning1},
each agent has $0\le\gamma<1$ chance of surviving at the end of each
period and complementary chance $1-\gamma$ of dying. To preserve
population sizes, $(1-\gamma)$ new receivers and $\lambda(\theta)(1-\gamma)$
new type $\theta$ are born into the society every period. 

Each period, agents in the society are matched uniformly at random
to play the signaling game. In the spirit of the law of large numbers,
each sender has probability $(1-\gamma)\gamma^{t}$ of matching with
a receiver of age $t$, while each receiver has probability $\lambda(\theta)(1-\gamma)\gamma^{t}$
of matching with a type $\theta$ of age $t.$

A \emph{state }$\psi$ of the learning model is described by the mass
of agents with each possible history. We write it as 
\[
\psi\in\left(\times_{\theta\in\Theta}\Delta(Y_{\theta})\right)\times\Delta(Y_{2}).
\]

We refer to the components of a state $\psi$ by $\psi_{\theta}\in\Delta(Y_{\theta})$
and $\psi_{2}\in\Delta(Y_{2})$. 

Given the agents' optimal policies, each possible history for an agent
completely determines how that agent will play in their next match.
The sender policy functions $\sigma_{\theta}$ are maps from sender
histories to signals,\footnote{Remember that we have fixed deterministic policy functions. }
so they naturally extend to maps from distributions over sender histories
to distributions over signals. That is, given the policy function
$\sigma_{\theta}$, each state $\psi$ induces an aggregate behavior
strategy $\sigma_{\theta}(\psi_{\theta})\in\Delta(S)$ for each type
$\theta$ population, where we extend the domain of $\sigma_{\theta}$
from $Y_{\theta}$ to $\Delta(Y_{\theta})$ in the natural way: 
\begin{equation}
\sigma_{\theta}(\psi_{\theta})(s)\coloneqq\psi_{\theta}\left\{ y_{\theta}\in Y_{\theta}:\sigma_{\theta}(y_{\theta})=s\right\} .\label{eq:induced_sender_strat}
\end{equation}

Similarly, state $\psi$ and the optimal receiver policy $\sigma_{2}$
together induce an aggregate behavior strategy $\sigma_{2}(\psi_{2})$
for the receiver population, where 

\[
\sigma_{2}(\psi_{2})(a|s)\coloneqq\psi_{2}\left\{ y_{2}\in Y_{2}:\sigma_{2}(y_{2})(s)=a\right\} .
\]

We will study the steady states of this learning model, to be defined
more precisely in Section \ref{sec:two_sided}. Loosely speaking,
a steady state is a state $\psi$ that reproduces itself indefinitely
when agents use their optimal policies. Put another way, a steady
state induces a time-invariant distribution over how the signaling
game is played in the society. Suppose society is at steady state
today and we measure what fraction of type $\theta$ sent a certain
signal $s$ in today's matches. After all agents modify their strategies
based on their updated beliefs and all births and deaths take place,
the fraction of type $\theta$ playing $s$ in the matches tomorrow
will be the same as today. 

\section{Senders' Optimal Policies and Type Compatibility \label{sec:sender_side}}

This section studies the senders' learning problem. We will prove
that differences in the payoff structures of the various sender types
generate certain restrictions on their behavior in the learning model.
Subsection \ref{subsec:sender_problem} notes that the senders face
a multi-armed bandit, so the Gittins index characterizes their optimal
policies, and shows how to relate the Gittins index of a signal to
the expected sender payoff versus a particular mixed strategy of the
receiver. In Subsection \ref{subsec:compatible}, we define \emph{type
compatibility}, which formalizes what it means for type $\theta^{'}$
to be more ``compatible'' with a given signal $s$ than type $\theta^{''}$
is. The definition of type compatibility is static, in the sense that
it depends only on the two types' payoff functions in the one-shot
signaling game. Subsection \ref{subsec:gittins} relates type compatibility
to the Gittins index, which applies to the dynamic learning model.
Lemma \ref{lem:compatible_comonotonic}in Section \ref{sec:aggregate}
uses this relationship to show that if type $\theta^{'}$ is more
compatible with signal $s$ than type $\theta^{''},$ then faced with
any fixed distribution of receiver play the type $\theta^{'}$ population
sends $s$ more often in the aggregate than the type $\theta^{''}$
population does. 

\subsection{Optimal Policies and Multi-Armed Bandits\label{subsec:sender_problem} }

Each type $\theta$ sender thinks she is facing a fixed but unknown
aggregate receiver behavior strategy $\pi_{2}$, so each period when
she sends signal $s$, she believes that the response is drawn from
some $\pi_{2}(\cdot|s)\in\Delta(A)$, i.i.d. across periods. Because
her beliefs about the responses to the various signals are independent,
her problem is equivalent to a discounted multi-armed bandit, with
signals $s\in S$ as the arms, where the rewards of arm $s$ are distributed
according to $u_{1}(\theta,s,\pi_{2}(\cdot|s))$. 

Let $\nu_{s}\in\Delta(\Delta(A))$ be a belief over the space of mixed
replies to signal $s$, and let $\nu=(\nu_{s})_{s\in S}$ be a profile
of such beliefs. Write $I(\theta,s,\nu,\beta)$ for the Gittins index
of signal $s$ for type $\theta$, with beliefs $\nu$ over receiver's
play after various signals and with effective discount factor $\beta=\delta\gamma$,
so that 

\begin{equation}
I(\theta,s,\nu,\beta)\coloneqq\ \sup_{\tau>0}\dfrac{\mathbb{E}_{\nu_{s}}\left\{ \sum_{t=0}^{\tau-1}\beta^{t}\cdot u_{1}(\theta,s,a_{s}(t))\right\} }{\mathbb{E}_{\nu_{s}}\left\{ \sum_{t=0}^{\tau-1}\beta^{t}\right\} }.\label{eq:gittins_def}
\end{equation}

Here $a_{s}(t)$ is the receiver's response that the sender observes
the $t$-th time she sends signal $s$, $\tau$ is a stopping time\footnote{That is, whether or not $\tau=t$ depends only on the realizations
of $a_{s}(0),a_{s}(1),...,a_{s}(t-1)$. } and the expectation $\mathbb{E}_{\nu_{s}}$ over the sequence of
responses $\{a_{s}(t)\}_{t\ge0}$ depends on the sender's belief $\nu_{s}$
about responses to signal $s$.\footnote{The Gittins index can be interpreted as the value of an auxiliary
optimization problem, where type $\theta$ chooses each period to
either send signal $s$ and obtain a payoff according to a random
receiver action drawn according to $\pi_{2}(\cdot|s)$, or to stop
forever. The objective of the auxiliary problem is to maximize the
per-period expected discounted payoff until stopping, as the numerator
of Equation (\ref{eq:gittins_def}) describes the expected discounted
sum of payoffs until stopping while the denominator shows the expected
discounted number of periods until stopping. } 

The Gittins index theorem \citep{gittins1979bandit} implies that
after every positive-probability history $y_{\theta},$ the optimal
policy $\sigma_{\theta}$ for a sender of type $\theta$ sends the
signal that has the highest Gittins index for that type under the
profile of posterior beliefs $(\nu_{s})_{s\in S}$ that is induced
by $y_{\theta}$. 

Importantly, we can reformulate the objective function defining the
Gittins index in Equation (\ref{eq:gittins_def}), linking it to the
one-shot signaling game payoff structure. 
\begin{lem}
\label{lem:static} For every signal $s$, stopping time $\tau$,
belief $\nu_{s}$, and discount factor $\beta,$ there exists $\pi_{2,s}(\tau,\nu_{s},\beta)\in\Delta(A)$
so that for every $\theta$,

\[
\dfrac{\mathbb{E}_{\nu_{s}}\left\{ \sum_{t=0}^{\tau-1}\beta^{t}\cdot u_{1}(\theta,s,a_{s}(t))\right\} }{\mathbb{E}_{\nu_{s}}\left\{ \sum_{t=0}^{\tau-1}\beta^{t}\right\} }=u_{1}(\theta,s,\pi_{2,s}(\tau,\nu_{s},\beta))
\]
\end{lem}
That is to say, when the stopping problem in Equation (\ref{eq:gittins_def})
is evaluated at an arbitrary stopping time $\tau,$ the payoff is
equal to sender's expected utility from playing $s$ against the receiver
strategy $\pi_{2,s}(\tau,\nu_{s},\beta)$ in the one-shot signaling
game. 

The proof of Lemma \ref{lem:static} is in Appendix \ref{subsec:Proof-of-Lemma-static}
and shows how to construct $\pi_{2,s}(\tau,\nu_{s},\beta)$, which
can be interpreted as a discounted time average over the receiver
actions that are observed before stopping. To illustrate the construction,
suppose $\nu_{s}$ is supported on two pure receiver strategies after
$s$: either $\pi_{2}(a^{'}|s)=1$ or $\pi_{2}(a^{''}|s)=1,$ with
both strategies equally likely. Suppose also $u_{1}(\theta,s,a^{'})>u_{1}(\theta,s,a^{''}).$
Consider the stopping time $\tau$ that specifies stopping after the
first time the receiver plays $a^{''}$. Then the discounted time
average frequency of $a^{''}$ is:
\[
\frac{\sum_{t=0}^{\infty}\beta^{t}\cdot\mathbb{P}_{\nu_{s}}[\tau\ge t\text{ and receiver plays }a^{''}\text{ in period }t]}{\sum_{t=0}^{\infty}\beta^{t}\cdot\mathbb{P}_{\nu_{s}}[\tau\ge t]}=\frac{0.5}{1+\sum_{t=1}^{\infty}\beta^{t}\cdot0.5}=\frac{1-\beta}{2-\beta}.
\]
 So $\pi_{2,s}(\tau,\nu_{s},\beta)(a^{''})=\frac{1-\beta}{2-\beta}$
and similarly we can calculate that $\pi_{2,s}(\tau,\nu_{s},\beta)(a^{'})=\frac{1}{2-\beta}$,
which shows that $\pi_{2,s}$ indeed corresponds to a mixture over
receiver actions for each $\beta$. As $\beta\to1$, this mixture
converges to the pure strategy of always playing $a^{'}$, so $u_{1}(\theta,s,\pi_{2,s}(\tau,\nu_{s},\beta))$
converges to $u_{1}(\theta,s,a^{'})$, the highest possible payoff
for type $\theta$ after $s$; this parallels the fact that as $\beta$
tends to 1, the Gittins index for $\theta$ after $s$ converges to
the highest payoff in the support of the belief $\nu_{s}$. 

\subsection{\label{subsec:compatible}Type Compatibility in Signaling Games}

We now introduce a notion of the comparative compatibility of two
types with a given signal in the one-shot signaling game. 
\begin{defn}
\label{def:more_compatible} Signal $s^{'}$ is \emph{more type-compatible}
with $\theta^{'}$ than $\theta^{''}$, written as $\theta^{'}\succ_{s^{'}}\theta^{''}$,
if for every $\pi_{2}\in\Pi_{2}$ such that 
\[
u_{1}(\theta^{''},s^{'},\pi_{2}(\cdot|s^{'}))\ge\max_{s^{''}\ne s^{'}}u_{1}(\theta^{''},s^{''},\pi_{2}(\cdot|s^{''})),
\]
we have 
\[
u_{1}(\theta^{'},s^{'},\pi_{2}(\cdot|s^{'}))>\max_{s^{''}\ne s^{'}}u_{1}(\theta^{'},s^{''},\pi_{2}(\cdot|s^{''})).
\]
\end{defn}
In words, $\theta^{'}\succ_{s^{'}}\theta^{''}$ means that whenever
$s^{'}$ is a weak best response for $\theta^{''}$ against some receiver
behavior strategy $\pi_{2}$, it is also a strict best response for
$\theta^{'}$ against $\pi_{2}$.

The following proposition says the compatibility order is transitive
and essentially asymmetric. Its proof is in Appendix \ref{subsec:Proof-of-properties_of_comp_relation}.
\begin{prop}
\label{prop:properties_of_comp_relation} 
\end{prop}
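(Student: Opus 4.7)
The plan is to prove the two parts separately: transitivity by directly chaining the definitional implications, and essential asymmetry by a topological argument on the subsets of $\Pi_2$ where $s'$ is a (weak or strict) best response for each type. I expect transitivity to be routine and the asymmetry step to be the main source of effort, with the main obstacle being the precise form of the ``essentially'' qualifier.

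For transitivity, assume $\theta^a \succ_{s'} \theta^b$ and $\theta^b \succ_{s'} \theta^c$, and fix any $\pi_2 \in \Pi_2$ for which $s'$ is a weak best response for $\theta^c$. The hypothesis $\theta^b \succ_{s'} \theta^c$ yields that $s'$ is a strict, hence weak, best response for $\theta^b$ at $\pi_2$. Then $\theta^a \succ_{s'} \theta^b$ applied at this same $\pi_2$ yields that $s'$ is a strict best response for $\theta^a$, establishing $\theta^a \succ_{s'} \theta^c$.

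For essential asymmetry, I would argue by contradiction. Suppose both $\theta' \succ_{s'} \theta''$ and $\theta'' \succ_{s'} \theta'$. For any $\pi_2$ at which $s'$ is a weak best response for $\theta''$, the first relation gives that $s'$ is strict for $\theta'$, hence weak for $\theta'$, and the second then gives that $s'$ is strict for $\theta''$ at $\pi_2$. Hence the set $W \subseteq \Pi_2$ on which $s'$ is a weak best response for $\theta''$ coincides with its subset where $s'$ is the unique best response. Since $W$ is closed (cut out by weak inequalities) yet equal to an open set, it is clopen in the connected space $\Pi_2$, so $W = \emptyset$ or $W = \Pi_2$. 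The empty case renders both $\succ_{s'}$ relations vacuous, and the full case requires $s'$ to strictly dominate every other signal for $\theta''$ at every $\pi_2$---a degenerate payoff configuration. The ``essentially'' qualifier should exclude exactly these two configurations, e.g.\ by hypothesizing existence of some $\pi_2^0$ at which $s'$ is weak-but-not-strict best response for $\theta''$ (or symmetrically for $\theta'$); applying the chaining argument at $\pi_2^0$ then directly contradicts that choice. The main remaining subtlety is matching the exact qualifier the authors adopt, but the topological core of the argument should be robust to the precise formulation.
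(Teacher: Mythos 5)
Your proposal is correct and follows essentially the same route as the paper's: transitivity by chaining the definition, and asymmetry by exploiting continuity of the payoff gap together with connectedness of $\Pi_{2}$ to show that the weak-best-response region for $\theta^{''}$ must contain a weak-but-not-strict point unless $s^{'}$ is strictly dominant or strictly dominated for both types, which are exactly the two configurations the proposition excludes. The paper packages the connectedness step as an intermediate-value argument along a segment between a strategy where $s^{'}$ is strictly better and one where it is strictly worse, rather than via clopen sets, but the substance is identical.
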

\begin{enumerate}
\item $\succ_{s^{'}}$ is transitive. 
\item Except when $s^{'}$ is either strictly dominant for both $\theta^{'}$
and $\theta^{''}$ or strictly dominated for both $\theta^{'}$ and
$\theta^{''}$, $\theta^{'}\succ_{s^{'}}\theta^{''}$ implies $\theta^{''}\not\succ_{s^{'}}\theta^{'}$. 
\end{enumerate}
To check the compatibility condition, one must consider all strategies
in $\Pi_{2},$ just as the belief restrictions in divine equilibrium
involve all the possible mixed best responses to various beliefs.
However, when the sender's utility function is separable in the sense
that $u_{1}(\theta,s,a)=v(\theta,s)+z(a),$ as in \citet{spence_job_1973}'s
job-market signaling game and in \citet{cho_signaling_1987}'s beer-quiche
game (given below), a sufficient condition for $\theta^{'}\succ_{s^{'}}\theta^{''}$
is 
\[
v(\theta^{'},s^{'})-v(\theta^{''},s^{'})>\max_{s^{''}\ne s^{'}}v(\theta^{'},s^{''})-v(\theta^{''},s^{''}).
\]
This can be interpreted as saying $s^{'}$ is the least costly signal
for $\theta^{'}$ relative to $\theta^{''}.$ In the Online Appendix,
we present a general sufficient condition for $\theta^{'}\succ_{s^{'}}\theta^{''}$
under general payoff functions. 
\begin{example}
\noindent \label{exa:beer-quiche} (\citet{cho_signaling_1987}'s
beer-quiche game) The sender (P1) is either strong ($\theta_{\text{strong}}$)
or weak $(\theta_{\text{weak}}$), with prior probability $\lambda(\theta_{\text{strong}})=0.9.$
The sender chooses to either drink \textbf{Beer} or eat \textbf{Quiche}
for breakfast. The receiver (P2), observing this breakfast choice
but not the sender's type, chooses whether to \textbf{Fight} the sender.
If the sender is $\theta_{\text{weak}}$, the receiver prefers to
\textbf{Fight}. If the sender is $\theta_{\text{strong}}$, the receiver
prefers to \textbf{NotFight}. Also, $\theta_{\text{strong}}$ prefers
\textbf{Beer} for breakfast while $\theta_{\text{weak}}$ prefers
\textbf{Quiche} for breakfast. Both types prefer not being fought
over having their favorite breakfast. 

\noindent \begin{center}

\includegraphics[scale=0.3]{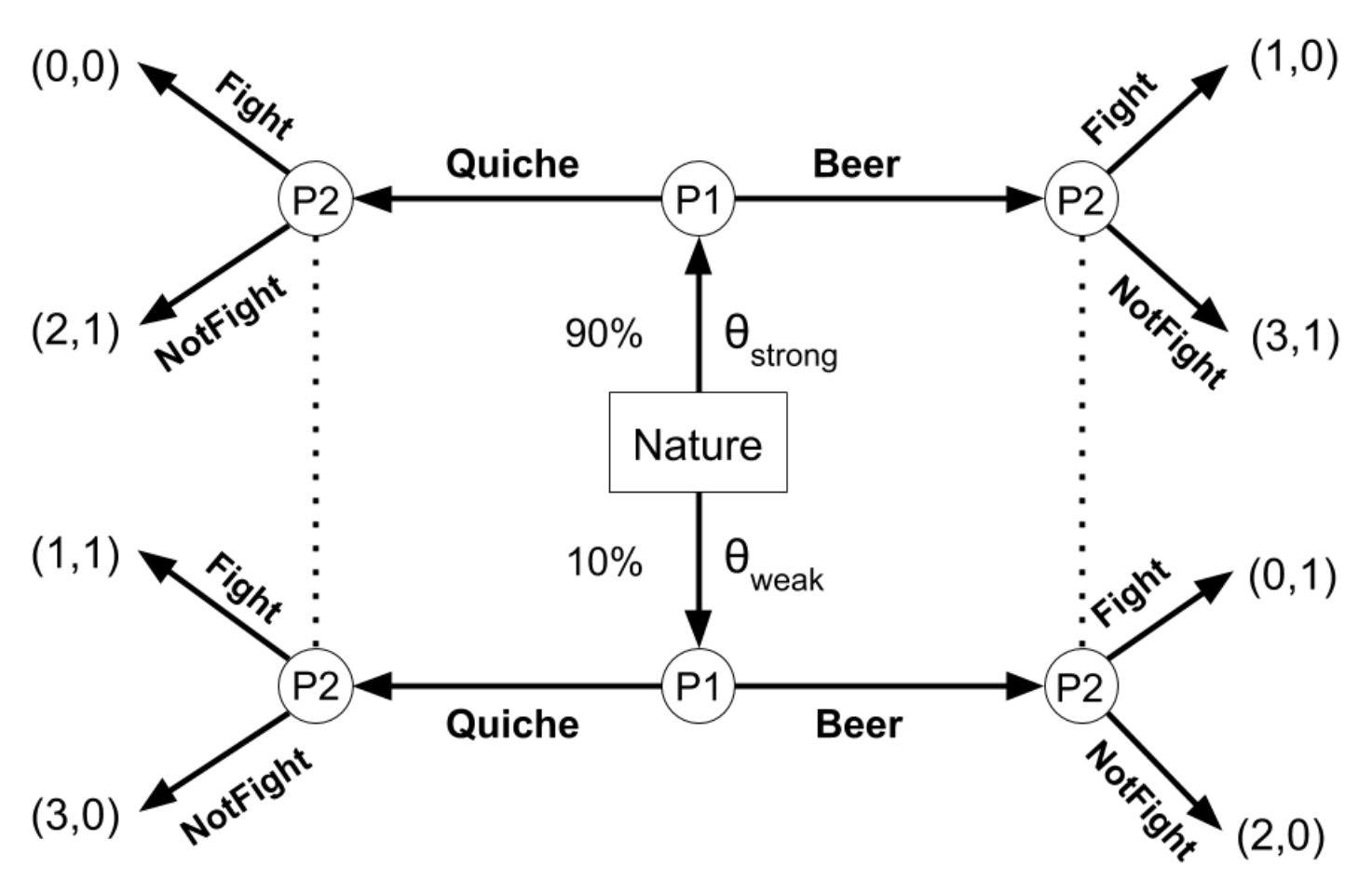}

\noindent \end{center}

This game has separable sender utility with $v(\theta_{\text{strong}},\text{\textbf{Beer}})=v(\theta_{\text{weak}},\text{\textbf{Quiche}})=1$,
$z(\textbf{Fight})=0$ and $z(\textbf{NotFight})=2$. So, we have
$\theta_{\text{strong}}\succ_{\text{\textbf{Beer}}}\theta_{\text{weak}}$.
\hfill{} $\blacklozenge$
\end{example}
It is easy to see that in every Nash equilibrium $\pi^{*},$ if $\theta^{'}\succ_{s^{'}}\theta^{''}$,
then $\pi_{1}^{*}(s^{'}|\theta^{''})>0$ implies $\pi_{1}^{*}(s^{'}|\theta^{'})=1$.
By Bayes rule, this implies that the receiver's equilibrium belief
$p$ after every \emph{on-path} signal $s^{'}$ satisfies the restriction
$\dfrac{p(\theta^{''}|s^{'})}{p(\theta^{'}|s^{'})}\le\dfrac{\lambda(\theta^{''})}{\lambda(\theta^{'})}$
if $\theta^{'}\succ_{s^{'}}\theta^{''}$. Thus in every Nash equilibrium
of the beer-quiche game, if the sender chooses \textbf{Beer} with
positive ex ante probability, then the receiver's odds ratio that
the sender is tough after seeing this signal cannot be less than the
prior odds ratio. Our main result, Theorem \ref{thm:PS_is_compatible},
essentially shows for any strategy profile that can be approximated
by steady-state outcomes with patient and long-lived agents, that
the same compatibility-based restriction is satisfied even for \emph{off-path}
signals. In particular, this allows us to place restrictions on the
receiver's belief after seeing \textbf{Beer} in equilibria where no
type of sender ever plays this signal. 

\subsection{Type compatibility and the Gittins index \label{subsec:gittins}}

We now connect the type compatibility order for a given signal with
the associated Gittins indices. 
\begin{thm}
\noindent \label{thm:index} $\theta^{'}\succ_{s^{'}}\theta^{''}$
if and only if for every $\beta\in[0,1)$ and every profile of beliefs
$\nu$, $I(\theta^{''},s^{'},\nu,\beta)\ge\max_{s^{''}\ne s^{'}}I(\theta^{''},s^{''},\nu,\beta)$
implies $I(\theta^{'},s^{'},\nu,\beta)>\max_{s^{''}\ne s^{'}}I(\theta^{'},s^{''},\nu,\beta)$. 
\end{thm}
That is, $\theta^{'}\succ_{s^{'}}\theta^{''}$ if and only if whenever
$s^{'}$ has the (weakly) highest Gittins index for $\theta^{''}$,
it has the highest index for $\theta,^{'}$ provided the two types
hold the same beliefs and have the same discount factor. The proof
involves reformulating the Gittins index as in Lemma \ref{lem:static},
then applying the compatibility definition.
\begin{proof}
\textbf{Step 1: Only If. }

Suppose $\theta^{'}\succ_{s^{'}}\theta^{''}$ and fix some $\beta\in[0,1)$
and prior belief $\nu$. Suppose $I(\theta^{''},s^{'},\nu,\beta)\ge\max_{s^{''}\ne s^{'}}I(\theta^{''},s^{''},\nu,\beta)$.
We show that $I(\theta^{'},s^{'},\nu,\beta)>\max_{s^{''}\ne s^{'}}I(\theta^{'},s^{''},\nu,\beta)$. 

On any arm $s^{''}\ne s^{'}$, type $\theta^{''}$ could use the (suboptimal)
stopping time $\tau_{s^{''}}^{\theta^{'}}$, which by Lemma \ref{lem:static}
yields an expected per-period payoff of $u_{1}(\theta^{''},s^{''},\pi_{s^{''}}(\nu_{s^{''}},\tau_{s^{''}}^{\theta^{'}},\beta))$.
This is a lower bound for the Gittins index of arm $s^{''}$for type
$\theta^{''}$, so combined with the hypothesis that $I(\theta^{''},s^{'},\nu,\beta)\ge\max_{s^{''}\ne s^{'}}I(\theta^{''},s^{''},\nu,\beta)$,
we get 
\begin{equation}
I(\theta^{''},s^{'},\nu,\beta)\ge\max_{s^{''}\ne s^{'}}u_{1}(\theta^{''},s^{''},\pi_{s^{''}}(\nu_{s^{''}},\tau_{s^{''}}^{\theta^{'}},\beta)).\label{eq:pf_index}
\end{equation}
Now define the receiver strategy $\pi_{2}\in\Pi_{2}$ by $\pi_{2}(\cdot|s^{'})\coloneqq\pi_{s^{'}}(\nu_{s^{'}},\tau_{s^{'}}^{\theta^{''}},\beta)$,
$\pi_{2}(\cdot|s^{''})\coloneqq\pi_{s^{''}}(\nu_{s^{''}},\tau_{s^{''}}^{\theta^{'}},\beta)$
for all $s^{''}\ne s^{'}$. Then Equation (\ref{eq:pf_index}) can
be rewritten as 
\[
u_{1}(\theta^{''},s^{'},\pi_{2}(\cdot|s^{'}))\ge\max_{s^{''}\ne s^{'}}u_{1}(\theta^{''},s^{''},\pi_{2}(\cdot|s^{''})),
\]
that is, $s^{'}$ is weakly optimal for $\theta^{''}$ against $\pi_{2}$.
By the definition of $\theta^{'}\succ_{s^{'}}\theta^{''},$ this implies
$s^{'}$is strictly optimal for $\theta^{'}$ against $\pi_{2}$. 

From the definition of $\pi_{2}$ and Lemma \ref{lem:static}, the
expected utility of $\theta^{'}$ playing any $s^{''}\ne s^{'}$ against
$\pi_{2}$ is equal to the Gittins index of that arm for $\theta^{'},$
namely $I(\theta^{'},s^{''},\nu,\beta)$. On the other hand, $u_{1}(\theta^{'},s^{'},\pi_{2}(\cdot|s^{'}))$
is only a lower bound for $I(\theta^{'},s^{'},\nu,\beta)$. This shows
$I(\theta^{'},s^{'},\nu,\beta)>\max_{s^{''}\ne s^{'}}I(\theta^{'},s^{''},\nu,\beta)$
as desired. 

\textbf{Step 2: If. }

Suppose $\theta^{'}\text{\ensuremath{\not\succ}}\text{ }_{s^{'}}\theta^{''}$.
Then there is some receiver strategy $\pi_{2}^{*}\in\Pi_{2}$ such
that 
\[
u_{1}(\theta^{''},s^{'},\pi_{2}^{*}(\cdot|s^{'}))\ge\max_{s^{''}\ne s^{'}}u_{1}(\theta^{''},s^{''},\pi_{2}^{*}(\cdot|s^{''})),\ \text{and}
\]
\[
u_{1}(\theta^{'},s^{'},\pi_{2}^{*}(\cdot|s^{'}))\leq\max_{s^{''}\ne s^{'}}u_{1}(\theta^{'},s^{''},\pi_{2}^{*}(\cdot|s^{''})).
\]

Let $\nu^{*}$ be any belief that induces $\pi_{2}^{*}$ on average,
that is to say for each $s$, 
\[
\pi_{2}^{*}(\cdot|s)=\int_{\pi_{2,s}\in\Delta(A)}\pi_{2,s}d\nu_{s}^{*}(\pi_{2,s})
\]

Let $\beta=0$. Then $I(\theta,s,\nu^{*},0)=u_{1}(\theta,s,\pi_{2}^{*}(\cdot|s))$
for every $\theta,s$, since the Gittins index is equal to the myopic
payoff when the decision-maker is perfectly impatient. This shows
$I(\theta^{''},s^{'},\nu^{*},0)\ge\max_{s^{''}\ne s^{'}}I(\theta^{''},s^{''},\nu^{*},0)$
and $I(\theta^{'},s^{'},\nu^{*},0)\le\max_{s^{''}\ne s^{'}}I(\theta^{'},s^{''},\nu^{*},0)$. 
\end{proof}

\section{The Aggregate Sender and Receiver Responses \label{sec:aggregate}}

In this section, we will define and analyze the aggregate sender response
$\mathscr{R}_{1}:\Pi_{2}\to\Pi_{1}$ and the aggregate receiver response
$\mathscr{R}_{2}:\Pi_{1}\to\Pi_{2}$. Loosely speaking, these are
the large-populations learning analogs of the best-response functions
in the static signaling game. If we fix the aggregate play of $-i$
population at $\pi_{-i}$ and run the learning model period after
period from an arbitrary initial state, the distribution of play in
$i$ population will approach $\mathscr{R}_{i}[\pi_{-i}]$. Later
in Section \ref{sec:two_sided}, the fixed points of the pair $(\mathscr{R}_{1},\mathscr{R}_{2})$
will characterize the steady states of the learning system.

\subsection{The Aggregate Sender Response \label{subsec:sender_APR}}

To formally define the aggregate sender response, we first introduce
the one-period-forward map. 
\begin{defn}
The \emph{one-period-forward map for type $\theta$}, $f_{\theta}:\Delta(Y_{\theta})\times\Pi_{2}\to\Delta(Y_{\theta})$
is 
\[
f_{\theta}[\psi_{\theta},\pi_{2}](y_{\theta},(s,a)):=\psi_{\theta}(y_{\theta})\cdot\gamma\cdot\boldsymbol{1}\{\sigma_{\theta}(y_{\theta})=s\}\cdot\pi_{2}(a|s)
\]

and $f_{\theta}[\psi_{\theta},\pi_{2}](\emptyset):=1-\gamma$. 
\end{defn}
If the distribution over histories in the type $\theta$ population
is $\psi_{\theta}$ and the receiver population's aggregate play is
$\pi_{2}$, the resulting distribution over histories in the type
$\theta$ population is $f_{\theta}[\psi_{\theta},\pi_{2}]$. Specifically,
there will be a $1-\gamma$ mass of new type $\theta$ who will have
no history. Also, if the optimal first signal of a new type $\theta$
is $s^{'}$, that is if $\sigma_{\theta}(\emptyset)=s^{'},$ then
$f_{\theta}[\psi_{\theta},\pi_{2}](s^{'},a^{'})=\gamma\cdot(1-\gamma)\cdot\pi_{2}(a^{'}|s^{'})$
new senders send $s^{'}$ in their first match, observe action $a^{'}$
in response, and survive. In general, a type $\theta$ who has history
$y_{\theta}$ and whose policy $\sigma_{\theta}(y_{\theta})$ prescribes
playing $s$ has $\pi_{2}(a|s)$ chance of having subsequent history
$(y_{\theta},(s,a))$ provided she survives until next period; the
survival probability corresponds to the factor $\gamma$. 

Write $f_{\theta}^{T}$ for the $T$-fold application of $f_{\theta}$
on $\Delta(Y_{\theta}),$ holding fixed some $\pi_{2}$. Note that
for arbitrary states $\psi$ and $\psi^{'}$, if $(y_{\theta},(s,a))$
is a length-1 history (i.e. $y_{\theta}=\emptyset$), then $\psi_{\theta}(y_{\theta})=\psi'_{\theta}(y_{\theta})$
because both states must assign mass $1-\gamma$ to $\emptyset$,
so $f_{\theta}^{1}[\psi_{\theta},\pi_{2}]$ and $f_{\theta}^{1}[\psi_{\theta}^{'},\pi_{2}]$
agree on $Y_{\theta}[1]$. Iterating, for $T=2,$ $f_{\theta}^{2}[\psi_{\theta},\pi_{2}]$
and $f_{\theta}^{2}[\psi_{\theta}^{'},\pi_{2}]$ agree on $Y_{\theta}[2]$,
because each history in $Y_{\theta}[2]$ can be written as $(y_{\theta},(s,a))$
for $y_{\theta}\in Y_{\theta}[1]$, and $f_{\theta}^{1}[\psi_{\theta},\pi_{2}]$
and $f_{\theta}^{1}[\psi_{\theta}^{'},\pi_{2}]$ match on all $y_{\theta}\in Y_{\theta}[1]$.
Proceeding inductively, we can conclude that $f_{\theta}^{T}(\psi_{\theta},\pi_{2})$
and $f_{\theta}^{T}(\psi_{\theta}^{'},\pi_{2})$ agree on all $Y_{\theta}[t]$
for $t\le T$ for any pair of type $\theta$ states $\psi_{\theta}$
and $\psi_{\theta}^{'}$. This means $\lim_{T\to\infty}f_{\theta}^{T}(\psi_{\theta},\pi_{2})$
exists and is independent of the initial state $\psi_{\theta}$. Denote
this limit as $\psi_{\theta}^{\pi_{2}}.$ It is the long-run distribution
over type $\theta$ histories induced by starting at an arbitrary
state and fixing the receiver population's play at $\pi_{2}$, as
stated formally in the next definition. 
\begin{defn}
The\emph{ aggregate sender response} $\mathscr{R}_{1}:\Pi_{2}\to\Pi_{1}$
is defined by 
\[
\mathscr{R}_{1}[\pi_{2}](s|\theta):=\psi_{\theta}^{\pi_{2}}(y_{\theta}:\sigma_{\theta}(y_{\theta})=s)
\]

where $\psi_{\theta}^{\pi_{2}}:=\lim_{T\to\infty}f_{\theta}^{T}(\psi_{\theta},\pi_{2})$
with $\psi_{\theta}$ any arbitrary $\theta$ state. 
\end{defn}
That is, $\mathscr{R}_{1}[\pi_{2}](\cdot|\theta$) is the long-run
aggregate behavior in the type $\theta$ population when the receivers'
aggregate play is fixed at $\pi_{2}$.
\begin{rem}
\label{rem:ASR_details} Technically, $\mathscr{R}_{1}$ depends on
$g_{1},\delta,$ and $\gamma$, just like $\sigma_{\theta}$ does.
When relevant, we will make these dependencies clear by adding the
appropriate parameters as superscripts to $\mathscr{R}_{1}$, but
we will mostly suppress them to lighten notation. 
\begin{rem}
\label{rem:sender_APR} Although the aggregate sender response is
defined at the aggregate level, $\mathscr{R}_{1}[\pi_{2}](\cdot|\theta)$
also describes the probability distribution of the play of a single
type $\theta$ sender over her lifetime when she faces receiver play
drawn from $\pi_{2}$ every period.\footnote{Observe that $f_{\theta}[\psi_{\theta},\pi_{2}]$ restricted to $Y_{\theta}[1]$
gives the probability distribution over histories for a type $\theta$
who uses $\sigma_{\theta}$ and faces play drawn from $\pi_{2}$ for
one period: it puts weight $\pi_{2}(a^{'}|s^{'})$ on history $(s^{'},a^{'})$
where $s^{'}=\sigma_{\theta}(\emptyset)$. Similarly, $f_{\theta}^{T}[\psi_{\theta},\pi_{2}]$
restricted to $Y_{\theta}[t]$ for any $t\le T$ gives the probability
distribution over histories for someone who uses $\sigma_{\theta}$
and faces play drawn from $\pi_{2}$ for $t$ periods. Since $\psi_{\theta}^{\pi_{2}}$
assigns probability $(1-\gamma)\gamma^{t}$ to the set of histories
$Y_{\theta}[t]$, $\mathscr{R}_{1}[\pi_{2}](\cdot|\theta)=\sigma_{\theta}(\psi_{\theta}^{\pi_{2}})$
is a weighted average over the distributions of period $t$ play ($t=1,2,3,...$)
of someone using $\sigma_{\theta}$ and facing $\pi_{2}$, with weight
$(1-\gamma)\gamma^{t}$ given to the period $t$ distribution. }
\end{rem}
\end{rem}

\subsection{Type Compatibility and the Aggregate Sender Response \label{subsec:sender_reply} }

The next lemma shows how type compatibility translates into restrictions
on the aggregate sender response for different types.
\begin{lem}
\label{lem:compatible_comonotonic}Suppose $\theta^{'}\succ_{s^{'}}\theta^{''}$.
Then for any regular prior $g_{1}$, $0\le\delta,\gamma<1$, and any
$\pi_{2}\in\Pi_{2}$, we have $\mathscr{R}_{1}[\pi_{2}](s^{'}|\theta^{'})\ge\mathscr{R}_{1}[\pi_{2}](s^{'}|\theta^{''})$. 
\end{lem}
Theorem \ref{thm:index} showed that when $\theta^{'}\succ_{s^{'}}\theta^{''}$
and the two types share the same beliefs, if $\theta^{''}$plays $s^{'}$
then $\theta^{'}$ must also play $s^{'}$. But even though new agents
of both types start with the same prior $g_{1}$, their beliefs may
quickly diverge during the learning process due to $\sigma_{\theta^{'}}$
and $\sigma_{\theta^{''}}$ prescribing different experiments after
the same history. This lemma shows that compatibility still imposes
restrictions on the aggregate play of the sender population: Regardless
of the aggregate play $\pi_{2}$ in the receiver population, the frequencies
that $s^{'}$ appears in the aggregate responses of different types
are always co-monotonic with the compatibility order $\succ_{s^{'}}$. 

To gain intuition for Lemma \ref{lem:compatible_comonotonic}, consider
two new senders with types $\theta_{\text{strong}}$ and $\theta_{\text{weak}}$
who are learning to play the beer-quiche game from Example \ref{exa:beer-quiche}.
Suppose they have uniform priors over the responses to each signal,
and that they face a sequence of receivers programmed to play \textbf{Fight}
after \textbf{Beer} and \textbf{NotFight} after \textbf{Quiche}. Since
observing \textbf{Fight} is the worst possible news about a signal's
payoff, the Gittins index of a signal decreases when \textbf{Fight}
is observed. Conversely, the Gittins index of a signal increases after
each observation of \textbf{NotFight}.\footnote{This follows from \citet{bellman1956problem}'s Theorem 2 on Bernoulli
bandits. } Thus given the assumed play of the receivers, there are $n_{1},n_{2}\ge0$
such that type $\theta_{\text{strong}}$ play \textbf{Beer} for $n_{1}$
periods (and observe $n_{1}$ instances of \textbf{Fight}) and then
switch to \textbf{Quiche} forever after, while type $\theta_{\text{weak}}$
will play \textbf{Beer} for $n_{2}$ periods before switching to \textbf{Quiche}
forever after. Now we claim that $n_{1}\ge n_{2}$. To see why, suppose
instead that $n_{1}<n_{2}$, and let $\nu$ be the posterior belief
about receivers' aggregate play induced from $n_{1}$ periods of observing
\textbf{Fight} after \textbf{Beer}. After $n_{1}$ periods, both types
would share the belief $\nu$. Then at belief $\nu$ type $\theta_{\text{weak}}$
must play \textbf{Beer} while type $\theta_{\text{strong}}$ plays
\textbf{Quiche}, so signal \textbf{Beer} must have the highest Gittins
index for $\theta_{\text{weak}}$ but not for $\theta_{\text{strong}}$.
But this would contradict Theorem \ref{thm:index}. 

The proof of Lemma \ref{lem:compatible_comonotonic} relies on the
similar idea of fixing a particular ``programming'' of receiver
play and studying the induced paths of experimentation for different
types. In the aggregate learning model, the sequence of responses
that a given sender encounters in her life depends on the realization
of the random matching process, because different receivers have different
histories and respond differently to a given signal. We can index
all possible sequences of random matching realizations using a device
we call the ``pre-programmed response path''. To show that more
compatible types play a given signal more often, it suffices to show
this comparison holds on each pre-programmed response path, thus coupling
the learning processes of types $\theta^{'}$ and $\theta^{''}.$
We will show that the intuition above extends to signaling games with
any number of signals and to any pre-programmed response path. 
\begin{defn}
A \emph{pre-programmed response path} $\mathfrak{a}=(a_{1,s},a_{2,s},...,)_{s\in S}$
is an element in $\times_{s\in S}\left(A^{\infty}\right)$. 
\end{defn}
A pre-programmed response path is an $|S|$-tuple of infinite sequences
of receiver actions, one sequence for each signal. For a given pre-programmed
response path $\mathfrak{a}$, we can imagine starting with a new
type $\theta$ and generating receiver play each period in the following
programmatic manner: when the sender plays $s$ for the $j$-th time,
respond with receiver action $a_{j,s}$. (If the sender sends $s^{''}$
five times and then sends $s^{'}\ne s^{''}$, the response she gets
to $s^{'}$ is $a_{1,s^{'}}$, not $a_{6,s^{'}}$.) For a type $\theta$
who applies $\sigma_{\theta}$ each period, $\mathfrak{a}$ induces
a deterministic history of experiments and responses, which we denote
$y_{\theta}(\mathfrak{a}$). The induced history $y_{\theta}(\mathfrak{a})$
can be used to calculate $\overline{\mathscr{R}}_{1}[\mathfrak{a}](\cdot|\theta)$,
the distribution of signals over the lifetime of a type $\theta$
induced by the pre-programmed response path $\mathfrak{a}$. Namely,
$\overline{\mathscr{R}}_{1}[\mathfrak{a}](\cdot|\theta)$ is simply
a mixture over all signals sent along the history $y_{\theta}(\mathfrak{a})$,
with weight $(1-\gamma)\gamma^{t-1}$ given to the signal in period
$t$. 

Now consider a type $\theta$ facing actions generated i.i.d. from
the receiver behavior strategy $\pi_{2}$ each period, as in the interpretation
of $\mathscr{R}_{1}$ in Remark \ref{rem:sender_APR}. This data-generating
process is equivalent to drawing a random pre-programmed response
path $\mathfrak{a}$ at time 0 according to a suitable distribution,
then producing all receiver actions using $\mathfrak{a}$. That is,
$\mathscr{R}_{1}[\pi_{2}](\cdot|\theta)=\int\overline{\mathscr{R}}_{1}[\mathfrak{a}](\cdot|\theta)d\pi_{2}(\mathfrak{a})$
where we abuse notation and use $d\pi_{2}(\mathfrak{a})$ to denote
the distribution over pre-programmed response paths associated with
$\pi_{2}$. Importantly, any two types $\theta^{'}$ and $\theta^{''}$
face the same distribution over pre-programmed response paths, so
to prove the proposition it suffices to show $\overline{\mathscr{R}}_{1}[\mathfrak{a}](s^{'}|\theta^{'})\ge\overline{\mathscr{R}}_{1}[\mathfrak{a}](s^{'}|\theta^{''})$
for all $\mathfrak{a}$. 
\begin{proof}
For $t\ge0$, write $y_{\theta}^{t}$ for the truncation of infinite
history $y_{\theta}$ to the first $t$ periods, with $y_{\theta}^{\infty}:=y_{\theta}$.
Given a finite or infinite history $y_{\theta}^{t}$ for type $\theta$,
the signal counting function $\#(s|y_{\theta}^{t})$ returns how many
times signal $s$ has appeared in $y_{\theta}^{t}$. (We need this
counting function since the receiver play generated by a pre-programmed
response path each period depends on how many times each signal has
been sent so far.) 

As discussed above, we need only show $\overline{\mathscr{R}}_{1}[\mathfrak{a}](s^{'}|\theta^{'})\ge\overline{\mathscr{R}}_{1}[\mathfrak{a}](s^{'}|\theta^{''})$.
Let $\mathfrak{a}$ be given and write $T_{j}^{\theta}$ for the period
in which type $\theta$ sends signal $s^{'}$ for the $j$-th time
in the induced history $y_{\theta}(\mathfrak{a})$. If no such period
exists, then set $T_{j}^{\theta}=\infty$. Since $\overline{\mathscr{R}}_{1}[\mathfrak{a}](\cdot|\theta)$
is a weighted average over signals in $y_{\theta}(\mathfrak{a})$
with decreasing weights given to later signals, to prove $\overline{\mathscr{R}}_{1}[\mathfrak{a}](s^{'}|\theta^{'})\ge\overline{\mathscr{R}}_{1}[\mathfrak{a}](s^{'}|\theta^{''})$
it suffices to show that $T_{j}^{\theta^{'}}\le T_{j}^{\theta^{''}}$
for every $j$. Towards this goal, we will prove a sequence of statements
by induction: 

\texttt{Statement $j$}: Provided $T_{j}^{\theta^{''}}$ is finite,
$\#\left(s^{''}\ |\ y_{\theta^{'}}^{T_{j}^{\theta^{'}}}(\mathfrak{a})\right)\le\#\left(s^{''}\ |\ y_{\theta^{''}}^{T_{j}^{\theta^{''}}}(\mathfrak{a})\right)$for
all $s^{''}\ne s^{'}$. 

For every $j$ where $T_{j}^{\theta^{''}}<\infty$, \texttt{statement
$j$} implies that the number of periods type $\theta^{'}$ spent
sending each signal $s^{''}\ne s^{'}$ before sending $s^{'}$ for
the $j$-th time is fewer than the number of periods $\theta^{''}$
spent doing the same. Therefore it follows that $\theta^{'}$ sent
$s^{'}$ for the $j$-th time sooner than $\theta^{''}$ did, that
is $T_{j}^{\theta^{'}}\le T_{j}^{\theta^{''}}$. Finally, if $T_{j}^{\theta^{''}}=\infty$,
then evidently $T_{j}^{\theta'}\le\infty=T_{j}^{\theta^{''}}.$ 

It now remains to prove the sequence of statements by induction. 

\texttt{Statement 1} is the base case. By way of contradiction, suppose
$T_{1}^{\theta^{''}}<\infty$ and 
\[
\#\left(s^{''}\ |\ y_{\theta^{'}}^{T_{1}^{\theta^{'}}}(\mathfrak{a})\right)>\#\left(s^{''}\ |\ y_{\theta^{''}}^{T_{1}^{\theta^{''}}}(\mathfrak{a})\right)
\]
 for some $s^{''}\ne s^{'}$. Then there is some earliest period $t^{*}<T_{1}^{\theta^{'}}$
where 
\[
\#\left(s^{''}\ |\ y_{\theta^{'}}^{t^{*}}(\mathfrak{a})\right)>\#\left(s^{''}\ |\ y_{\theta^{''}}^{T_{1}^{\theta^{''}}}(\mathfrak{a})\right),
\]
where type $\theta^{'}$ played $s^{''}$ in period $t^{*}$, $\sigma_{\theta^{'}}(y_{\theta^{'}}^{t^{*}-1}(\mathfrak{a}))=s^{''}$.

But by construction, by the end of period $t^{*}-1$ type $\theta^{'}$
has sent $s^{''}$ exactly as many times as type $\theta^{''}$ has
sent it by period $T_{1}^{\theta^{''}}-1$, so that 

\[
\#\left(s^{''}\ |\ y_{\theta^{'}}^{t^{*}-1}(\mathfrak{a})\right)=\#\left(s^{''}\ |\ y_{\theta^{''}}^{T_{1}^{\theta^{''}}-1}(\mathfrak{a})\right).
\]
Furthermore, neither type has sent $s^{'}$ yet, so also
\[
\#\left(s^{'}\ |\ y_{\theta^{'}}^{t^{*}-1}(\mathfrak{a})\right)=\#\left(s^{'}\ |\ y_{\theta^{''}}^{T_{1}^{\theta^{''}}-1}(\mathfrak{a})\right).
\]
Therefore, type $\theta^{'}$ holds the same posterior over the receiver's
reaction to signals $s^{'}$ and $s^{''}$ at period $t^{*}-1$ as
type $\theta^{''}$ does at period $T_{1}^{\theta^{''}}-1$. So\footnote{In the following equation and elsewhere in the proof, we abuse notation
and write $I(\theta,s,y)$ to mean $I(\theta,s,g_{1}(\cdot|y),\delta\gamma)$,
which is the Gittins index of type $\theta$ for signal $s$ at the
posterior obtained from updating the prior $g_{1}$ using history
$y$, with effective discount factor $\delta\gamma$. } by Theorem \ref{thm:index},
\begin{equation}
s^{'}\in\underset{\hat{s}\in S}{\arg\max}\ I\left(\theta^{''},\hat{s},y_{\theta^{''}}^{T_{1}^{\theta^{''}}-1}(\mathfrak{a})\right)\implies I(\theta^{'},s^{'},y_{\theta^{'}}^{t^{*}-1}(\mathfrak{a}))>I(\theta^{'},s^{''},y_{\theta^{'}}^{t^{*}-1}(\mathfrak{a})).\label{eq:core_dp1}
\end{equation}
However, by construction of $T_{1}^{\theta^{''}}$, we have $\sigma_{\theta^{''}}\left(y_{\theta^{''}}^{T_{1}^{\theta^{''}}-1}(\mathfrak{a})\right)=s^{'}$.
By the optimality of the Gittins index policy, the left-hand side
of Equation (\ref{eq:core_dp1}) is satisfied. But, again by the optimality
of the Gittins index policy, the right-hand side of Equation (\ref{eq:core_dp1})
contradicts $\sigma_{\theta^{'}}(y_{\theta^{'}}^{t^{*}-1}(\mathfrak{a}))=s^{''}$.
Therefore we have proven \texttt{Statement 1}. 

Now suppose \texttt{Statement $j$} holds for all $j\le K$. We show
\texttt{Statement $K+1$} also holds. If $T_{K+1}^{\theta^{''}}$
is finite, then $T_{K}^{\theta^{''}}$ is also finite. The inductive
hypothesis then shows 

\[
\#\left(s^{''}\ |\ y_{\theta^{'}}^{T_{K}^{\theta^{'}}}(\mathfrak{a})\right)\le\#\left(s^{''}\ |\ y_{\theta^{''}}^{T_{K}^{\theta^{''}}}(\mathfrak{a})\right)
\]
for every $s^{''}\ne s^{'}$. Suppose there is some $s^{''}\ne s^{'}$
such that
\[
\#\left(s^{''}\ |\ y_{\theta^{'}}^{T_{K+1}^{\theta^{'}}}(\mathfrak{a})\right)>\#\left(s^{''}\ |\ y_{\theta^{''}}^{T_{K+1}^{\theta^{''}}}(\mathfrak{a})\right).
\]
Together with the previous inequality, this implies type $\theta^{'}$
played $s^{''}$ for the\\
 $\left[\#\left(s^{''}\ |\ y_{\theta^{''}}^{T_{K+1}^{\theta^{''}}}(\mathfrak{a})\right)+1\right]$-th
time sometime between playing $s^{'}$ for the $K$-th time and playing
$s^{'}$ for the ($K+1$)-th time. That is, if we put
\[
t^{*}\coloneqq\min\left\{ t:\#(s^{''}\ |\ y_{\theta^{'}}^{t}(\mathfrak{a})))>\#\left(s^{''}\ |\ y_{\theta^{''}}^{T_{K+1}^{\theta^{''}}}(\mathfrak{a})\right)\right\} ,
\]
then $T_{K}^{\theta^{'}}<t^{*}<T_{K+1}^{\theta^{'}}$. By the construction
of $t^{*}$, 
\[
\#\left(s^{''}\ |\ y_{\theta^{'}}^{t^{*}-1}(\mathfrak{a})\right)=\#\left(s^{''}\ |\ y_{\theta^{''}}^{T_{K+1}^{\theta^{''}}-1}(\mathfrak{a})\right),\text{ and also}
\]
\[
\#\left(s^{'}\ |\ y_{\theta^{'}}^{t^{*}-1}(\mathfrak{a})\right)=K=\#\left(s^{'}\ |\ y_{\theta^{''}}^{T_{K+1}^{\theta^{''}}-1}(\mathfrak{a})\right).
\]

Therefore, type $\theta^{'}$ holds the same posterior over the receiver's
reaction to signals $s^{'}$ and $s^{''}$ at period $t^{*}-1$ as
type $\theta^{''}$ does at period $T_{K+1}^{\theta^{''}}-1$. As
in the base case, we can invoke Theorem \ref{thm:index} to show that
it is impossible for $\theta^{'}$ to play $s^{''}$ in period $t^{*}$
while $\theta^{''}$ plays $s^{'}$ in period $T_{K+1}^{\theta^{''}}$.
This shows \texttt{statement $j$ }is true for every $j$ by induction. 
\end{proof}

\subsection{The Aggregate Receiver Response}

We now turn to the receivers' problem. Each new receiver thinks he
is facing a fixed but unknown aggregate sender behavior strategy $\pi_{1}$,
with belief over $\pi_{1}$ given by his regular prior $g_{2}$. To
maximize his expected utility, the receiver must learn to infer the
type of the sender from the signal, using his personal experience. 

Unlike the senders whose optimal policies may involve experimentation,
the receivers' problem only involves passive learning. Since the receiver
observes the same information in a match regardless of his action,
the optimal policy $\sigma_{2}(y_{2})$ simply best responds to the
posterior belief induced by history $y_{2}$. 
\begin{defn}
The \emph{one-period-forward map for receivers} $f_{2}:\Delta(Y_{2})\times\Pi_{1}\to\Delta(Y_{2})$
is 
\[
f_{2}[\psi_{2},\pi_{1}](y_{2},(\theta,s)):=\psi_{2}(y_{2})\cdot\gamma\cdot\lambda(\theta)\cdot\pi_{1}(s|\theta)
\]

and $f_{2}(\emptyset):=1-\gamma$. 
\end{defn}
As with the one-period-forward maps $f_{\theta}$ for senders, $f_{2}[\psi_{2},\pi_{1}]$
describes the new distribution over receiver histories tomorrow if
the distribution over histories in the receiver population today is
$\psi_{2}$ and the sender population's aggregate play is $\pi_{1}.$
We write $\psi_{2}^{\pi_{1}}:=\lim_{T\to\infty}f_{2}^{T}(\psi_{2},\pi_{1})$
for the long-run distribution over $Y_{2}$ induced by fixing sender
population's play at $\pi_{1}$, which is independent of the particular
choice of initial state $\psi_{2}$. 
\begin{defn}
The\emph{ aggregate receiver response} $\mathscr{R}_{2}:\Pi_{1}\to\Pi_{2}$
is 
\[
\mathscr{R}_{2}[\pi_{1}](a|s):=\psi_{2}^{\pi_{1}}(y_{2}:\sigma_{2}(y_{2})(s)=a),
\]

where $\psi_{2}^{\pi_{1}}:=\lim_{T\to\infty}f_{2}^{T}(\psi_{2},\pi_{1})$
with $\psi_{2}$ any arbitrary receiver state. 
\end{defn}
We are interested in the extent to which $\mathscr{R}_{2}[\pi_{1}]$
responds to inequalities of the form $\pi_{1}(s^{'}|\theta^{'})\ge\pi_{1}(s^{'}|\theta^{''})$
embedded in $\pi_{1}$, such as those generated when $\theta^{'}\succ_{s^{'}}\theta^{''}$
(Lemma \ref{lem:compatible_comonotonic}). To this end, for any two
types $\theta^{'},\theta^{''}$ we define $P_{\theta^{'}\triangleright\theta^{''}}$
as those beliefs where the odds ratio of $\theta^{'}$ to $\theta^{''}$
exceeds their prior odds ratio, that is
\begin{equation}
P_{\theta^{'}\triangleright\theta^{''}}:=\left\{ p\in\Delta(\Theta):\frac{p(\theta^{''})}{p(\theta^{'})}\le\frac{\lambda(\theta^{''})}{\lambda(\theta^{'})}\right\} .\label{eq:odds_ratio_P}
\end{equation}
If $\pi_{1}(s^{'}|\theta^{'})\ge\pi_{1}(s^{'}|\theta^{''}),$ $\pi_{1}(s^{'}|\theta^{'})>0,$
and receiver knows $\pi_{1}$, then receiver's posterior belief about
sender's type after observing $s^{'}$ falls in the set $P_{\theta^{'}\triangleright\theta^{''}}$.
The next lemma shows that under the additional provisions that $\pi_{1}(s^{'}|\theta^{'})$
is ``large enough'' and receivers are sufficiently long-lived, $\mathscr{R}_{2}[\pi_{1}]$
will best respond to $P_{\theta^{'}\triangleright\theta^{''}}$ with
high probability when $s^{'}$ is sent. 

For $P\subseteq\Delta(\Theta)$, we let\footnote{We abuse notation here and write $u_{2}(p,s,a^{'})$ to mean $\sum_{\theta\in\Theta}u_{2}(\theta,s,a^{'})\cdot p(\theta)$. }
$\text{BR}(P,s)\coloneqq\bigcup_{p\in P}\left(\underset{a'\in A}{\arg\max}\ \text{\ensuremath{u_{2}(p,s,a')}}\right);$
this is the set of best responses to $s$ supported by some belief
in $P$. 
\begin{lem}
\label{lem:receiver_learning}Let regular prior $g_{2}$, types $\theta^{'},\theta^{''}$,
and signal $s^{'}$ be fixed. For every $\epsilon>0$, there exist
$C>0$ and $\underline{\gamma}<1$ so that for any $0\le\delta<1$,
$\underline{\gamma}\le\gamma<1$, and $n\ge1$, if $\pi_{1}(s^{'}|\theta^{'})\ge\pi_{1}(s^{'}|\theta^{''})$
and $\pi_{1}(s^{'}|\theta^{'})\ge(1-\gamma)nC$, then 
\[
\mathscr{R}_{2}[\pi_{1}](\text{BR}(P_{\theta^{'}\triangleright\theta^{''}},s^{'})\ |\ s^{'})\ge1-\frac{1}{n}-\epsilon.
\]
\end{lem}
This lemma gives a lower bound on the probability that $\mathscr{R}_{2}[\pi_{1}]$
best responds to $P_{\theta^{'}\triangleright\theta^{''}}$ after
signal $s^{'}$. Note that the bound only applies for survival probabilities
$\gamma$ that are close enough to 1, because when receivers have
short lifetimes they need not get enough data to outweigh their prior.
Note also that more of the receivers learn the compatibility condition
when $\pi_{1}(s^{'}|\theta^{'})$ is large compared to $(1-\gamma)$
and almost all of them do in the limit of $\text{\ensuremath{n\shortrightarrow\infty.}}$The
proof of Lemma \ref{lem:receiver_learning} relies on Theorem 2 from
\citet*{fudenberg_he_imhof_2016} about updating Bayesian posteriors
after rare events, where the rare event corresponds to observing $\theta^{'}$
play $s^{'}$. The details are in Appendix \ref{subsec:Proof-of-Lemma-Receiver}. 

To interpret the condition $\pi_{1}(s^{'}|\theta^{'})\ge(1-\gamma)nC,$
recall that an agent with survival chance $\gamma$ has a typical
lifespan of $\frac{1}{1-\gamma}$. If $\pi_{1}$ describes the aggregate
play in the sender population, then on average a type $\theta^{'}$
plays $s^{'}$ for $\frac{1}{1-\gamma}\cdot\pi_{1}(s^{'}|\theta^{'})$
periods in her life. So when a typical type $\theta^{'}$ plays $s^{'}$
for $nC$ periods, this lemma provides a bound of $1-\frac{1}{n}-\epsilon$
on the share of the receiver responses that lie in $\text{BR}(P_{\theta^{'}\triangleright\theta^{''}},s^{'}).$
Note that the hypothesis $\theta^{'}$ plays $s^{'}$ for $nC$ periods
does not require that $\pi_{1}(s^{'}|\theta^{'})$ is bounded away
from 0 as $\gamma\to1.$ To preview, Lemma \ref{lem:nondom_message}
in the next section will establish that signals that are not weakly
equilibrium dominated for a given type are played sufficiently often
that Lemma \ref{lem:receiver_learning} has bite when both $\delta$
and $\gamma$ are close to 1.

\section{Steady State Implications for Aggregate Play\label{sec:two_sided} }

Section \ref{sec:aggregate} separately examined the senders' and
receivers' learning problems. In this section, we turn to the two-sided
learning problem. We will first define steady-state strategy profiles,
which are signaling game strategy profiles $\pi^{*}$ where $\pi_{1}^{*}$
and $\pi_{2}^{*}$ are mutual aggregate responses, and then characterize
the steady states using our previous results. 

\subsection{Steady States, $\delta$-Stability, and Patient Stability }

We introduced the one-period-forward maps $f_{\theta}$ and $f_{2}$
in Section \ref{sec:aggregate}, which describe the deterministic
transition between state $\psi^{t}$ this period to state $\psi^{t+1}$
next period through the learning dynamics and the birth-death process.
More precisely, $\psi_{\theta}^{t+1}=f_{\theta}(\psi_{\theta}^{t},\sigma_{2}(\psi_{2}^{t}))$
and $\psi_{2}^{t+1}=f_{2}(\psi_{2}^{t},(\sigma_{\theta}(\psi_{\theta}^{t}))_{\theta\in\Theta}).$
A steady state is a fixed point $\psi^{*}$ of this transition map,
. 
\begin{defn}
A state $\psi^{*}$ is a \emph{steady state} if $\psi_{\theta}^{*}=f_{\theta}(\psi_{\theta}^{*},\sigma_{2}(\psi_{2}^{*}))$
for every $\theta$ and $\psi_{2}^{*}=f_{2}(\psi_{2}^{*},(\sigma_{\theta}(\psi_{\theta}^{*}))_{\theta\in\Theta})$.
The set of all steady states for regular prior $g$ and $0\le\delta,\gamma<1$
is denoted $\Psi^{*}(g,\delta,\gamma)$, while the set of steady-state
strategy profiles is $\Pi^{*}(g,\delta,\gamma):=\{\sigma(\psi^{*}):\psi^{*}\in\Psi^{*}(g,\delta,\gamma)\}$. 
\end{defn}
The strategy profiles associated with steady states represent time-invariant
distributions of play, as the information lost when agents die each
period exactly balances out the information agents gain through learning
that period. This means the exchangeability assumption of the learners
will be satisfied in any steady state. 

We now give an equivalent characterization $\Pi^{*}(g,\delta,\gamma)$
in terms of $\mathscr{R}_{1}$ and $\mathscr{R}_{2}$. The proof is
in Appendix \ref{subsec:pf_mutual_AR}. 
\begin{prop}
\label{prop:mutual_AR} $\pi^{*}\in\Pi^{*}(g,\delta,\gamma)$ if and
only if $\mathscr{R}_{1}^{g,\delta,\gamma}(\pi_{2}^{*})=\pi_{1}^{*}$
and $\mathscr{R}_{2}^{g,\delta,\gamma}(\pi_{1}^{*})=\pi_{2}^{*}$. 
\end{prop}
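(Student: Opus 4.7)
The plan is to show both directions by exploiting the fact that, for any fixed receiver strategy $\pi_2 \in \Pi_2$, the distribution $\psi_\theta^{\pi_2} = \lim_{T\to\infty} f_\theta^T(\psi_\theta, \pi_2)$ is the unique fixed point of the map $\psi_\theta \mapsto f_\theta(\psi_\theta, \pi_2)$, and similarly on the receiver side. The argument reduces everything to the interplay between: (a) the definition of the ASR/ARR as $\sigma_\theta$ applied to these limit measures, and (b) the definition of a steady state as a mutual fixed point of the one-period-forward maps.

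For the ``only if'' direction, suppose $\psi^* \in \Psi^*(g,\delta,\gamma)$ and set $\pi^* := \sigma(\psi^*)$, so that $\pi_1^*(\cdot|\theta) = \sigma_\theta(\psi_\theta^*)$ and $\pi_2^* = \sigma_2(\psi_2^*)$. The steady-state identity $\psi_\theta^* = f_\theta(\psi_\theta^*, \sigma_2(\psi_2^*)) = f_\theta(\psi_\theta^*, \pi_2^*)$ iterates to $\psi_\theta^* = f_\theta^T(\psi_\theta^*, \pi_2^*)$ for every $T$; since the limit $\psi_\theta^{\pi_2^*}$ is independent of the initial state, taking $T \to \infty$ yields $\psi_\theta^* = \psi_\theta^{\pi_2^*}$. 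Applying $\sigma_\theta$ to both sides in the aggregate sense (equation \eqref{eq:induced_sender_strat}) gives
\[
\mathscr{R}_1[\pi_2^*](s|\theta) = \psi_\theta^{\pi_2^*}\{y_\theta : \sigma_\theta(y_\theta)=s\} = \psi_\theta^*\{y_\theta : \sigma_\theta(y_\theta)=s\} = \pi_1^*(s|\theta).
\]
The analogous argument on the receiver side gives $\mathscr{R}_2[\pi_1^*] = \pi_2^*$.

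For the ``if'' direction, suppose $\pi^*$ satisfies $\mathscr{R}_1(\pi_2^*) = \pi_1^*$ and $\mathscr{R}_2(\pi_1^*) = \pi_2^*$. Define $\psi_\theta^* := \psi_\theta^{\pi_2^*}$ and $\psi_2^* := \psi_2^{\pi_1^*}$. By definition of the ASR, $\sigma_\theta(\psi_\theta^*)(s) = \psi_\theta^{\pi_2^*}\{y_\theta:\sigma_\theta(y_\theta)=s\} = \mathscr{R}_1[\pi_2^*](s|\theta) = \pi_1^*(s|\theta)$, and similarly $\sigma_2(\psi_2^*) = \pi_2^*$, so in particular $\sigma(\psi^*) = \pi^*$. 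It remains to verify the fixed-point equations defining a steady state. For this I would use the recursion $f_\theta^{T+1}(\psi_\theta, \pi_2^*) = f_\theta(f_\theta^T(\psi_\theta, \pi_2^*), \pi_2^*)$ and the paper's observation that $f_\theta^T(\psi_\theta, \pi_2^*)$ agrees with $\psi_\theta^{\pi_2^*}$ on all histories of length at most $T$: for any history $(y_\theta,(s,a))$ of length $t+1$, picking $T \geq t+1$ gives
\[
\psi_\theta^*(y_\theta,(s,a)) = f_\theta(\psi_\theta^{\pi_2^*},\pi_2^*)(y_\theta,(s,a)) = \psi_\theta^*(y_\theta)\cdot\gamma\cdot\mathbf{1}\{\sigma_\theta(y_\theta)=s\}\cdot\pi_2^*(a|s),
\]
and $\psi_\theta^*(\emptyset) = 1-\gamma$ by the definition of $f_\theta$. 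Since $\pi_2^* = \sigma_2(\psi_2^*)$, this is exactly $\psi_\theta^* = f_\theta(\psi_\theta^*, \sigma_2(\psi_2^*))$. The receiver-side fixed-point equation $\psi_2^* = f_2(\psi_2^*, (\sigma_\theta(\psi_\theta^*))_\theta)$ follows in the same way. Hence $\psi^* \in \Psi^*(g,\delta,\gamma)$ and $\pi^* \in \Pi^*(g,\delta,\gamma)$.

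No step is really an obstacle — this is a book-keeping argument — but the one requiring the most care is verifying that the pointwise limit $\psi_\theta^{\pi_2}$ is genuinely a fixed point of $f_\theta(\cdot,\pi_2)$, which one must read off the recursive definition of $f_\theta$ rather than appealing to a continuity result, since $f_\theta$ is a map on an infinite-dimensional simplex. Once that is in hand, both directions reduce to the identity $\sigma_\theta(\psi_\theta^{\pi_2}) = \mathscr{R}_1[\pi_2](\cdot|\theta)$.
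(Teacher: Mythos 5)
Your proof is correct, and its overall skeleton matches the paper's: both directions reduce to identifying the steady-state components $\psi_\theta^*$ with the limits $\psi_\theta^{\pi_2^*}$ (and $\psi_2^*$ with $\psi_2^{\pi_1^*}$) and then reading off the definitions of $\mathscr{R}_1$ and $\mathscr{R}_2$. The one place where you genuinely diverge is the step you correctly flag as the only delicate one: showing that $\psi_\theta^{\pi_2^*}$ is a fixed point of $f_\theta(\cdot,\pi_2^*)$. The paper establishes this by an $\epsilon/2$ argument using continuity of $f_\theta$ in the $\ell_1$ metric, which it must import from Step 1 of the existence proof in the Online Appendix. You instead verify the fixed-point identity pointwise on histories, exploiting the locality of the one-period-forward map --- the value of $f_\theta[\psi_\theta,\pi_2]$ at a length-$(t+1)$ history depends only on $\psi_\theta$ at its length-$t$ parent --- together with the fact that $f_\theta^T(\cdot,\pi_2^*)$ stabilizes on $Y_\theta[t]$ for $T\ge t$. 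Your route is more self-contained (no appeal to an external continuity lemma on an infinite-dimensional space) and arguably cleaner for this particular claim; the paper's continuity-based argument has the advantage that the continuity of $f_\theta$ is needed anyway for the fixed-point existence theorem, so it reuses machinery already in place. Both are complete proofs.
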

(Note that here we make the dependence of $\mathscr{R}_{1}$ and $\mathscr{R}_{2}$
on parameters $(g,\delta,\gamma)$ explicit to avoid confusion.) That
is, a steady-state strategy profile is a pair of mutual aggregate
replies. 

The next proposition guarantees that there always exists at least
one steady-state strategy profile. 
\begin{prop}
\label{thm:existence}$\Pi^{*}(g,\delta,\gamma)$ is nonempty and
compact in the norm topology. 
\end{prop}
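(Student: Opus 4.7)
The plan is to deduce existence from Brouwer's fixed point theorem via Proposition \ref{prop:mutual_AR}, and to obtain compactness by observing that $\Pi^{*}(g,\delta,\gamma)$ coincides with the set of fixed points of a continuous self-map on a compact set.

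By Proposition \ref{prop:mutual_AR}, $\pi^{*}\in\Pi^{*}(g,\delta,\gamma)$ if and only if $\pi^{*}$ is a fixed point of
\[
F:\Pi_{1}\times\Pi_{2}\to\Pi_{1}\times\Pi_{2},\qquad F(\pi_{1},\pi_{2})=\bigl(\mathscr{R}_{1}[\pi_{2}],\,\mathscr{R}_{2}[\pi_{1}]\bigr).
\]
The domain is a finite product of simplices, hence a nonempty compact convex subset of Euclidean space. So once $F$ is shown to be continuous in the norm topology, Brouwer's theorem delivers a fixed point, and $\Pi^{*}(g,\delta,\gamma)=\{\pi:F(\pi)=\pi\}$ is automatically a closed, and therefore compact, subset of $\Pi_{1}\times\Pi_{2}$.

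The substantive work is continuity of $\mathscr{R}_{1}$ and $\mathscr{R}_{2}$. The key observation is that once $g$, $\delta$, and $\gamma$ are fixed, the deterministic optimal policies $\sigma_{\theta}$ and $\sigma_{2}$ are themselves fixed maps from histories to actions; they do not vary with $\pi_{2}$ or $\pi_{1}$. Only the history distributions $\psi_{\theta}^{\pi_{2}}$ and $\psi_{2}^{\pi_{1}}$ move with the other side's aggregate play. Iterating the one-period-forward map, for any length-$t$ history $y_{\theta}=((s_{1},a_{1}),\ldots,(s_{t},a_{t}))$ consistent with $\sigma_{\theta}$,
\[
\psi_{\theta}^{\pi_{2}}(y_{\theta})=(1-\gamma)\gamma^{t}\prod_{k=1}^{t}\pi_{2}(a_{k}\mid s_{k}),
\]
while inconsistent histories get mass zero. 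Hence
\[
\mathscr{R}_{1}[\pi_{2}](s\mid\theta)=\sum_{t=0}^{\infty}\sum_{\substack{y_{\theta}\in Y_{\theta}[t]\\ \sigma_{\theta}(y_{\theta})=s}}\psi_{\theta}^{\pi_{2}}(y_{\theta})
\]
has, for each $T$, a truncation at $t\le T$ which is a polynomial in $\pi_{2}$ (a finite sum of monomials with coefficients $(1-\gamma)\gamma^{t}$), while the tail is bounded uniformly in $\pi_{2}$ by $\sum_{t>T}(1-\gamma)\gamma^{t}=\gamma^{T+1}\to 0$. So $\mathscr{R}_{1}$ is a uniform limit of continuous functions and therefore continuous. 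Replacing the factor $\pi_{2}(a_{k}\mid s_{k})$ by $\lambda(\theta_{k})\,\pi_{1}(s_{k}\mid\theta_{k})$ and running the identical argument for $\psi_{2}^{\pi_{1}}$ delivers continuity of $\mathscr{R}_{2}$.

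The main conceptual point to flag is that no continuity of the individual policies $\sigma_{\theta}$ or $\sigma_{2}$ in any belief-induced topology is required: although Gittins-index or best-response ties may cause the (arbitrarily fixed) deterministic selection to switch from one history to another, the selection rule is itself fixed once $g,\delta,\gamma$ are fixed, so it does not enter the continuity argument in $\pi_{1}$ or $\pi_{2}$ at all. The only real technicality is the uniform tail bound on the age distribution, which is immediate from $\gamma<1$.
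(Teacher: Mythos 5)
Your proof is correct, but it takes a genuinely different route from the paper's. The paper works on the infinite-dimensional state space: it shows that the one-period-forward map $f$ is continuous in the $\ell_{1}$ norm on $\left(\times_{\theta}\Delta(Y_{\theta})\right)\times\Delta(Y_{2})$, that the feasible states form a compact locally convex Hausdorff space (using the geometric age distribution to control the tails), and then invokes a fixed-point theorem for that domain to obtain non-emptiness and compactness of $\Psi^{*}(g,\delta,\gamma)$, from which the properties of $\Pi^{*}(g,\delta,\gamma)$ follow. You instead push the whole problem down to the finite-dimensional strategy space via Proposition \ref{prop:mutual_AR}, exploiting the closed-form expression $\psi_{\theta}^{\pi_{2}}(y_{\theta})=(1-\gamma)\gamma^{t}\prod_{k}\pi_{2}(a_{k}|s_{k})$ (and its receiver analogue) to get continuity of the composite aggregate-response map directly, so plain Brouwer suffices; your observation that the policies $\sigma_{\theta},\sigma_{2}$ are fixed selections independent of the opposing population's play is exactly the point that makes the polynomial-plus-uniform-tail argument go through. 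What your route buys is elementarity: no infinite-dimensional fixed-point theorem. What the paper's route buys is the stronger conclusion that the set of steady \emph{states} $\Psi^{*}$ itself is non-empty and compact, which is the object the rest of the analysis manipulates. One dependency to watch: the paper's proof of Proposition \ref{prop:mutual_AR} cites the continuity of $f_{\theta}$ established inside the existence proof, so if your argument were to replace that proof you would need to keep that continuity fact (or your closed-form computation, which supersedes it) as a standalone lemma; there is no logical circularity either way.
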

The proof is in the Online Appendix. We establish that $\Psi^{*}(g,\delta,\gamma)$
is nonempty and compact in the $\ell_{1}$ norm on the space of distributions,
which immediately implies the same properties for $\Pi^{*}(g,\delta,\gamma)$.
Intuitively, if lifetimes are finite, the set of histories is finite,
so the set of states is of finite dimension. Here the one-period-forward
map $f=((f_{\theta})_{\theta\in\Theta},f_{2})$ is continuous, so
the usual version of Brouwer's fixed-point theorem applies. With geometric
lifetimes, very old agents are rare, so truncating the agents' lifetimes
at some large $T$ yields a good approximation. Instead of using these
approximations directly, our proof shows that under the $\ell_{1}$
norm $f$ is continuous, and that (because of the geometric lifetimes)
the feasible states form a compact locally convex Hausdorff space.
This lets us appeal to a fixed-point theorem for that domain. 

We now focus on the iterated limit 
\[
\lim_{\delta\to1}\lim_{\gamma\to1}\Pi^{*}(g,\delta,\gamma),
\]
 that is, the set of steady-state strategy profiles for $\delta$
and $\gamma$ near 1, where we first send $\gamma$ to 1 holding $\delta$
fixed, and then send $\delta$ to 1. 
\begin{defn}
For each $0\le\delta<1$, a strategy profile $\pi^{*}$ is \emph{$\delta$-stable
under $g$ }if there is a sequence $\gamma_{k}\to1$ and an associated
sequence of steady-state strategy profiles $\pi^{(k)}\in\Pi^{*}(g,\delta,\gamma_{k})$,
such that $\pi^{(k)}\to\pi^{*}$. Strategy profile $\pi^{*}$ is \emph{patiently
stable under $g$ }if there is a sequence $\delta_{k}\to1$ and an
associated sequence of strategy profiles $\pi^{(k)}$ where each $\pi^{(k)}$
is $\delta_{k}$-stable under $g$ and $\pi^{(k)}\to\pi^{*}$. Strategy
profile $\pi^{*}$ is \emph{patiently stable} if it is patiently stable
under some regular prior $g$. 
\end{defn}
Heuristically, patiently stable strategy profiles are the limits of
learning outcomes when agents become infinitely patient (so that senders
are willing to make many experiments) and long lived (so that agents
on both sides can learn enough for their data to outweigh their prior).
As in past work on steady-state learning \citep{fudenberg_steady_1993,fudenberg_superstition_2006},
the reason for this order of limits is to ensure that most agents
have enough data that they stop experimenting and play myopic best
responses.\footnote{If agents did not eventually stop experimenting as they age, then
even if most agents have approximately correct beliefs, aggregate
play need not be close to a Nash equilibrium because most agents would
not be playing a (static) best response to their beliefs.} We do not know whether our results extend to the other order of limits;
we explain the issues involved below, after sketching the intuition
for Proposition \ref{prop:nash}. 

\subsection{Preliminary Results on $\delta$-Stability and Patient Stability}

When $\gamma$ is near 1, agents correctly learn the consequences
of the strategies they play frequently. But for a fixed patience level
they may choose to rarely or never experiment, and so can maintain
incorrect beliefs about the consequences of strategies that they do
not play. The next result formally states this, which parallels \citet{fudenberg_steady_1993}'s
result that $\delta$-stable strategy profiles are self-confirming
equilibria. 
\begin{prop}
\label{thm:fixed_delta}Suppose strategy profile $\pi^{*}$ is $\delta$-stable
under a regular prior. Then for every type $\theta$ and signal $s$
with $\pi_{1}^{*}(s|\theta)>0$, $s$ is a best response to some $\pi_{2}\in\Pi_{2}$
for type $\theta$, and furthermore $\pi_{2}(\cdot|s)=\pi_{2}^{*}(\cdot|s)$.
Also, for any signal $s$ such that $\pi_{1}^{*}(s|\theta)>0$ for
at least one type $\theta$, $\pi_{2}^{*}(\cdot|s)$ is supported
on pure best responses to the Bayesian belief generated by $\pi_{1}^{*}$
after $s$. 
\end{prop}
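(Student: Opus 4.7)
The plan is to combine the characterization of steady states as mutual aggregate responses (Proposition~\ref{prop:mutual_AR}) with two asymptotic learning observations: as $\gamma\to 1$, (i) the vast majority of the mass of any positive-probability signal comes from old agents whose posteriors are accurate on the marginals they observe frequently, and (ii) the Gittins index of an arm whose reward distribution is known is exactly its one-shot expected payoff. This second observation will let us convert the dynamic optimality of the chosen signal into a static best-response condition.

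\textbf{Sender side.} Fix a sequence $\gamma_k\to 1$ and steady-state profiles $\pi^{(k)}\to\pi^*$, and fix a type $\theta$ and signal $s$ with $\pi_1^*(s\mid\theta)>0$. By Remark~\ref{rem:sender_APR}, $\pi_1^{(k)}(s\mid\theta)=\mathscr{R}_1[\pi_2^{(k)}](s\mid\theta)$ is a weighted average, with weights $(1-\gamma_k)\gamma_k^t$, of the period-$t$ play of a type-$\theta$ who faces $\pi_2^{(k)}$. For each $T$, the mass assigned to senders of age $\ge T$ is $\gamma_k^T\to 1$, and an age-$t$ sender who has chosen $s$ has played it at least $1$ time; in fact a standard counting argument (as in \citet{fudenberg_steady_1993}) gives that for any $\epsilon,C>0$, in the limit all but $\epsilon$ of the type-$\theta$ mass currently playing $s$ have previously observed at least $C$ responses to $s$. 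Since the marginal prior $g_1^{(s)}$ is non-doctrinaire, posterior beliefs on $\pi_2(\cdot\mid s)$ after $C$ i.i.d.\ draws from $\pi_2^{(k)}(\cdot\mid s)$ concentrate near $\pi_2^{(k)}(\cdot\mid s)$ (Diaconis-Freedman style consistency), uniformly in $k$ once $C$ is large. Call these senders ``well-calibrated.''

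For a well-calibrated sender, the Gittins index of $s$ is approximately $u_1(\theta,s,\pi_2^{(k)}(\cdot\mid s))$, because the Gittins index of an arm with (nearly) known reward distribution equals its (nearly) known expected reward (the optimal stopping time is $\tau=1$). For each other signal $s'$, by the construction recalled in the proof sketch of Theorem~\ref{prop:index} there is some $\pi_{2,s'}^{\theta,k}\in\Delta(A)$ with $I(\theta,s',\nu,\delta\gamma_k)=u_1(\theta,s',\pi_{2,s'}^{\theta,k})$. Optimality of playing $s$ gives
\[
u_1(\theta,s,\pi_2^{(k)}(\cdot\mid s))+o(1)\ \ge\ u_1(\theta,s',\pi_{2,s'}^{\theta,k})\quad\text{for every }s'\ne s.
\]
Passing to a subsequence so that $\pi_{2,s'}^{\theta,k}\to\pi_{2,s'}^{\theta,\ast}$ for each $s'\ne s$, and using $\pi_2^{(k)}(\cdot\mid s)\to\pi_2^*(\cdot\mid s)$, we obtain in the limit a strategy $\pi_2\in\Pi_2$ with $\pi_2(\cdot\mid s)=\pi_2^*(\cdot\mid s)$ and $\pi_2(\cdot\mid s')=\pi_{2,s'}^{\theta,\ast}$ against which $s$ is a best response for $\theta$, as claimed.

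\textbf{Receiver side.} Fix $s$ with $\pi_1^*(s\mid\theta)>0$ for some $\theta$, so $\pi_1^{(k)}(s\mid\theta)$ stays bounded away from $0$ in $k$. By the same age-counting logic applied to receivers, for any $\epsilon>0$ all but $\epsilon$ of the receivers observing $s$ (under $\psi_2^{\pi_1^{(k)}}$) have observed $s$ arbitrarily many times for $k$ large, and by consistency of Bayesian updating their posteriors over sender type after $s$ concentrate on the true Bayesian belief $p^{(k)}(\cdot\mid s)$ generated by $\pi_1^{(k)}$. Because receivers are passive learners, their optimal action after $s$ is a static best response to this posterior, so the induced $\mathscr{R}_2[\pi_1^{(k)}](\cdot\mid s)=\pi_2^{(k)}(\cdot\mid s)$ puts essentially all mass on $\mathrm{BR}(p^{(k)}(\cdot\mid s),s)$. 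Passing to the limit and using upper hemicontinuity of the best-response correspondence yields that $\pi_2^*(\cdot\mid s)$ is supported on pure best responses to $p^*(\cdot\mid s)$.

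\textbf{Main obstacle.} The technically delicate step is the ``well-calibrated'' claim in the sender argument: one must show that the subpopulation of type-$\theta$ senders currently playing $s$ whose private observation count for $s$ is large carries almost all of the $s$-mass as $\gamma_k\to 1$, and that on that subpopulation posterior beliefs over $\pi_2(\cdot\mid s)$ concentrate near $\pi_2^{(k)}(\cdot\mid s)$ uniformly in $k$. The first part follows by bounding the weight $(1-\gamma_k)\sum_{t<T}\gamma_k^t=1-\gamma_k^T$ assigned to ages below any fixed $T$ and invoking Lemma~\ref{prop:compatible_comonotonic}-style monotonicity only indirectly via counting of play; the second part is the non-doctrinaire-prior consistency theorem applied to an i.i.d.\ sequence, which here is legitimate because conditional on playing $s$ at a given history, the response draw is genuinely i.i.d.\ from $\pi_2^{(k)}(\cdot\mid s)$.
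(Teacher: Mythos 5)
Your proposal is correct and follows essentially the same route as the paper's argument: the age-counting bound showing that almost all of the $s$-mass comes from agents with many prior observations of $s$, Diaconis--Freedman-style posterior consistency from the non-doctrinaire prior, and the vanishing of the option value of a well-sampled arm at the fixed effective discount factor $\delta\gamma\le\delta<1$ (which is exactly where $\delta$-stability, rather than patient stability, is used). The only cosmetic difference is that you phrase the sender step constructively, using the representation of each Gittins index as a one-shot payoff against an induced mixed action to exhibit the witnessing $\pi_{2}$, whereas the paper argues by contradiction via a ``persistent option value''; the content is the same.
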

We prove this result in the Online Appendix. The idea of the proof
is the following: If signal $s$ has positive probability in the limit,
then it is played many times by the senders, so the receivers eventually
learn the correct posterior distribution for $\theta$ given $s.$
As the receivers have no incentive to experiment, their actions after
$s$ will be a best response to this correct posterior belief. For
the senders, suppose $\pi_{1}^{*}(s|\theta)>0,\text{ }$ but $s$
is not a best response for type $\theta$ to any $\pi_{2}\in\Pi_{2}$
that matches $\pi_{2}^{*}(\cdot|s)$. Yet if a sender has played $s$
many times then with high probability her belief about $\pi_{2}(\cdot|s)$
is close to $\pi_{2}^{*}(\cdot|s)$, so playing $s$ is not myopically
optimal. This would imply that type $\theta$ has persistent option
value for signal $s$, which contradicts the fact that this option
value must converge to 0 with the sample size. 
\begin{rem}
This proposition says that each sender type is playing a best response
to a belief about the receiver's play that is correct on the equilibrium
path, and that the receivers are playing an aggregate best response
to the aggregate play of the senders. Thus the $\delta$-stable outcomes
are a version of self-confirming equilibrium where different types
of sender are allowed to have different beliefs. Moreover, as the
next example shows, this sort of heterogeneity in the senders' beliefs
about the aggregate strategy of the receivers can endogenously arise
in a $\delta$-stable strategy profile even when all types of new
senders start with the same prior over how the receivers play. \footnote{\citet*{dekel_learning_2004} defined type-heterogeneous self-confirming
equilibrium in static Bayesian games. As they noted, this sort of
heterogeneity is natural when the type of each agent is fixed, but
not if each agent's type is drawn i.i.d. in each period. To extend
their definition to signaling games, we can define the ``signal functions''
$y_{i}(a,\theta)$ from that paper to respect the extensive form of
the game. See also \citet{fudenberg_kamada_2018}. }
\end{rem}
\begin{example}
\noindent Consider the following game: 

\noindent \begin{center}

\includegraphics[scale=0.3]{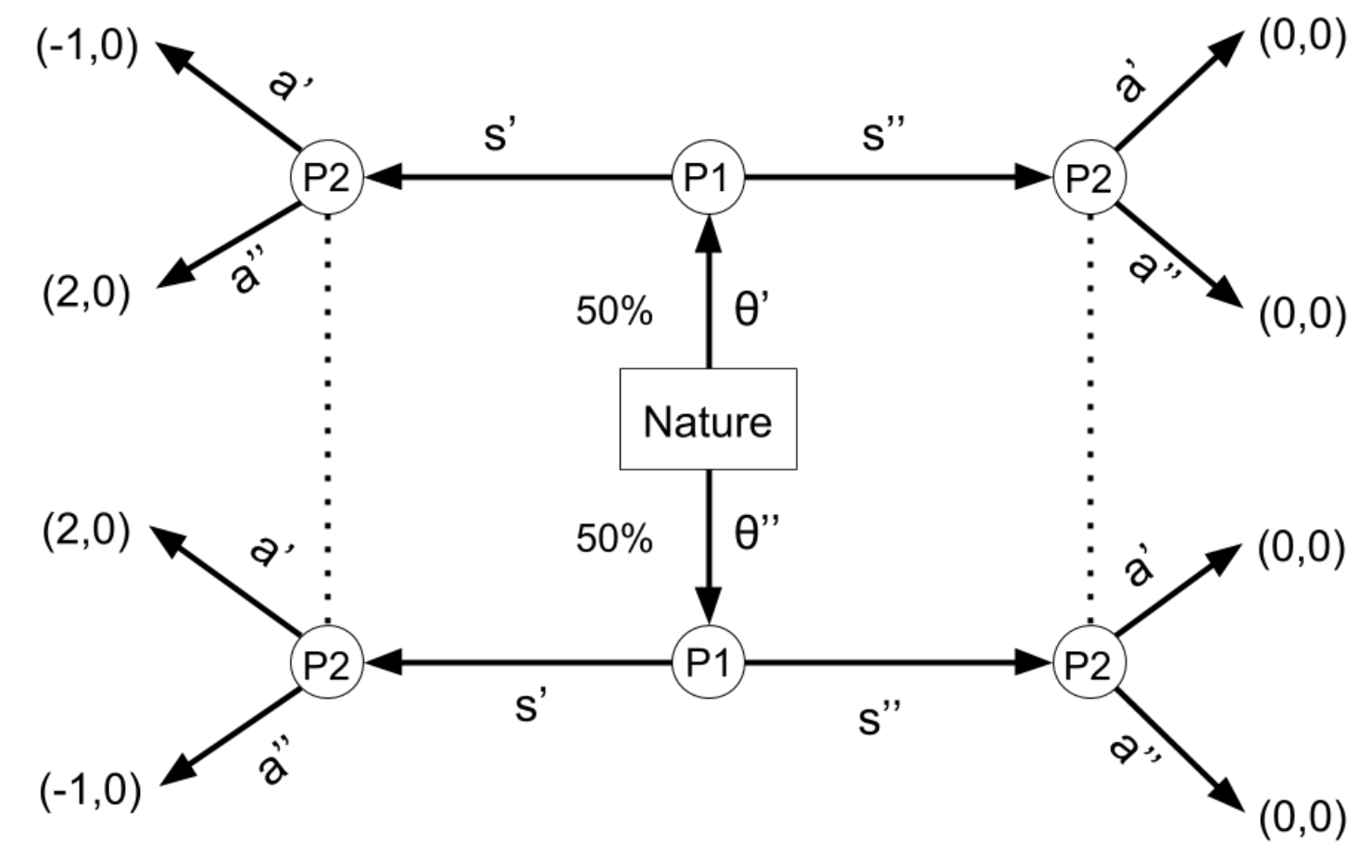}

\noindent \end{center}

The receiver is indifferent between all responses. Fix any regular
prior $g_{2}$ for the receiver and any regular prior $g_{1}^{(s^{''})}$
for the sender. Let $g_{1}^{(s^{'})}$ be Beta(1, 3) on $a^{'}$ and
$a^{''}$ respectively. We claim that it is $\delta$-stable when
$\delta=0$ for both types to send $s^{''}$ and for the receiver
to respond to every signal with $a^{'}$, which is a type-heterogeneous
rationalizable self-confirming equilibrium. However, this pooling
behavior cannot occur in a Nash equilibrium or in a unitary self-confirming
equilibrium, where both sender types must hold the same belief about
how the receiver responds to $s^{'}$. 

To establish this claim, note that since $\delta=0$ each sender plays
the myopically optimal signal after every history. For any $\gamma$,
there is a steady state where the receivers' policy responds to every
signal with $a^{'}$ after every history, type $\theta^{''}$ senders
play $s^{''}$ after every history and never update their prior belief
about how receivers react to $s^{'}$, and type $\theta'$ senders
with fewer than 6 periods of experience play $s^{'}$ but switch to
playing $s^{''}$ forever starting at age 7. The behavior of the $\theta^{'}$
agents is optimal because after $k$ periods of playing $s^{'}$ and
seeing response $a^{'}$ every period, the sender's posterior belief
about $\pi_{2}(\cdot|s^{'})$ is $\text{Beta}(1+k,3)$, so the expected
payoff from playing $s^{'}$ next period is

\[
\frac{1+k}{4+k}(-1)+\frac{3}{4+k}(2).
\]

This expression is positive when $0\le k\le5$ but negative when $k=6$.
The fraction of type $\theta^{'}$ aged 6 and below approaches 0 as
$\gamma\to1$, hence we have constructed a sequence of steady-state
strategy profiles converging to the $s^{''}$ pooling equilibrium.
So even though both types start with the same prior $g_{1}$, their
beliefs about how the receivers react to $s^{'}$ eventually diverge.
\hfill{} $\blacklozenge$
\end{example}
In contrast to the plethora of $\delta$-stable profiles, we now show
that only Nash equilibrium profiles can be steady-state outcomes as
$\delta$ tends to 1. Moreover, this limit also rules out strategy
profiles in which the sender's strategy can only be supported by the
belief that the receiver would play a dominated action in response
to some of the unsent signals. 
\begin{defn}
\label{def:PBE_hetero} In a signaling game, a \emph{perfect Bayesian
equilibrium with heterogeneous off-path beliefs }is a strategy profile
$(\pi_{1}^{*},\pi_{2}^{*})$ such that: 
\begin{itemize}
\item For each $\theta\in\Theta,$ $u_{1}(\theta;\pi^{*})=\max_{s\in S}u_{1}(\theta,s,\pi_{2}^{*}(\cdot|s))$. 
\item For each on-path signal $s$, $u_{2}(p^{*}(\cdot|s),s,\pi_{2}^{*}(\cdot|s))=\underset{\hat{a}\in A}{\max}\ u_{2}(p^{*}(\cdot|s),s,\hat{a})$. 
\item For each off-path signal $s$ and each $a\in A$ with $\pi_{2}^{*}(a|s)>0$,
there exists a belief $p\in\Delta(\Theta)$ such that $u_{2}(p,s,a)=\underset{\hat{a}\in A}{\max}\ u_{2}(p,s,\hat{a})$. 
\end{itemize}
Here $u_{1}(\theta;\pi^{*})$ refers to type $\theta$'s payoff under
$\pi^{*},$ and $p^{*}(\cdot|s)$ is the Bayesian posterior belief
about sender's type after signal $s$, under strategy $\pi_{1}^{*}$. 
\end{defn}
The first two conditions imply that the profile is a Nash equilibrium.
The third condition resembles that of perfect Bayesian equilibrium,
but is somewhat weaker as it allows the receiver's play after an off-path
signal $s$ to be a mixture over several actions, each of which is
a best response to a different belief about the sender's type. This
means $\pi_{2}^{*}(\cdot|s)\in\Delta(\text{BR}(\Delta(\Theta),s))$,
but $\pi_{2}^{*}(\cdot|s)$ itself may not be a best response to any
unitary belief about the sender's type. 
\begin{prop}
\label{prop:nash} If strategy profile $\pi^{*}$ is patiently stable,
then it is a perfect Bayesian equilibrium with heterogeneous off-path
beliefs. 
\end{prop}
\begin{proof}
In the Online Appendix, we prove that patiently stable profiles must
be Nash equilibria. This argument follows the proof strategy of \citet{fudenberg_steady_1993},
which derived a contradiction via excess option values. In outline,
if $\pi^{*}$ is patiently stable, each player's strategy is a best
response to a belief that is correct about the opponent's on-path
play. Thus if $\pi^{*}$ is not a Nash equilibrium, some type should
perceive a persistent option value to experimenting with some signal
that she plays with probability 0. But this would contradict the fact
that the option values evaluated at sufficiently long histories must
go to 0. We now explain why a patiently stable profile $\pi^{*}$
must satisfy the third condition in Definition \ref{def:PBE_hetero}.
After observing any history $y_{2}$, a receiver who started with
a regular prior thinks every signal has positive probability in his
next match. So, his optimal policy prescribes for each signal $s$
a best response to that receiver's posterior belief about the sender's
type upon seeing signal $s$ after history $y_{2}$. For any regular
prior $g$, $0\le\delta,\gamma<1$, and any sender aggregate play
$\pi_{1}$, we thus deduce $\mathscr{R}_{2}^{g,\delta,\gamma}[\pi_{1}](\cdot|s)$
is entirely supported on $\text{BR}(\Delta(\Theta),s)$. This means
the the same is true about the aggregate receiver response in every
steady state and hence in every patiently stable strategy profile. 
\end{proof}
In \citet{fudenberg_steady_1993}, this argument relies on the finite
lifetime of the agents only to ensure that ``almost all'' histories
are long enough, by picking a large enough lifetime. We can achieve
the analogous effect in our geometric-lifetime model by picking $\gamma$
close to 1. Our proof uses the fact that if $\delta$ is fixed and
$\gamma\to1,$ then the number of experiments that a sender needs
to exhaust her option value is negligible relative to her expected
lifespan, so that most senders play approximate best responses to
their current beliefs. The same conclusion does not hold if we fix
$\gamma$ and let $\delta\to1,$ even though the optimal sender policy
only depends on the product $\delta\gamma$, because for a fixed sender
policy the induced distribution on sender play depends on $\gamma$
but not on $\delta.$ 

\subsection{Patient Stability Implies the Compatibility Criterion }

Proposition \ref{prop:nash} allows the receiver to sustain his off-path
actions using any belief $p\in\Delta(\Theta)$. We now turn to our
main result, which focuses on refining off-path beliefs. We prove
that patient stability selects a strict subset of the Nash equilibria,
namely those that satisfy the\emph{ compatibility criterion. }
\begin{defn}
\label{def:J}For a fixed strategy profile $\pi^{*}$, let $u_{1}(\theta;\pi^{*})$
denote the payoff to type $\theta$ under $\pi^{*},$ and let 

\begin{eqnarray*}
J(s,\pi^{*}) & \coloneqq & \left\{ \theta\in\Theta:\underset{a\in A}{\max}\ u_{1}(\theta,s,a)>u_{1}(\theta;\pi^{*})\right\} 
\end{eqnarray*}
be the set of types for which \emph{some} response to signal $s$
is strictly better than their payoff under $\pi^{*}.$ Signal $s$
is \emph{weakly equilibrium dominated} for types in the complement
of $J(s,\pi^{*})$. 

The \emph{admissible beliefs at signal $s$} \emph{under profile}
$\pi^{*}$ are 

\[
P(s,\pi^{*})\coloneqq\bigcap\left\{ P_{\theta^{'}\triangleright\theta^{''}}:\theta^{'}\succ_{s}\theta^{''}\text{ and }\theta^{'}\in J(s,\pi^{*})\right\} 
\]

where $P_{\theta^{'}\triangleright\theta^{''}}$ is defined in Equation
(\ref{eq:odds_ratio_P}). 
\end{defn}
That is, $P(s,\pi^{*})$ is the joint belief restriction imposed by
a family of $P_{\theta^{'}\triangleright\theta^{''}}$ for $(\theta^{'},\theta^{''})$
satisfying two conditions: $\theta^{'}$ is more type-compatible with
$s$ than $\theta^{''},$ and furthermore the more compatible type
$\theta^{'}$ belongs to $J(s,\pi^{*})$. If there are no pairs $(\theta^{'},\theta^{''})$
satisfying these two conditions, then (by convention of intersection
over no elements) $P(s,\pi^{*})$ is defined as $\Delta(\Theta)$.
In any signaling game and for any $\pi^{*}$, the set $P(s,\pi^{*})$
is always nonempty because it always contains the prior $\lambda$. 
\begin{defn}
Strategy profile $\pi^{*}$ \emph{satisfies the compatibility criterion
}if $\pi_{2}(\cdot|s)\in\Delta(\text{BR}(P(s,\pi^{*}),s))$ for every
$s$. 
\end{defn}
Like divine equilibrium but unlike the Intuitive Criterion or \citet{cho_signaling_1987}'s
D1 criterion, the compatibility criterion says only that some signals
should not increase the relative probability of ``implausible''
types, as opposed to requiring that these types have probability 0. 

One might imagine a version of the compatibility criterion where the
belief restriction $P_{\theta^{'}\triangleright\theta^{''}}$ applies
whenever $\theta^{'}\succ_{s}\theta^{''}$. To understand why we require
the additional condition that $\theta^{'}\in J(s,\pi^{*})$ in the
definition of admissible beliefs, recall that Lemma \ref{lem:receiver_learning}
only gives a learning guarantee in the receiver's problem when $\pi_{1}(s|\theta^{'})$
is ``large enough'' for the more type-compatible $\theta^{'}$.
In the extreme case where $s$ is a strictly dominated signal for
$\theta^{'}$, she will never play it during learning. It turns out
that if $s$ is weakly equilibrium dominated for $\theta^{'}$, then
$\theta^{'}$ may still not experiment very much with it. On the other
hand, the next lemma provides a lower bound on the frequency that
$\theta^{'}$ experiments with $s^{'}$ when $\theta^{'}\in J(s^{'},\pi^{*})$
and $\delta$ and $\gamma$ are close to 1. 
\begin{lem}
\label{lem:nondom_message} Fix a regular prior $g$ and a strategy
profile $\pi^{*}$ where for some type $\theta^{'}$ and signal $s^{'}$,
$\theta^{'}\in J(s^{'},\pi^{*})$. There exist a number $\epsilon\in(0,1)$
and threshold functions $\bar{\delta}:\mathbb{N}\to(0,1)$ and $\bar{\gamma}:\mathbb{N}\times(0,1)\to(0,1)$
such that whenever $\pi\in\Pi^{*}(g,\delta,\gamma)$ with $\delta\ge\bar{\delta}(N)$
and $\gamma\ge\bar{\gamma}(N,\delta)$ and $\pi$ is no more than
$\epsilon$ away from $\pi^{*}$ in $L_{1}$ distance\footnote{\label{fn:l1_norm} The $L_{1}$ distance is 
\[
d(\pi,\pi^{*})=\sum_{\theta\in\Theta}\sum_{s\in S}|\pi_{1}(s|\theta)-\pi_{1}^{*}(s|\theta)|+\sum_{s\in S}\sum_{a\in A}|\pi_{2}(a|s)-\pi_{2}^{*}(a|s)|.
\]
}, we have $\pi_{1}(s^{'}|\theta^{'})\ge(1-\gamma)\cdot N.$
\end{lem}
Note that since $\pi_{1}(s|\theta^{'})$ is between 0 and 1, we know
that $(1-\bar{\gamma}(N,\delta))\cdot N<1$ for each $N$. 

The proof of this lemma is in the Online Appendix. To gain an intuition
for it, suppose that not only is $s^{'}$ equilibrium undominated
in $\pi^{*},$ but furthermore $s^{'}$ can lead to the highest signaling
game payoff for type $\theta^{'}$ under some receiver response $a^{'}$.
Because the prior is non-doctrinaire, the Gittins index of each signal
in the learning problem approaches its highest possible payoff in
the stage game as the sender becomes infinitely patient. Therefore,
for every $N\in\mathbb{N}$, when $\gamma$ and $\delta$ are close
enough to 1, a new type $\theta^{'}$ will play $s^{'}$ in each of
the first $N$ periods of her life, regardless of what responses she
receives during that time. These $N$ periods account for roughly
$(1-\gamma)\cdot N$ fraction of her life, proving the lemma in this
special case. It turns out that even if $s^{'}$ does not lead to
the highest potential payoff in the signaling game, long-lived players
will have a good estimate of their steady-state payoff. So, type $\theta'$
will still play any $s^{'}$ that is equilibrium undominated in strategy
profile $\pi^{*}$ at least $N$ times in any steady states that are
sufficiently close to $\pi^{*}$, though these $N$ periods may not
occur at the beginning of her life. 
\begin{thm}
\label{thm:PS_is_compatible} Every patiently stable strategy profile
$\pi^{*}$ satisfies the compatibility criterion. 
\end{thm}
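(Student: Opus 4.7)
The plan is to use the three lemmas of Sections \ref{sec:sender_side} and \ref{sec:receiver_side} in concert: Lemma \ref{lem:nondom_message} guarantees a lower bound on the frequency with which equilibrium-undominated signals get experimented with; Lemma \ref{prop:compatible_comonotonic} turns the static compatibility order into an inequality between these frequencies across types; and Lemma \ref{prop:receiver_learning} converts these frequency comparisons into a constraint on long-lived receivers' posterior odds ratios, and hence on the actions they play.

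Concretely, fix a patiently stable $\pi^{*}$, a signal $s$, and consider every pair $(\theta^{'},\theta^{''})$ appearing in the definition of $P(s,\pi^{*})$, i.e.\ with $\theta^{'}\succ_{s}\theta^{''}$ and $\theta^{'}\in J(s,\pi^{*})$. By definition of patient stability there exist $\delta_{k}\uparrow 1$ and, for each $k$, $\gamma_{k,j}\uparrow 1$ with steady-state profiles $\pi^{(k,j)}\in\Pi^{*}(g,\delta_{k},\gamma_{k,j})$ approximating $\pi^{*}$ in the iterated limit. I would fix any $N$ and any small $\epsilon>0$ and, using Lemma \ref{lem:nondom_message} applied to each of the finitely many relevant pairs (taking the maximum of the required thresholds $\delta(N)$ and $\gamma(N,\delta)$ over pairs), conclude that for $k,j$ large enough and $\pi^{(k,j)}$ sufficiently close to $\pi^{*}$ one has $\pi_{1}^{(k,j)}(s\mid\theta^{'})\ge (1-\gamma_{k,j})N$ for every such $\theta^{'}$. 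Since $\pi^{(k,j)}$ is a steady state, Proposition \ref{prop:mutual_AR} gives $\pi_{1}^{(k,j)}=\mathscr{R}_{1}(\pi_{2}^{(k,j)})$, and Lemma \ref{prop:compatible_comonotonic} then yields $\pi_{1}^{(k,j)}(s\mid\theta^{'})\ge\pi_{1}^{(k,j)}(s\mid\theta^{''})$.

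With both hypotheses of Lemma \ref{prop:receiver_learning} in hand for every relevant pair, I would retrace its proof a step deeper: for each small $\bar h>0$ the FHI updating theorem implies that at least $1-\epsilon$ of the receivers older than $\lceil 1/(N(1-\gamma_{k,j}))\rceil$ hold posterior beliefs after signal $s$ in $P_{\theta^{'}\triangleright\theta^{''}}^{\bar h}$. Applying this simultaneously to the (finitely many) pairs and using a union bound, at least a $1-\#\text{pairs}\cdot\epsilon$ fraction of these receivers hold a posterior in the intersection $\bigcap_{(\theta^{'},\theta^{''})} P_{\theta^{'}\triangleright\theta^{''}}^{\bar h}$. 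Those receivers' optimal actions therefore lie in $\mathrm{BR}\bigl(\bigcap_{(\theta^{'},\theta^{''})}P_{\theta^{'}\triangleright\theta^{''}}^{\bar h},s\bigr)$. Again because $\pi^{(k,j)}$ is a steady state, $\pi_{2}^{(k,j)}(\cdot\mid s)$ then puts mass at least $\gamma_{k,j}^{\lceil 1/(N(1-\gamma_{k,j}))\rceil}-\#\text{pairs}\cdot\epsilon$ on this action set.

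Finally, I would pass to the limit. As $\gamma_{k,j}\to 1$ the above mass tends to $1-\#\text{pairs}\cdot\epsilon$, and as $\epsilon\to 0$, $\bar h\to 0$, and $N\to\infty$ (chosen consistently with the iterated limit), the relevant best-response sets shrink to $\mathrm{BR}(P(s,\pi^{*}),s)$ by the closed-graph property of the best-response correspondence (using that $A$ is finite, so $\mathrm{BR}(P,s)$ stabilizes for $P$ close to $P(s,\pi^{*})$). Hence any $a$ in the support of the limiting $\pi_{2}^{*}(\cdot\mid s)$ lies in $\mathrm{BR}(P(s,\pi^{*}),s)$, which is the compatibility criterion. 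The main obstacle is orchestrating the two nested limits $\gamma\to 1$ and $\delta\to 1$ so that, uniformly over the finitely many relevant pairs, the thresholds supplied by Lemmas \ref{lem:nondom_message} and \ref{prop:receiver_learning} are met simultaneously; once this bookkeeping is done, the union-bound/close-graph argument is standard.
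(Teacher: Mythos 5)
Your proposal is correct and follows essentially the same route as the paper: extract a steady-state profile deep in the iterated $\delta,\gamma$ limit close to $\pi^{*}$, apply Lemma \ref{lem:nondom_message} (with thresholds taken uniformly over the finitely many relevant pairs) to get the experimentation lower bound, feed it through Lemma \ref{prop:compatible_comonotonic} via Proposition \ref{prop:mutual_AR}, and then invoke Lemma \ref{prop:receiver_learning} together with a union bound over pairs before passing to the limit. The only (welcome) difference is that you perform the union bound at the level of receivers' posterior beliefs lying in $\bigcap_{(\theta^{'},\theta^{''})}P_{\theta^{'}\triangleright\theta^{''}}^{\bar h}$ rather than over the per-pair best-response events, which cleanly justifies why the surviving mass of receivers plays in $\text{BR}(P(s,\pi^{*}),s)$; the remaining quantifier bookkeeping (choosing $N$ of the form $nC$ with $C$ supplied by Lemma \ref{prop:receiver_learning} before invoking Lemma \ref{lem:nondom_message}) is exactly what the paper's Step 1 does.
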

The proof combines Lemma \ref{lem:compatible_comonotonic}, Lemma
\ref{lem:receiver_learning}, and Lemma \ref{lem:nondom_message}.
Lemma \ref{lem:compatible_comonotonic} shows that types that are
more compatible with $s^{'}$ play it more often. Lemma \ref{lem:nondom_message}
says that types for whom $s^{'}$ is not weakly equilibrium dominated
will play it ``many times.'' Finally, Lemma \ref{lem:receiver_learning}
shows that the ``many times'' here is sufficiently large that most
receivers correctly believe that more compatible types play $s^{'}$
more than less compatible types do, so their posterior odds ratio
for more versus less compatible types exceeds the prior ratio. 
\begin{proof}
Suppose $\pi^{*}$ is patiently stable under regular prior $g$. Fix
an $s^{'}$ and an action $\hat{a}\notin\text{BR}(P(s^{'},\pi^{*}),s^{'})$.
Let $h>0$ be given. We will show that $\pi_{2}^{*}(\hat{a}|s^{'})<h$.
Since the choices of $s^{'}$, $\hat{a}$, and $h>0$ are arbitrary,
we will have proven the theorem. 

\textbf{Step 1}: Setting some constants. 

In the statement of Lemma \ref{lem:receiver_learning}, for each pair
$\theta^{'},\theta^{''}$ such that $\theta^{'}\succ_{s^{'}}\theta^{''}\text{ and }\theta^{'}\in J(s^{'},\pi^{*})$,
put $\epsilon=\frac{h}{2|\Theta|^{2}}$ and find $C_{\theta^{'},\theta^{''}}$
and $\underline{\gamma}_{\theta^{'},\theta^{''}}$ so that the result
holds. Let $C$ be the maximum of all such $C_{\theta^{'},\theta^{''}}$
and $\underline{\gamma}$ be the maximum of all such $\underline{\gamma}_{\theta^{'},\theta^{''}}$.
Also find $\underline{n}\ge1$ so that 
\begin{equation}
1-\frac{1}{\underline{n}}>1-\frac{h}{2|\Theta|^{2}}.\label{eq:frac_old}
\end{equation}

In the statement of Lemma \ref{lem:nondom_message}, for each $\theta^{'}$
such that $\theta^{'}\succ_{s^{'}}\theta^{''}$ for at least one $\theta^{''}$,
find $\epsilon_{\theta^{'}},\bar{\delta}_{\theta^{'}}(\underline{n}C)$,
$\bar{\gamma}_{\theta^{'}}(\underline{n}C,\delta)$ so that the lemma
holds. Write $\epsilon^{*}>0$ as the minimum of all such $\epsilon_{\theta^{'}}$
and let $\bar{\delta}^{*}(\underline{n}C)$ and $\bar{\gamma}^{*}(\underline{n}C,\delta)$
represent the maximum of $\delta_{\theta^{'}}$ and $\gamma_{\theta^{'}}$
across such $\theta^{'}$. 

\textbf{Step 2}: Finding a steady-state profile with large $\delta,\gamma$
that approximates $\pi^{*}$. 

Since $\pi^{*}$ is patiently stable under $g$, there exists a sequence
of strategy profiles $\pi^{(j)}\to\pi^{*}$ where $\pi^{(j)}$ is
$\delta_{j}$-stable under $g$ with $\delta_{j}\to1$. Each $\pi^{(j)}$
can be written as the limit of steady-state strategy profiles. That
is, for each $j$, there exists $\gamma_{j,k}\to1$ and a sequence
of steady-state profiles $\pi^{(j,k)}\in\Pi^{*}(g,\delta_{j},\gamma_{j,k})$
such that $\lim_{k\to\infty}\pi^{(j,k)}=\pi^{(j)}$. 

The convergence of the array $\pi^{(j,k)}$ to $\pi^{*}$ means we
may find $\underline{j}\in\mathbb{N}$ and function $k(j)$ so that
whenever $j\ge\underline{j}$ and $k\ge k(j),$ $\pi^{(j,k)}$ is
no more than $\min(\epsilon^{*},\frac{h}{2|\Theta|^{2}})$ away from
$\pi^{*}$. Find $j^{\circ}\ge\underline{j}$ large enough so $\delta^{\circ}:=\delta_{j^{\circ}}>\bar{\delta}^{*}(\underline{n}C)$,
and then find a large enough $k^{\circ}>k(j^{\circ})$ so that $\gamma^{\circ}:=\gamma_{j^{\circ},k^{\circ}}>\max(\bar{\gamma}^{*}(\underline{n}C,\delta^{\circ}),\underline{\gamma})$.
So we have identified a steady-state profile $\pi^{\circ}:=\pi^{(j^{\circ},k^{\circ})}\in\Pi^{*}(g,\delta^{\circ},\gamma^{\circ})$
which approximates $\pi^{*}$ to within $\min(\epsilon^{*},\frac{h}{2|\Theta|^{2}})$. 

\textbf{Step 3}: Applying properties of $\mathscr{R}_{1}$ and $\mathscr{R}_{2}$. 

For each pair $\theta^{'},\theta^{''}$ such that $\theta^{'}\succ_{s^{'}}\theta^{''}\text{ and }\theta^{'}\in J(s^{'},\pi^{*})$,
we will bound the probability that $\pi_{2}^{\circ}(\cdot|s^{'})$
does not best respond to $P_{\theta^{'}\triangleright\theta^{''}}$
by $\frac{h}{|\Theta|^{2}}$. Since there are at most $|\Theta|\cdot(|\Theta|-1)$
such pairs in the intersection defining $P(s^{'},\pi^{*})$, this
would imply that $\pi_{2}^{\circ}(\hat{a}|s^{'})<[|\Theta|\cdot(|\Theta|-1)]\cdot\frac{h}{|\Theta|^{2}}$
since $\hat{a}\notin\text{BR}(P(s^{'},\pi^{*}),s^{'})$. And since
$\pi_{2}^{\circ}$ is no more than $\frac{h}{2|\Theta|^{2}}$ away
from $\pi_{2},$ this would show $\pi_{2}(\hat{a}|s^{'})<h$. 

By construction $\pi^{\circ}$ is closer than $\epsilon_{\theta^{'}}$
to $\pi^{*}$, and furthermore $\delta^{\circ}\ge\bar{\delta}_{\theta^{'}}(\underline{n}C)$
and $\gamma^{\circ}\ge\bar{\gamma}_{\theta^{'}}(\underline{n}C,\delta^{\circ})$.
By Lemma \ref{lem:nondom_message}, $\pi_{1}^{\circ}(s^{'}|\theta^{'})\ge\underline{n}C(1-\gamma^{\circ})$.
At the same time, $\pi_{1}^{\circ}=\mathscr{R}_{1}[\pi_{2}^{\circ}]$
and $\theta^{'}\succ_{s^{'}}\theta^{''}$, so Lemma \ref{lem:compatible_comonotonic}
implies that $\pi_{1}^{\circ}(s^{'}|\theta^{'})\ge\pi_{1}^{\circ}(s^{'}|\theta^{''})$.
Turning to the receiver side, $\pi_{2}^{\circ}=\mathscr{R}_{2}[\pi_{1}^{\circ}]$
with $\pi_{^{\circ}1}$ satisfying the conditions of Lemma \ref{lem:receiver_learning}
associated with $\epsilon=\frac{h}{2|\Theta|^{2}}$ and $\gamma^{\circ}\ge\underline{\gamma}$.
Therefore, we conclude 
\[
\pi_{2}^{\circ}(\text{BR}(P_{\theta^{'}\triangleright\theta^{''}},s^{'})\ |\ s^{'})\ge1-\frac{1}{\underline{n}}-\frac{h}{2|\Theta|^{2}}.
\]
But by construction of $\underline{n}$ in Equation (\ref{eq:frac_old}),
$1-\frac{1}{\underline{n}}>1-\frac{h}{2|\Theta|^{2}}$. So the LHS
is at least $1-\frac{h}{|\Theta|^{2}}$, as desired. 
\end{proof}
\begin{rem}
\label{rem:general_learning_model} More generally, consider \emph{any}
model for our populations of agents with geometrically distributed
lifetimes that generates aggregate response functions $\mathscr{R}_{1}$
and $\mathscr{R}_{2}$. Defining the steady states under $(g,\delta,\gamma)$
as the strategy profiles $\pi^{*}$ such that $\mathscr{R}_{1}^{g,\delta,\gamma}(\pi_{2}^{*})=\pi_{1}^{*}$
and $\mathscr{R}_{2}^{g,\delta,\gamma}(\pi_{1}^{*})=\pi_{2}^{*}$,
the proof of Theorem \ref{thm:PS_is_compatible} applies to the patiently
stable profiles of the new learning model provided that $\mathscr{R}_{1}$
satisfies the conclusion of Lemma \ref{lem:compatible_comonotonic},
$\mathscr{R}_{2}$ satisfies the conclusion of Lemma \ref{lem:receiver_learning},
and Lemma \ref{lem:nondom_message} is valid for $(\theta^{'},s^{'})$
pairs such that $\theta^{'}\succ_{s^{'}}\theta^{''}$ for at least
one type $\theta^{''}$and $\text{ }\theta^{'}\in J(s^{'},\pi^{*})$.
\end{rem}
We outline two such more general learning models below. (The proof
is in the Online Appendix.) 
\begin{cor}
\label{cor:1} With either of the following modifications of the steady-state
learning model from Section \ref{sec:Model}, every patiently stable
strategy profile still satisfies the compatibility criterion. 
\begin{enumerate}
\item \textbf{Heterogeneous priors}. There is a finite collection of regular
sender priors $\{g_{1,k}\}_{k=1}^{n}$ and a finite collection of
regular receiver priors $\{g_{2,k}\}_{k=1}^{n}$. Upon birth, an agent
is endowed with a random prior, where the distributions over priors
are $\mu_{1}$ and $\mu_{2}$ for senders and receivers. An agent's
prior is independent of her payoff type, and furthermore no one ever
observes another person's prior. 
\item \textbf{Social learning}. Suppose $1-\alpha$ fraction of the senders
are ``normal learners'' as described in Section \ref{sec:Model},
but the remaining $0<\alpha<1$ fraction are ``social learners.''
At the end of each period, a social learner can observe the extensive-form
strategies of her matched receiver and of $c>0$ other matches sampled
uniformly at random. Each sender knows whether she is a normal learner
or a social learner upon birth, which is uncorrelated with her payoff
type. Receivers cannot distinguish between the two kinds of senders. 
\end{enumerate}
\end{cor}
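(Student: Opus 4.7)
My plan is to invoke Remark \ref{rem:general_learning_model}, so it suffices to verify its three conditions, i.e.\ that the appropriate aggregate response functions $\mathscr{R}_1, \mathscr{R}_2$ in each of the two generalized models satisfy the conclusions of Lemma \ref{prop:compatible_comonotonic}, Lemma \ref{prop:receiver_learning}, and Lemma \ref{lem:nondom_message}. I will handle the two models separately, and in each case reduce to the corresponding result already proved for the baseline learning model.

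For the heterogeneous-priors model, the aggregate responses decompose as finite convex combinations: $\mathscr{R}_1[\pi_2](s|\theta) = \sum_k \mu_1(g_{1,k})\,\mathscr{R}_1^{g_{1,k},\delta,\gamma}[\pi_2](s|\theta)$, and similarly for $\mathscr{R}_2$, since each prior-$k$ subpopulation is anonymous and so faces the same matching distribution. Condition (1) is immediate because Lemma \ref{prop:compatible_comonotonic} applies to each summand, and the inequality $\mathscr{R}_1^{g_{1,k}}[\pi_2](s'|\theta') \ge \mathscr{R}_1^{g_{1,k}}[\pi_2](s'|\theta'')$ is preserved under convex combinations with positive weights. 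For condition (2), apply Lemma \ref{prop:receiver_learning} separately to each prior $g_{2,k}$ to obtain constants $C_k$ and $\underline{\gamma}_k$, then take $C = \max_k C_k$ and $\underline{\gamma} = \max_k \underline{\gamma}_k$: each subpopulation achieves the bound $1 - 1/n - \epsilon$, and the mixture inherits it. For condition (3), apply Lemma \ref{lem:nondom_message} to each $g_{1,k}$ to obtain $\epsilon_k, \delta_k(N), \gamma_k(N,\delta)$; taking the minimum $\epsilon$ and maximum $\delta(N), \gamma(N,\delta)$ across the finite collection, each subpopulation of type $\theta'$ plays $s'$ at rate at least $(1-\gamma)N$, so the weighted aggregate is bounded below by the same since weights sum to one (or by any such bound rescaled by $\min_k \mu_1(g_{1,k}) > 0$).

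For the social-learning model, receivers are still normal learners, so condition (2) follows directly from Lemma \ref{prop:receiver_learning}. Condition (3) follows by restricting attention to the mass $1 - \alpha > 0$ of normal-learner senders: Lemma \ref{lem:nondom_message} applied to that subpopulation gives a $\theta'$-frequency of $s'$ bounded below by $(1-\gamma)N/(1-\alpha)$ after rescaling, so the corresponding condition of Remark \ref{rem:general_learning_model} holds with $N$ replaced by $(1-\alpha)N$, which is fine since the lemma's $N$ is arbitrary. The main obstacle is condition (1) for social learners, whose problem is \emph{not} a multi-armed bandit, because observing $c$ other matches per period breaks the independence-across-arms property underlying the Gittins-index proof of Lemma \ref{prop:compatible_comonotonic}.

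To handle this, I would exploit the fact that social learners' information is at least as rich as normal learners' own-experience data, combined with the observation that as $\gamma \to 1$ the typical social learner accumulates $\Theta(1/(1-\gamma))$ i.i.d.\ observations of full extensive-form receiver strategies, so by a Diaconis-Freedman argument (cf.\ the proof sketch of Proposition \ref{thm:fixed_delta}) her posterior on $\pi_2$ concentrates and her optimal policy reduces to a myopic best response to $\pi_2$. At such beliefs, the definition of $\theta' \succ_{s'} \theta''$ directly yields $\sigma_{\theta'}(y) = s'$ whenever $\sigma_{\theta''}(y) = s'$, after coupling the two types to the same sequence of social and personal observations (both types face the same aggregate distribution of other-match extensive-form strategies, so this coupling is well defined). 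The remaining mass of ``young'' social learners with unconcentrated beliefs is of order $1 - \gamma$, which vanishes in the patient-stability limit; combined with the normal-learner contribution (which already satisfies condition (1) by Lemma \ref{prop:compatible_comonotonic}), the aggregate inequality $\mathscr{R}_1[\pi_2](s'|\theta') \ge \mathscr{R}_1[\pi_2](s'|\theta'')$ holds in the limit, which is all that is needed to run the proof of Theorem \ref{thm:PS_is_compatible}. The hardest technical step is making the coupling for the young-social-learner phase work uniformly in $\gamma$, but one can sidestep this by noting that the proof of Theorem \ref{thm:PS_is_compatible} only invokes the conclusion of Lemma \ref{prop:compatible_comonotonic} at the limiting steady state $\pi^*$, so the $o(1)$ contribution of non-converged social learners is absorbed into the same $\epsilon$-slack already present in the construction of $\pi^\circ$ approximating $\pi^*$.
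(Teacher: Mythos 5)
Your treatment of the heterogeneous-priors model matches the paper's: decompose $\mathscr{R}_1$ and $\mathscr{R}_2$ as finite convex combinations over the prior-$k$ subpopulations, apply each of Lemmas \ref{prop:compatible_comonotonic}, \ref{prop:receiver_learning}, and \ref{lem:nondom_message} componentwise, and take the appropriate max/min of the constants. Likewise, for the social-learning model your handling of conditions (2) and (3) of Remark \ref{rem:general_learning_model} (receivers unchanged; rescale $N$ by $1-\alpha$ for the normal-learner subpopulation) is exactly the paper's argument.

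However, your argument for condition (1) in the social-learning model has a genuine gap, and it stems from missing the key simplification. A social learner observes the \emph{extensive-form strategy} of her matched receiver each period, i.e.\ the full map from signals to actions, regardless of which signal she herself sends. Her information is therefore independent of her own action, there is no option value to experimentation, and her optimal policy is the \emph{myopic} best response to her current posterior at \emph{every} history --- not merely asymptotically once beliefs concentrate. Coupling types $\theta'$ and $\theta''$ to the same observation sequence (valid precisely because observations do not depend on own play), both types hold the same belief $\nu$ at every age, and the definition of $\theta'\succ_{s'}\theta''$ applied to the induced mean strategy $\bar{\pi}_2^{\nu}$ gives that $\theta'$ plays $s'$ whenever $\theta''$ does, yielding the \emph{exact} inequality $\mathscr{R}_1^{\bullet}[\pi_2](s'|\theta')\ge\mathscr{R}_1^{\bullet}[\pi_2](s'|\theta'')$ for all $\gamma,\delta$. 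Your concentration-plus-$o(1)$ route is not only unnecessary but does not suffice as stated: the conclusion of Lemma \ref{prop:compatible_comonotonic} is used in Step 3 of the proof of Theorem \ref{thm:PS_is_compatible} to verify the hypothesis $\pi_1(s'|\theta')\ge\pi_1(s'|\theta'')$ of Lemma \ref{prop:receiver_learning} at the fixed steady state $\pi^{\circ}$ with $\gamma^{\circ}<1$, and that hypothesis (inherited from Theorem 2 of \citet*{fudenberg_he_imhof_2016}) is an exact inequality. The $\epsilon$-slack in the construction of $\pi^{\circ}$ controls its distance to $\pi^*$; it does not license replacing the comonotonicity hypothesis by an approximate version, so absorbing the non-converged social learners' contribution there would require reproving Lemma \ref{prop:receiver_learning} under a weakened hypothesis.
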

\setcounter{example}{0}
\begin{example}
[Continued] The beer-quiche game of Example \ref{exa:beer-quiche}
has two components of Nash equilibria: ``beer-pooling equilibria''
where both types play \textbf{Beer} with probability 1, and ``quiche-pooling
equilibria'' where both types play \textbf{Quiche} with probability
1. In a quiche-pooling equilibrium $\pi^{*}$, type $\theta_{\text{strong}}$'s
equilibrium payoff is 2, so $\theta_{\text{strong}}\in J(\text{\textbf{Beer}},\pi^{*})$
since $\theta_{\text{strong}}$'s highest possible payoff under \textbf{Beer}
is 3, and we have already shown that $\text{\ensuremath{\theta}}_{\text{strong}}\succ_{\text{\textbf{Beer}}}\theta_{\text{weak}}$.
So, 

\[
P(\text{\textbf{Beer}},\pi^{*})=\left\{ p\in\Delta(\Theta):\frac{p(\text{\ensuremath{\theta}}_{\text{weak}})}{p(\text{\ensuremath{\theta}}_{\text{strong}})}\le\frac{\lambda(\text{\ensuremath{\theta}}_{\text{weak}})}{\text{\ensuremath{\lambda}(\ensuremath{\theta}}_{\text{strong}})}=1/9\right\} .
\]

\textbf{Fight} is not a best response after \textbf{Beer} to any such
belief, so equilibria in which \textbf{Fight} occurs with positive
probability after \textbf{Beer} do not satisfy the compatibility criterion,
and thus no quiche-pooling equilibrium is patiently stable. Since
the set of patiently stable outcomes is a nonempty subset of the set
of Nash equilibria, pooling on beer is the unique patiently stable
outcome.

By Corollary \ref{cor:1}, quiche-pooling equilibria are still not
patiently stable in more general learning models involving either
heterogeneous priors or social learners.\hfill{} $\blacklozenge$
\end{example}
\setcounter{example}{2}

\subsection{Patient Stability and Equilibrium Dominance }

In generic signaling games, equilibria where the receiver plays a
pure strategy must satisfy a stronger condition than the compatibility
criterion to be patiently stable. 
\begin{defn}
Let 
\[
\widetilde{J}(s,\pi^{*})\coloneqq\left\{ \theta\in\Theta:\underset{a\in A}{\max}\ u_{1}(\theta,s,a)\geq u_{1}(\theta;\pi^{*})\right\} .
\]

If $\widetilde{J}(s^{'},\pi^{*})$ is nonempty, define the \emph{strongly
admissible beliefs at signal $s^{'}$} \emph{under profile} $\pi^{*}$
to be 

\[
\tilde{P}(s^{'},\pi^{*})\coloneqq\text{\ensuremath{\Delta(\widetilde{J}(s^{'},\pi^{*}))} }\bigcap\left\{ P_{\theta^{'}\triangleright\theta^{''}}:\theta^{'}\succ_{s^{'}}\theta^{''}\right\} 
\]

where $P_{\theta^{'}\triangleright\theta^{''}}$ is defined in Equation
(\ref{eq:odds_ratio_P}). Otherwise, define $\tilde{P}(s^{'},\pi^{*}):=\Delta(\Theta)$. 
\end{defn}
Here, $\widetilde{J}(s,\pi^{*})$ is the set of types for which \emph{some}
response to signal $s$ is at least as good as their equilibrium payoff
under $\pi^{*}$ \textemdash{} that is, the set of types for whom
$s$ is not equilibrium dominated in the sense of \citet{cho_signaling_1987}.
Note that $\widetilde{P},$ unlike $P,$ assigns probability 0 to
equilibrium-dominated types, which is the belief restriction of the
Intuitive Criterion. 
\begin{defn}
A Nash equilibrium $\pi^{*}$ is \emph{on-path strict for the receiver}
if for every on-path signal $s^{*},$ $\pi_{2}(a^{*}|s^{*})=1$ for
some $a^{*}\in A$ and $u_{2}(s^{*},a^{*},\pi_{1})>\max_{a\ne a^{*}}u_{2}(s^{*},a,\pi_{1})$. 
\end{defn}
Of course, the receiver cannot have strict ex ante preferences over
play at unreached information sets; this condition is called ``on-path
strict'' because it places no restrictions on the receiver's incentives
after off-path signals. In generic signaling games, all pure-strategy
equilibria are on-path strict for the receiver, but the same is not
true for mixed-strategy equilibria. 
\begin{defn}
A strategy profile $\pi^{*}$ satisfies the \emph{strong compatibility
criterion} if at every signal $s^{'}$ we have

\[
\pi_{2}^{*}(\cdot|s^{'})\in\Delta(\text{BR}(\widetilde{P}(s^{'},\pi^{*}),s^{'})).
\]
\end{defn}
It is immediate that the strong compatibility criterion implies the
compatibility criterion, since it places more stringent restrictions
on the receiver's behavior. It is also immediate that the strong compatibility
criterion implies the Intuitive Criterion.
\begin{thm}
\label{thm:eqm_dominated_types} Suppose $\pi^{*}$ is on-path strict
for the receiver and patiently stable. Then it satisfies the strong
compatibility criterion. 
\end{thm}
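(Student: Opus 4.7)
The plan is to extend the proof of Theorem \ref{thm:PS_is_compatible} by adding an argument that handles the two strengthenings contained in the strong compatibility criterion: the posterior support restriction to $\widetilde{J}(s',\pi^*)$, and the odds-ratio restriction for pairs $\theta' \succ_{s'} \theta''$ with $\theta'$ not necessarily in $J(s',\pi^*)$. The new ingredient is a bound on the steady-state frequency with which an equilibrium-dominated type plays $s'$: for $\theta \notin \widetilde{J}(s',\pi^*)$, any sufficiently close approximating steady state $\pi$ satisfies $\pi_1(s'|\theta) \leq (1-\gamma)K$ with a constant $K$ independent of $\gamma$. The on-path strictness hypothesis is used exactly to obtain this uniform bound.

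To prove the bound, set $\eta := u_1(\theta;\pi^*) - \max_a u_1(\theta,s',a) > 0$ and let $s^{**}$ be any signal in the support of $\pi_1^*(\cdot|\theta)$; since $s^{**}$ is on-path, on-path strictness says $\pi_2^*(\cdot|s^{**})$ is the unique strict best response. The Gittins index of $s'$ for type $\theta$ is bounded above by $u_1(\theta;\pi^*) - \eta$ under every belief, since it cannot exceed the maximum possible one-shot reward. In every $\pi$ close to $\pi^*$ in $\ell_1$ norm, strictness forces almost all receivers to respond to $s^{**}$ with the pure action $\pi_2^*(\cdot|s^{**})$, so a type-$\theta$ sender who experiments with $s^{**}$ sees a nearly i.i.d.\ stream from $\pi_2^*(\cdot|s^{**})$; her posterior concentrates near this distribution, and her Gittins index for $s^{**}$ converges to $u_1(\theta;\pi^*)$ plus a vanishing option value. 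Since $g_1$ is non-doctrinaire and $\delta\gamma$ is close to one, her prior Gittins for $s^{**}$ is already at least $u_1(\theta;\pi^*) - \eta/2$, so a standard Gittins-index argument yields a constant $K = K(g_1,\delta,\eta)$ such that after at most $K$ exploratory plays of $s'$ its index is strictly dominated by that of $s^{**}$ and $\theta$ never revisits $s'$. Hence $\pi_1(s'|\theta) \leq (1-\gamma)K$.

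With the auxiliary bound in hand, the proof proceeds along the lines of Theorem \ref{thm:PS_is_compatible}. Given a sequence of approximating steady states $\pi^{(k)} \to \pi^*$, fix $s'$ and $\hat{a} \notin \text{BR}(\widetilde{P}(s',\pi^*),s')$. For each pair $\theta' \succ_{s'} \theta''$ with $\theta' \in J(s',\pi^*)$ the original argument already bounds the mass of receivers playing outside $\text{BR}(P_{\theta'\triangleright\theta''},s')$. For each $\theta \notin \widetilde{J}(s',\pi^*)$ I pair it against a witness $\theta^* \in J(s',\pi^*)$ (the generic case; when $J(s',\pi^*)$ is empty but $\widetilde{J}(s',\pi^*)$ is not, a member of $\widetilde{J} \setminus J$ supplies an analogous lower bound through its prior-driven experimentation with $s'$). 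Lemma \ref{lem:nondom_message} gives $\pi_1^{(k)}(s'|\theta^*) \geq (1-\gamma_k)N$ for arbitrarily large $N$ while the auxiliary bound gives $\pi_1^{(k)}(s'|\theta) \leq (1-\gamma_k)K$; taking $N/K$ large and applying Theorem 2 of \citet*{fudenberg_he_imhof_2016} exactly as in the proof of Lemma \ref{prop:receiver_learning} forces the mass of receivers whose posterior after $s'$ assigns non-negligible weight to $\theta$ to vanish. Summing the finitely many constraints appearing in the definition of $\widetilde{P}(s',\pi^*)$ yields $\pi_2^*(\hat{a}|s') = 0$, which is the strong compatibility criterion.

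The main obstacle is the uniformity of the constant $K$ across approximating steady states. Without on-path strictness for the receiver, arbitrarily small $\ell_1$ perturbations of $\pi_1^*$ could cause some receivers to switch their best response at an on-path signal $s^{**}$, destabilizing type $\theta$'s perceived equilibrium payoff and potentially keeping the Gittins index of $s'$ competitive with that of every on-path signal for arbitrarily long. On-path strictness supplies the buffer needed to keep receivers' on-path play unchanged across a neighborhood of $\pi^*$, and this buffer is precisely what lets the Gittins-comparison step run uniformly in $\pi$ and $\gamma$.
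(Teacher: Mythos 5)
Your architecture matches the paper's: keep the machinery of Theorem \ref{thm:PS_is_compatible} (Lemma \ref{lem:nondom_message} for the lower bound on undominated types' play of $s^{'}$, Theorem 2 of \citet*{fudenberg_he_imhof_2016} for the receivers' posterior odds), and add a new upper bound of order $(1-\gamma)$ on $\pi_{1}(s^{'}|\theta)$ for equilibrium-dominated $\theta$, with on-path strictness used to pin down receivers' on-path play across a neighborhood of $\pi^{*}$. The gap is in how you establish that upper bound. You claim a deterministic cap: type $\theta$ makes ``at most $K$ exploratory plays of $s^{'}$'' and then never revisits it, so $\pi_{1}(s^{'}|\theta)\le(1-\gamma)K$. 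This mechanism is wrong. For a patient sender the Gittins index of $s^{'}$ is \emph{always} below that of the on-path signal $s_{\theta}$ at the prior (it is capped by $v_{\theta}=\max_{a}u_{1}(\theta,s^{'},a)<u_{1}(\theta;\pi^{*})$, while the prior index of $s_{\theta}$ approaches $\max_{a}u_{1}(\theta,s_{\theta},a)\ge u_{1}(\theta;\pi^{*})$ as $\delta\gamma\to1$), so she plays $s^{'}$ only after the index of $s_{\theta}$ has fallen below $v_{\theta}$ --- a low-probability sampling event, not something that occurs after a fixed number of plays of $s^{'}$. And conditional on that event, nothing prevents her from playing $s^{'}$ for the rest of her unboundedly long life, so no deterministic (or almost-sure) bound $K$ on the number of plays of $s^{'}$ exists. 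What is true, and what the paper proves, is that the \emph{probability of ever playing $s^{'}$ at all} is $O(1-\gamma)$; this still yields $\pi_{1}(s^{'}|\theta)\le O(1-\gamma)$, but by a different route.

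Establishing that probability bound is where the work lies, and your sketch skips it. You need (i) a quantitative rate, not just ``almost all receivers'': Lemma \ref{lem:appendix_receiver_rate} shows via Hoeffding's inequality that $\pi_{2}^{(k)}(a_{\theta}|s_{\theta})\ge1-(1-\gamma_{k})N^{\text{recv}}$, so the per-period chance of a non-equilibrium response is itself $O(1-\gamma_{k})$; (ii) a maximal inequality over the sender's entire sample path (Lemma \ref{lem:appendix_gambler}, built on FL06 Lemma A.1) showing that the probability the empirical frequency of $a_{\theta}$ ever falls below the affine threshold of Lemma \ref{lem:appendix_affine_bound} is at most $G_{2}$ times that per-period chance, hence again $O(1-\gamma_{k})$; and (iii) a patience condition guaranteeing the index of $s_{\theta}$ cannot drop below $v_{\theta}$ during the first $G_{1}$ plays, where the maximal inequality does not yet bite. ``Posterior concentration'' and ``vanishing option value'' describe typical sample paths but do not deliver the required $O(1-\gamma)$ rate on the atypical ones, which is the whole point of the bound. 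Separately, your fallback for the case $J(s^{'},\pi^{*})=\emptyset\ne\widetilde{J}(s^{'},\pi^{*})$ (``prior-driven experimentation'' by an indifferent type) is an assertion rather than an argument, since Lemma \ref{lem:nondom_message} is stated only for types with a strict improvement --- though the paper itself is terse on exactly this point.
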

The proof of this theorem appears in Appendix \ref{subsec:eqm_dominated}.
The main idea is that when off-path signal $s^{'}$ is equilibrium
dominated in $\pi^{*}$ for type $\theta^{\text{D}}$ but not even
weakly equilibrium dominated for type $\theta^{\text{U}}$, type $\theta^{\text{U}}$
will experiment ``infinitely more often'' with $s^{'}$ than $\theta^{\text{D}}$
does. Indeed, we can provide an upper bound on the steady-state probability
that $\theta^{\text{D}}$ ever switches away from its equilibrium
signal $s^{*}$ after trying it for the first time\footnote{This upper bound does not apply when $\pi^{*}$ is not on-path strict
for the receiver. When $\pi^{*}$ involves the receiver strictly mixing
between several responses after $s^{*}$, some of these responses
might make $\theta^{\text{D}}$ strictly worse off than her worst
payoff after $s^{'}$, so there is non-vanishing probability that
that $\theta^{\text{D}}$ observes a large number of these bad responses
in a row and then stops playing $s^{*}$. }, which is also an upper bound on how often $\theta^{\text{D}}$ experiments
with $s^{'}$, while Lemma \ref{lem:nondom_message} provides a lower
bound for how often$\theta^{\text{U}}$ plays $s^{'}$. We show there
is a sequence of steady-state profiles $\pi^{(k)}\in\Pi^{*}(g,\delta_{k},\gamma_{k})$
with $\gamma_{k}\to1$ and $\pi^{(k)}\to\pi^{*}$ where the ratio
of the lower bound to the upper bound goes to infinity. Applying Theorem
2 of \citet*{fudenberg_he_imhof_2016}, we can then prove receivers
will infer that an $s^{'}$-sender is ``infinitely more likely''
to be $\theta^{\text{U}}$ than $\theta^{\text{D}}$, which means
receivers must assign probability 0 to $\theta^{\text{D}}$ after
$s^{'}$ in equilibrium $\pi^{*}$. 
\begin{rem}
As noted by \citet{fudenbergKreps1988} and \citet*{sobel1990fixed},
it seems ``intuitive'' that learning and rational experimentation
should lead receivers to assign probability 0 to types that are equilibrium
dominated, so it might seem surprising that this theorem needs the
additional assumption that the equilibrium is on-path strict for the
receiver. However, in our model senders start out initially uncertain
about the receivers' play, and so even types for whom a signal is
equilibrium dominated might initially experiment with it. Showing
that these experiments do not lead to ``perverse'' responses by
the receivers requires some arguments about the \emph{relative} probabilities
with which equilibrium-dominated types and non-equilibrium-dominated
types play off-path signals. When the equilibrium involves on-path
receiver randomization, a nontrivial fraction of receivers could play
an action after a type's equilibrium signal that the type finds strictly
worse than her worst payoff under an off-path signal. In this case,
we do not see how to show that the probability she ever switches away
from her equilibrium signal tends to 0 with patience, since the event
of seeing a large number of these unfavorable responses in a row has
probability bounded away from 0 even when the receiver population
plays exactly their equilibrium strategy. However, we do not have
a counterexample to show that the conclusion of the theorem fails
without on-path strictness for the receiver.
\end{rem}
\begin{example}
\label{exa:modifed_beer_quiche}In the following modified beer-quiche
game, the payoffs of fighting a type $\theta_{\text{weak}}$ who drinks
beer have been substantially increased relative to Example \ref{exa:beer-quiche},
so that \textbf{Fight} is now a best response to the prior belief
$\lambda$ after \textbf{Beer}. 

\begin{center}

\includegraphics[scale=0.3]{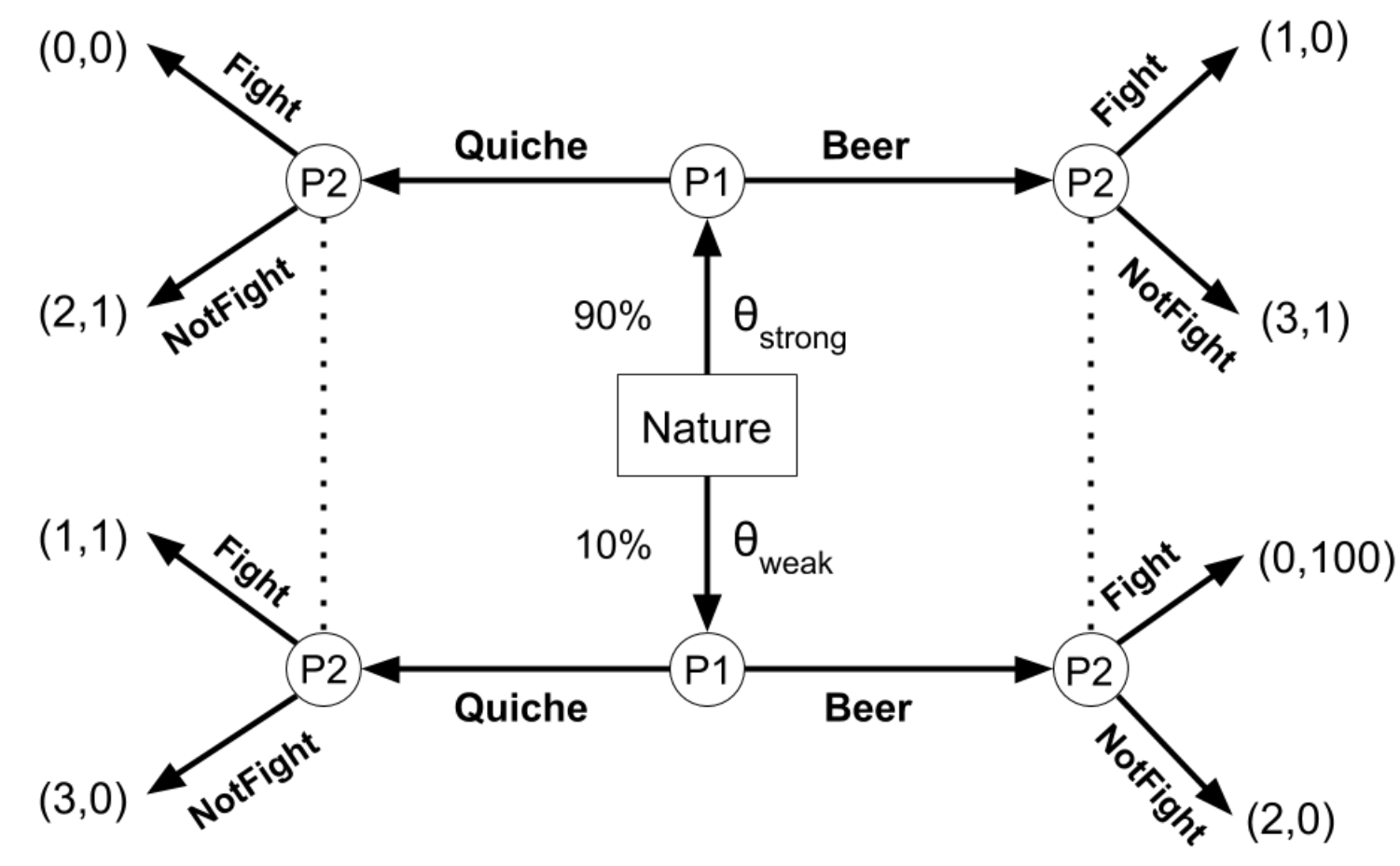}

\noindent \end{center}

Since the prior $\lambda$ is always an admissible belief in any signaling
game after any signal, the Nash equilibrium $\pi^{*}$ where both
types play \textbf{Quiche} (supported by the receiver playing \textbf{Fight}
after \textbf{Beer}) is not ruled out by the compatibility criterion,
unlike in Example \ref{exa:beer-quiche}. However, this equilibrium
is ruled out by the strong compatibility criterion. To see why, note
that this pooling equilibrium is on-path strict for the receiver,
because the receiver has a strict preference for \textbf{NotFight}
at the only on-path signal, \textbf{Quiche}. Moreover, $\pi^{*}$
does not satisfy the strong compatibility criterion, because $\widetilde{J}(\text{\textbf{Beer}},\pi^{*})=\{\theta_{\text{strong}}\}$
implies the only strongly admissible belief after \textbf{Beer} assigns
probability 1 to the sender being $\theta_{\text{strong}}$. Thus
Theorem \ref{thm:eqm_dominated_types} implies that this equilibrium
is not patiently stable.\hfill{} $\blacklozenge$
\end{example}

\section{Discussion }

Our learning model supposes that the agents have geometrically distributed
lifetimes, which is one of the reasons that the senders' optimization
problems can be solved using the Gittins index. If agents were to
have fixed finite lifetimes, as in \citet{fudenberg_steady_1993,fudenberg_superstition_2006},
their optimization problem would not be stationary, and the finite-horizon
analog of the Gittins index is only approximately optimal for the
finite-horizon multi-armed bandit problem \citep{nino2011computing}.
Applying the geometric-lifetime framework to steady-state learning
models for other classes of extensive-form games could prove fruitful,
especially for games where we need to compare the behavior of various
players or player types, and in studies of other sorts of dynamic
decisions. 

Theorem \ref{thm:index} provides a comparison between the dynamic
behavior of two agents in a geometric-lifetime bandit problem based
on their static preferences over the prizes. As an immediate application,
consider a principal-agent setting where the agent faces a multi-armed
bandit with arms $s\in S$, where $s$ leads a prize drawn from $Z_{s}$
according to some distribution. The principal knows the agent's per-period
utility function $u:\cup_{s}Z_{s}\to\mathbb{R}$, but not the agent's
beliefs over the prize distributions of different arms or agent's
discount factor. Suppose the principal observes the agent choosing
arm 1 in the first period. The principal can impose taxes and subsidies
on the different prizes and arms, changing the agent's utility function
to $\tilde{u}$. For what taxes and subsidies would the agent still
have chosen arm 1 in the first period, irrespective of her initial
beliefs and discount factor? According to Theorem \ref{thm:index},
the answer is precisely those taxes and subsidies such that arm 1
is more type-compatible with $\tilde{u}$ than $u$. 

Our results provide an upper bound on the set of patiently stable
strategy profiles in a signaling game. In \citet*{FudenbergHe2017TCE},
we provided a lower bound for the same set, as well as a sharper upper
bound under additional restrictions on the priors. But together, these
results will not give an exact characterization of patiently stable
outcomes. Nevertheless, our results do show how the theory of learning
in games provides a foundation for refining the set of equilibria
in signaling games.

In future work, we hope to investigate a learning model featuring
temporary sender types. Instead of the sender's type being assigned
at birth and fixed for life, at the start of each period each sender
takes an i.i.d. draw from $\lambda$ to discover her type for that
period. When the players are impatient, this yields different steady
states than the fixed-type model here, as noted by \citet*{dekel_learning_2004}.
This model will require different tools to analyze, since the sender's
problem becomes a restless bandit.

\bibliographystyle{ecta}
\bibliography{Gittins_eqm}

\appendix
\counterwithin{lem}{section}

\section{Appendix \textendash{} Relegated Proofs}

\subsection{\label{subsec:Proof-of-properties_of_comp_relation} Proof of Proposition
\ref{prop:properties_of_comp_relation}}

\textbf{Proposition \ref{prop:properties_of_comp_relation}}:
\begin{enumerate}
\item $\succ_{s^{'}}$ is transitive. 
\item Except when $s^{'}$ is either strictly dominant for both $\theta^{'}$
and $\theta^{''}$ or strictly dominated for both $\theta^{'}$ and
$\theta^{''}$, $\theta^{'}\succ_{s^{'}}\theta^{''}$ implies $\theta^{''}\not\succ_{s^{'}}\theta^{'}$. 
\end{enumerate}
\begin{proof}
To show (i), suppose $\theta^{'}\succ_{s^{'}}\theta^{''}$ and $\theta^{''}\succ_{s^{'}}\theta^{'''}.$
For any $\pi_{2}\in\Pi_{2}$ where $s^{'}$ is weakly optimal for
$\theta^{'''},$ it must be strictly optimal for $\theta^{''},$ hence
also strictly optimal for $\theta^{'}$. This shows $\theta^{'}\succ_{s^{'}}\theta^{'''}$. 

To establish (ii), partition the set of receiver strategies as $\Pi_{2}=\Pi_{2}^{+}\cup\Pi_{2}^{0}\cup\Pi_{2}^{-},$
where the three subsets refer to receiver strategies that make $s^{'}$
strictly better, indifferent, or strictly worse than the best alternative
signal for $\theta^{''}$. If the set $\Pi_{2}^{0}$ is nonempty,
then $\theta^{'}\succ_{s^{'}}\theta^{''}$ implies $\theta^{''}\not\succ_{s^{'}}\theta^{'}$.
This is because against any $\pi_{2}\in\Pi_{2}^{0}$, signal $s^{'}$
is strictly optimal for $\theta^{'}$ but only weakly optimal for
$\theta^{''}$. At the same time, if both $\Pi_{2}^{+}$ and $\Pi_{2}^{-}$
are nonempty, then $\Pi_{2}^{0}$ is nonempty. This is because both
$\pi_{2}\mapsto u_{1}(\theta^{''},s^{'},\pi_{2}(\cdot|s^{'}))$ and
$\pi_{2}\mapsto\max_{s^{''}\ne s^{'}}u_{1}(\theta^{''},s^{''},\pi_{2}(\cdot|s^{''}))$
are continuous functions, so for any $\pi_{2}^{+}\in\Pi_{2}^{+}$
and $\pi_{2}^{-}\in\Pi_{2}^{-},$ there exists $\alpha\in(0,1)$ so
that $\alpha\pi_{2}^{+}+(1-\alpha)\pi_{2}^{-}\in\Pi_{2}^{0}$. If
only $\Pi_{2}^{+}$ is nonempty and $\theta^{'}\succ_{s^{'}}\theta^{''}$,
then $s^{'}$ is strictly dominant for both $\theta^{'}$ and $\theta^{''}$.
If only $\Pi_{2}^{-}$ is nonempty, then we can have $\theta^{''}\succ_{s^{'}}\theta^{'}$
only when $s^{'}$ is never a weak best response for $\theta^{'}$
against any $\pi_{2}\in\Pi_{2}$. 
\end{proof}

\subsection{\label{subsec:Proof-of-Lemma-static} Proof of Lemma \ref{lem:static}}

\textbf{Lemma \ref{lem:static}}: For every signal $s$, stopping
time $\tau$, belief $\nu_{s}$, and discount factor $\beta,$ there
exists $\pi_{2,s}(\tau,\nu_{s},\beta)\in\Delta(A)$ so that for every
$\theta$,

\[
\dfrac{\mathbb{E}_{\nu_{s}}\left\{ \sum_{t=0}^{\tau-1}\beta^{t}\cdot u_{1}(\theta,s,a_{s}(t))\right\} }{\mathbb{E}_{\nu_{s}}\left\{ \sum_{t=0}^{\tau-1}\beta^{t}\right\} }=u_{1}(\theta,s,\pi_{2,s}(\tau,\nu_{s},\beta))
\]

\begin{proof}
\textbf{Step 1: Induced mixed actions. }

A belief $\nu_{s}$ and a stopping time $\tau_{s}$ together define
a stochastic process $(A_{t})_{t\ge0}$ over the space $A\cup\{\emptyset\}$,
where $A_{t}\in A$ corresponds to the receiver action seen in period
$t$ if $\tau_{s}$ has not yet stopped ($\tau_{s}>t$), and $A_{t}\coloneqq\emptyset$
if $\tau_{s}$ has stopped $(\tau_{s}\le t$). Enumerating $A=\{a_{1},...,a_{n}\}$,
we write $p_{t,i}\coloneqq\mathbb{P}_{\nu_{s}}\left[A_{t}=a_{i}\right]$
for $1\le i\le n$ to record the probability of seeing receiver action
$a_{i}$ in period $t$ and $p_{t,0}\coloneqq\mathbb{P}_{\nu_{s}}[A_{t}=\emptyset]=\mathbb{P}_{\nu_{s}}[\tau_{s}\le t]$
for the probability of seeing no receiver action in period $t$ due
to $\tau_{s}$ having stopped. 

Given $\nu_{s}$ and $\tau_{s}$, we define the\emph{ induced mixed
actions after signal $s$,} $\pi_{2,s}(\nu_{s},\tau_{s},\beta)\in\Delta(A)$
by: 

\[
\pi_{2,s}(\nu_{s},\tau_{s},\beta)(a)\coloneqq\frac{\sum_{t=0}^{\infty}\beta^{t}p_{t,i}}{\sum_{t=0}^{\infty}\beta^{t}(1-p_{t,0})}\ \text{for \ensuremath{i} such that \ensuremath{a=a_{i}}}.
\]

As $\sum_{i=1}^{n}p_{t,i}=1-p_{t,0}$ for each $t\ge0$, it is clear
that $\pi_{2,s}(\nu_{s},\tau_{s},\beta)$ puts nonnegative weights
on actions in $A$ that sum to 1, so $\pi_{2,s}(\nu_{s},\tau_{s},\beta)\in\Delta(A)$
may indeed be viewed as a mixture over receiver actions. 

\bigskip{}

\textbf{Step 2: Induced mixed actions and per-period payoff. }

We now show that for any $\beta$ and any stopping time $\tau_{s}$
for signal $s$, the normalized payoff in the stopping problem is
equal to the utility of playing $s$ against $\pi_{2,s}(\nu_{s},\tau_{s},\beta)$
for one period, that is, 
\[
u_{1}(\theta,s,\pi_{2,s}(\nu_{s},\tau_{s},\beta))=\mathbb{E}_{\nu_{s}}\left\{ \sum_{t=0}^{\tau_{s}-1}\beta^{t}\cdot u_{1}(\theta,s,a_{s}(t))\right\} \ /\ \mathbb{E}_{\nu_{s}}\left\{ \sum_{t=0}^{\tau_{s}-1}\beta^{t}\right\} .
\]
 To see why this is true, rewrite the denominator of the right-hand
side as
\[
\mathbb{E}_{\nu_{s}}\left\{ \sum_{t=0}^{\tau_{s}-1}\beta^{t}\right\} =\mathbb{E}_{\nu_{s}}\left\{ \sum_{t=0}^{\infty}\left[1_{\tau_{s}>t}\right]\cdot\beta^{t}\right\} =\sum_{t=0}^{\infty}\beta^{t}\cdot\mathbb{P}_{\nu_{s}}\left[\tau_{s}>t\right]=\sum_{t=0}^{\infty}\beta^{t}(1-p_{t,0}),
\]
 and rewrite the numerator as 
\begin{align*}
\mathbb{E}_{\nu_{s}}\left\{ \sum_{t=0}^{\tau_{s}-1}\beta^{t}\cdot u_{1}(\theta,s,a_{s}(t))\right\}  & =\sum_{t=0}^{\infty}\beta^{t}\cdot\left(\underset{\text{get 0 if already stopped}}{\underbrace{p_{t,0}\cdot0}}+\underset{\text{else, \ensuremath{a_{s}(t)} distributed as \ensuremath{(p_{t,i})}}}{\underbrace{\sum_{i=1}^{n}p_{t,i}\cdot u_{1}(\theta,s,a_{i})}}\right)\\
 & =\sum_{i=1}^{n}\left(\sum_{t=0}^{\infty}\beta^{t}\cdot p_{t,i}\right)\cdot u_{1}(\theta,s,a_{i}).
\end{align*}
So overall, we get as desired:

\begin{align*}
\mathbb{E}_{\nu_{s}}\left\{ \sum_{t=0}^{\tau_{s}-1}\beta^{t}\cdot u_{1}(\theta,s,a_{s}(t))\right\} \ /\ \mathbb{E}_{\nu_{s}}\left\{ \sum_{t=0}^{\tau_{s}-1}\beta^{t}\right\}  & =\sum_{i=1}^{n}\left[\frac{\left(\sum_{t=0}^{\infty}\beta^{t}\cdot p_{t,i}\right)}{\sum_{t=0}^{\infty}\beta^{t}(1-p_{t,0})}\right]\cdot u_{1}(\theta,s,a_{i})\\
 & =u_{1}(\theta,s,\pi_{2,s}(\nu_{s},\tau_{s},\beta)).
\end{align*}
\end{proof}

\subsection{\label{subsec:Proof-of-Lemma-Receiver} Proof of Lemma \ref{lem:receiver_learning}}

\textbf{Lemma \ref{lem:receiver_learning}}: Let regular prior $g_{2}$,
types $\theta^{'},\theta^{''}$, and signal $s^{'}$ be fixed. For
every $\epsilon>0$, there exists $C>0$ and $\underline{\gamma}<1$
so that for any $0\le\delta<1$, $\underline{\gamma}\le\gamma<1$,
and $n\ge1$, if $\pi_{1}(s^{'}|\theta^{'})\ge\pi_{1}(s^{'}|\theta^{''})$
and $\pi_{1}(s^{'}|\theta^{'})\ge(1-\gamma)nC$, then 
\[
\mathscr{R}_{2}[\pi_{1}](\text{BR}(P_{\theta^{'}\triangleright\theta^{''}},s^{'})\ |\ s^{'})\ge1-\frac{1}{n}-\epsilon.
\]

We invoke Theorem 2 of \citet*{fudenberg_he_imhof_2016}, which in
our setting says:
\begin{quote}
\emph{Let regular prior $g_{2}$ and signal $s^{'}$ be fixed. Let
$0<\epsilon,h<1$. There exists $C$ such that whenever $\pi_{1}(s^{'}|\theta^{'})\ge\pi_{1}(s^{'}|\theta^{''})$
and $t\cdot\pi_{1}(s^{'}|\theta^{'})\ge C$, we get}

\emph{
\[
\psi_{2}^{\pi_{1}}\left(y_{2}\in Y_{2}[t]:\frac{p(\theta^{''}|s^{'};y_{2})}{p(\theta^{'}|s^{'};y_{2})}\le\frac{1}{1-h}\cdot\frac{\lambda(\theta^{''})}{\lambda(\theta^{'})}\right)/\psi_{2}^{\pi_{1}}(Y_{2}[t])\ge1-\epsilon
\]
}

\emph{where $p(\theta|s;y_{2})$ refers to the conditional probability
that a sender of $s$ is type $\theta$ according to the posterior
belief induced by history $y_{2}$. }
\end{quote}
That is, if at age $t$ a receiver would have observed in expectation
$C$ instances of type $\theta^{'}$ sending $s^{'}$, then the belief
of at least $1-\epsilon$ fraction of age $t$ receivers (essentially)
falls in $P_{\theta^{'}\triangleright\theta^{''}}$ after seeing the
signal $s^{'}$. The proof of Lemma \ref{lem:receiver_learning} calculates
what fraction of receivers meets this ``age requirement.'' 
\begin{proof}
We will show the following stronger result:

Let regular prior $g_{2}$, types $\theta^{'},\theta^{''}$, and signal
$s^{'}$ be fixed. For every $\epsilon>0$, there exists $C>0$ so
that for any $0\le\delta,\gamma<1$ and $n\ge1$, if $\pi_{1}(s^{'}|\theta^{'})\ge\pi_{1}(s^{'}|\theta^{''})$
and $\pi_{1}(s^{'}|\theta^{'})\ge(1-\gamma)nC$, then 
\[
\mathscr{R}_{2}[\pi_{1}](\text{BR}(P_{\theta^{'}\triangleright\theta^{''}},s^{'})\ |\ s^{'})\ge\gamma^{\left\lceil \frac{1}{n(1-\gamma)}\right\rceil }-\epsilon
\]

The lemma follows because we may pick a large enough $\underline{\gamma}<1$
so that $\gamma^{\left\lceil \frac{1}{n(1-\gamma)}\right\rceil }>1-\frac{1}{n}$
for all $n\ge1$ and $\gamma\ge\underline{\gamma}$. 

For each $0<h<1$, define $P_{\theta^{'}\triangleright\theta^{''}}^{h}:=\left\{ p\in\Delta(\Theta):\frac{p(\theta^{''})}{p(\theta^{'})}\le\frac{1}{1-h}\cdot\frac{\lambda(\theta^{''})}{\lambda(\theta^{'})}\right\} ,$with
the convention that $\frac{0}{0}=0.$ Then it is clear that each $P_{\theta^{'}\triangleright\theta^{''}}^{h}$,
as well as $P_{\theta^{'}\triangleright\theta^{''}}$ itself, is a
closed subset of $\Delta(\Theta)$. Also, $P_{\theta^{'}\triangleright\theta^{''}}^{h}\to P_{\theta^{'}\triangleright\theta^{''}}$
as $h\to0$. 

Fix action $a\in A$. If for all $\bar{h}>0$ there exists some $0<h\le\bar{h}$
so that $a\in\text{BR}(P_{\theta^{'}\triangleright\theta^{''}}^{h},s^{'})$,
then $a\in\text{BR}(P_{\theta^{'}\triangleright\theta^{''}},s^{'})$
also due to best-response correspondence having a closed graph. This
means that, for each $a\notin\text{BR}(P_{\theta^{'}\triangleright\theta^{''}},s^{'})$,
there exists $\bar{h}_{a}>0$ so that $a\notin\text{BR}(P_{\theta^{'}\triangleright\theta^{''}}^{h},s^{'})$
whenever $0<h\le\bar{h}_{a}$. Let $\bar{h}:=\min_{a\notin\text{BR}(P_{\theta^{'}\triangleright\theta^{''}},s^{'})}\bar{h}_{a}$.
Let $\epsilon>0$ be given and apply Theorem 2 of \citet*{fudenberg_he_imhof_2016}
with $\epsilon$ and $\bar{h}$ to find constant $C$. 

When $\pi_{1}(s^{'}|\theta^{'})\ge\pi_{1}(s^{'}|\theta^{''})$ and
$\pi_{1}(s^{'}|\theta^{'})\ge(1-\gamma)nC$, consider an age $t$
receiver for $t\ge\left\lceil \frac{1}{n(1-\gamma)}\right\rceil $.
Since $t\cdot\pi_{1}(s^{'}|\theta^{'})\ge C,$ Theorem 2 of \citet*{fudenberg_he_imhof_2016}
implies there is probability at least $1-\epsilon$ this receiver's
belief about the types who send $s^{'}$ falls in $P_{\theta^{'}\triangleright\theta^{''}}^{\bar{h}}$.
By construction of $\bar{h},$ $\text{BR}(P_{\theta^{'}\triangleright\theta^{''}}^{\bar{h}},s^{'})=\text{BR}(P_{\theta^{'}\triangleright\theta^{''}},s^{'})$,
so $1-\epsilon$ of age $t$ receivers have a history $y_{2}$ where
$\sigma_{2}(y_{2})(s^{'})\in\text{BR}(P_{\theta^{'}\triangleright\theta^{''}},s^{'})$. 

Since agents survive between periods with probability $\gamma,$ the
mass of the receiver population aged $\left\lceil \frac{1}{n(1-\gamma)}\right\rceil $
or older is $(1-\gamma)\cdot\sum_{t=\left\lceil \frac{1}{n(1-\gamma)}\right\rceil }^{\infty}\gamma^{t}=\gamma^{\left\lceil \frac{1}{n(1-\gamma)}\right\rceil }$.This
shows 
\[
\mathscr{R}_{2}[\pi_{1}](\text{BR}(P_{\theta^{'}\triangleright\theta^{''}},s^{'})\ |\ s^{'})\ge\gamma^{\frac{1}{n(1-\gamma)}}\cdot(1-\epsilon)\ge\gamma^{\left\lceil \frac{1}{n(1-\gamma)}\right\rceil }-\epsilon
\]
as desired. 
\end{proof}

\subsection{\label{subsec:pf_mutual_AR} Proof of Proposition \ref{prop:mutual_AR}}

\textbf{Proposition \ref{prop:mutual_AR}}: $\pi^{*}\in\Pi^{*}(g,\delta,\gamma)$
if and only if $\mathscr{R}_{1}^{g,\delta,\gamma}[\pi_{2}^{*}]=\pi_{1}^{*}$
and $\mathscr{R}_{2}^{g,\delta,\gamma}[\pi_{1}^{*}]=\pi_{2}^{*}$. 
\begin{proof}
\textbf{If}: Suppose $\pi^{*}$ is such that $\mathscr{R}_{1}[\pi_{2}^{*}]=\pi_{1}^{*}$
and $\mathscr{R}_{2}[\pi_{1}^{*}]=\pi_{2}^{*}$. Consider the state
$\psi^{*}$ defined as $\psi_{\theta}^{*}:=\psi_{\theta}^{\pi_{2}^{*}}$
for each $\theta$ and $\psi_{2}^{*}:=\psi_{2}^{\pi_{1}^{*}}$. Then,
by construction $\sigma_{\theta}(\psi_{\theta}^{\pi_{2}^{*}})=\pi_{\theta}^{*}$
and $\sigma_{2}(\psi_{2}^{\pi_{1}^{*}})=\pi_{2}^{*}$, so the state
$\psi^{*}$ gives rise to $\pi^{*}$. To verify that $\psi^{*}$ is
a steady state, we can expand by the definition of $\psi_{\theta}^{\pi_{2}^{*}}$,
\[
f_{\theta}(\psi_{\theta}^{\pi_{2}^{*}},\pi_{2}^{*})=f_{\theta}\left(\lim_{T\to\infty}f_{\theta}^{T}(\tilde{\psi}_{\theta},\pi_{2}^{*}),\pi_{2}^{*}\right),
\]
where $\tilde{\psi}_{\theta}$ is any arbitrary initial state. 

Since $f_{\theta}$ is continuous\footnote{This is implied by Step 1 of the proof of Proposition \ref{thm:existence}
in the Online Appendix, which shows $f_{\theta}$ is continuous at
all states that assign $(1-\gamma)\gamma^{t}$ mass to the set of
length-$t$ histories.} at $\psi_{\theta}^{\pi_{2}^{*}}$ in $L_{1}$ distance defined in
Footnote \ref{fn:l1_norm}, $\lim_{T\to\infty}f_{\theta}^{T}(\tilde{\psi}_{\theta},\pi_{2}^{*})=\psi_{\theta}^{\pi_{2}^{*}}$
is a fixed point of $f_{\theta}(\cdot,\pi_{2}^{*}).$ To see this,
write $\psi_{\theta}^{(T)}:=f_{\theta}^{T}(\tilde{\psi}_{\theta},\pi_{2}^{*})$
for each $T\ge1$ and let $\epsilon>0$ be given. Continuity of $f_{\theta}$
implies there is $\zeta>0$ so that $d(f_{\theta}(\psi_{\theta}^{\pi_{2}^{*}},\pi_{2}^{*}),f_{\theta}(\psi_{\theta}^{(T)},\pi_{2}^{*}))<\epsilon/2$
whenever $d(\psi_{\theta}^{\pi_{2}^{*}},\psi_{\theta}^{(T)})<\zeta$.
So pick a large enough $T$ so that $d(\psi_{\theta}^{\pi_{2}^{*}},\psi_{\theta}^{(T)})<\zeta$
and also $d(\psi_{\theta}^{\pi_{2}^{*}},\psi_{\theta}^{(T+1)})<\epsilon/2$.
Then 
\[
d(f_{\theta}(\psi_{\theta}^{\pi_{2}^{*}},\pi_{2}^{*}),\psi_{\theta}^{\pi_{2}^{*}})\le d(f_{\theta}(\psi_{\theta}^{\pi_{2}^{*}},\pi_{2}^{*}),f_{\theta}(\psi_{\theta}^{(T)},\pi_{2}^{*}))+d(\psi_{\theta}^{(T+1)},\psi_{\theta}^{\pi_{2}^{*}})<\epsilon/2+\epsilon/2.
\]

Since $\epsilon>0$ was arbitrary, we have shown that $f_{\theta}(\psi_{\theta}^{\pi_{2}^{*}},\pi_{2}^{*})=\psi_{\theta}^{\pi_{2}^{*}}$
and a similar argument shows $f_{2}(\psi_{2}^{\pi_{1}^{*}},\pi_{1}^{*})=\psi_{2}^{\pi_{1}^{*}}.$
This tells us $\psi^{*}=((\psi_{\theta}^{\pi_{2}^{*}})_{\theta\in\Theta},\psi_{2}^{\pi_{1}^{*}})$
is a steady state. 

\textbf{Only if}: Conversely, suppose $\pi^{*}\in\Pi^{*}(g,\delta,\gamma).$
Then there exists a steady state $\psi^{*}\in\Psi^{*}(g,\delta,\gamma)$
such that $\pi^{*}=\sigma(\psi^{*})$. This means $f_{\theta}(\psi_{\theta}^{*},\pi_{2}^{*})=\psi_{\theta}^{*}$,
so iterating shows 
\[
\psi_{\theta}^{\pi_{2}^{*}}:=\lim_{T\to\infty}f_{\theta}^{T}(\psi_{\theta}^{*},\pi_{2}^{*})=\psi_{\theta}^{*}.
\]
Since $\mathscr{R}_{1}[\pi_{2}^{*}](\cdot|\theta):=\sigma_{\theta}(\psi_{\theta}^{\pi_{2}^{*}})$,
the above implies $\mathscr{R}_{1}[\pi_{2}^{*}](\cdot|\theta)=\sigma_{\theta}(\psi_{\theta}^{*})=\pi_{1}^{*}(\cdot|\theta)$
by the choice of of $\psi^{*}$. We can similarly show $\mathscr{R}_{2}[\pi_{1}^{*}]=\pi_{2}^{*}$. 
\end{proof}

\subsection{\label{subsec:eqm_dominated}Proof of Theorem \ref{thm:eqm_dominated_types}}

Throughout this subsection, we will make use of the following version
of Hoeffding's inequality. 
\begin{fact*}
(Hoeffding's inequality) Suppose $X_{1},...,X_{n}$ are independent
random variables on $\mathbb{R}$ such that $a_{i}\le X_{i}\le b_{i}$
with probability 1 for each $i$. Write $S_{n}\coloneqq\sum_{i=1}^{n}X_{i}$.
Then, 
\[
\mathbb{P}\left[|S_{n}-\mathbb{E}[S_{n}]|\ge d\right]\le2\exp\left(-\frac{2d^{2}}{\sum_{i=1}^{n}(b_{i}-a_{i})^{2}}\right).
\]
\end{fact*}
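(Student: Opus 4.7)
The plan is to establish this via the classical Chernoff/moment-generating-function method, reducing to a one-variable convexity bound (Hoeffding's lemma) and then optimizing. First I would center the variables, setting $Y_i \coloneqq X_i - \mathbb{E}[X_i]$ so that $\mathbb{E}[Y_i]=0$ and $a_i - \mathbb{E}[X_i] \le Y_i \le b_i - \mathbb{E}[X_i]$ almost surely, with interval length still $b_i - a_i$. It suffices to bound $\mathbb{P}[\sum Y_i \ge d]$ and $\mathbb{P}[-\sum Y_i \ge d]$ separately and add. For the upper tail, Markov's inequality applied to the non-negative random variable $e^{\lambda \sum Y_i}$ (for any $\lambda > 0$) together with independence gives
\[
\mathbb{P}\!\left[\sum_i Y_i \ge d\right] \le e^{-\lambda d}\,\mathbb{E}\!\left[e^{\lambda \sum_i Y_i}\right] = e^{-\lambda d}\prod_{i=1}^n \mathbb{E}[e^{\lambda Y_i}].
\]

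The main technical step is \emph{Hoeffding's lemma}: for any centered random variable $Y$ with $a' \le Y \le b'$ and $\mathbb{E}[Y] = 0$, one has $\mathbb{E}[e^{\lambda Y}] \le \exp(\lambda^2 (b'-a')^2 / 8)$. I would prove it by convexity of $x\mapsto e^{\lambda x}$ on $[a',b']$, which yields pointwise
\[
e^{\lambda Y} \le \frac{b' - Y}{b' - a'} e^{\lambda a'} + \frac{Y - a'}{b' - a'} e^{\lambda b'}.
\]
Taking expectations and using $\mathbb{E}[Y] = 0$, the right-hand side becomes $\frac{b'}{b'-a'}e^{\lambda a'} - \frac{a'}{b'-a'}e^{\lambda b'}$. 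Writing $p \coloneqq -a'/(b'-a') \in [0,1]$ and $u \coloneqq \lambda(b'-a')$, this equals $\exp(\phi(u))$ with $\phi(u) \coloneqq -pu + \log(1 - p + p e^u)$. A direct computation gives $\phi(0) = 0$, $\phi'(0) = 0$, and $\phi''(u) = \frac{p(1-p)e^u}{(1-p+pe^u)^2} \le \tfrac{1}{4}$ (by AM-GM on the denominator with $q\coloneqq pe^u/(1-p+pe^u)$, since $q(1-q) \le 1/4$). A second-order Taylor expansion with remainder then yields $\phi(u) \le u^2/8 = \lambda^2(b'-a')^2/8$.

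Substituting the lemma into each factor,
\[
\mathbb{P}\!\left[\sum_i Y_i \ge d\right] \le \exp\!\left(-\lambda d + \tfrac{\lambda^2}{8}\sum_{i=1}^n(b_i-a_i)^2\right),
\]
and minimizing the right-hand side in $\lambda > 0$ (the optimum is $\lambda = 4d/\sum(b_i-a_i)^2$) produces the single-tail bound $\exp(-2d^2/\sum(b_i-a_i)^2)$. The identical argument applied to $-Y_i$ controls the lower tail; a union bound then gives the factor $2$ and the stated inequality.

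The main obstacle is the uniform estimate $\phi''(u) \le 1/4$ together with its proper use in the Taylor-remainder step: the bound must hold for all $u$ between $0$ and $\lambda(b'-a')$, not only at $u = 0$, so one should use the integral (or Lagrange) form of the remainder rather than a local expansion. Everything else is routine: centering, Markov's inequality, independence-based factorization, and optimization in $\lambda$.
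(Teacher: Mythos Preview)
Your proof is correct and is the standard Chernoff--method argument for Hoeffding's inequality. The paper, however, does not prove this statement at all: it is presented as a \emph{Fact} and simply invoked as a known result, so there is no proof in the paper to compare against. Your write-up supplies exactly the classical derivation (centering, exponential Markov, Hoeffding's lemma via the convexity bound and the $\phi''\le 1/4$ estimate, optimization in $\lambda$, and a union bound), and your remark about needing the uniform bound on $\phi''$ for the Taylor remainder is the right point of care.
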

\begin{lem}
\label{lem:appendix_receiver_rate}In strategy profile $\pi^{*}$,
suppose $s^{*}$ is on-path and $\pi_{2}^{*}(a^{*}|s^{*})=1$, where
$a^{*}$ is a strict best response to $s^{*}$ given $\pi_{1}^{*}$.
Then there exists $N\in\mathbb{R}$ so that, for any regular prior
and any sequence of steady-state strategy profiles $\pi^{(k)}\in\Pi^{*}(g,\delta_{k},\gamma_{k})$
where $\gamma_{k}\to1$,$\pi^{(k)}\to\pi^{*}$, there exists $K\in\mathbb{N}$
such that whenever $k\ge K$, we have $\pi_{2}^{(k)}(a^{*}|s^{*})\ge1-(1-\gamma_{k})\cdot N$. 
\end{lem}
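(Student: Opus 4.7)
The plan is to decompose the mass of receivers playing anything other than $a^*$ in response to $s^*$ by age cohort, and to bound each cohort using strict best-response continuity together with exponential concentration of Bayesian posteriors. Since $a^*$ strictly best-responds to the Bayesian posterior $p^*(\cdot|s^*)$ induced by $\pi_1^*$, there is an open neighborhood $\mathcal{N} \subseteq \Delta(\Theta)$ of $p^*(\cdot|s^*)$ on which $a^*$ remains the unique best response to $s^*$. For $k$ large, $\pi^{(k)} \to \pi^*$ gives (i) a uniform lower bound $q_k := \sum_\theta \lambda(\theta)\pi_1^{(k)}(s^*|\theta) \ge q > 0$, and (ii) the induced posterior $p^{(k)}(\cdot|s^*)$ lying in a compact subset of $\mathcal{N}$ bounded away from its boundary. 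Because each steady-state receiver's action after $s^*$ is a myopic best response to their posterior about $\theta$ given $s^*$, wrong play requires that posterior to escape $\mathcal{N}$.

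I would then split the receiver population by age: a cohort of age $t$ carries mass $(1-\gamma_k)\gamma_k^t$, so ages below any fixed threshold $T_0$ contribute at most $T_0(1-\gamma_k)$ to the wrong-play mass regardless of behavior. For ages $t \ge T_0$, I claim $P_t := \mathbb{P}[\text{age-}t\text{ posterior} \notin \mathcal{N}] \le e^{-ct}$ for some uniform $c>0$. This follows from two applications of Hoeffding's inequality: first, per-type match counts $N_\theta(t) \sim \text{Binomial}(t,\lambda(\theta))$ satisfy $N_\theta(t) \ge \lambda(\theta) t/2$ uniformly in $\theta$ with probability at least $1 - e^{-c_1 t}$; second, conditional on such counts the observed frequencies of $s^*$ within each type concentrate exponentially on $\pi_1^{(k)}(s^*|\theta)$, and the ``nice'' form of $g_2$ translates empirical concentration into posterior concentration (via the \citet{diaconis_uniform_1990}-style rates underlying \citet{fudenberg_he_imhof_2016}'s Theorem~2), placing the posterior inside $\mathcal{N}$ with probability at least $1 - e^{-c_2 t}$.

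Summing across cohorts then yields
\[
1 - \pi_{2}^{(k)}(a^* \mid s^*) \ \le\ \sum_{t=0}^{T_0-1}(1-\gamma_k)\gamma_k^t + \sum_{t \ge T_0}(1-\gamma_k)\gamma_k^t\, e^{-ct} \ \le\ (1-\gamma_k)\Bigl(T_0 + \tfrac{1}{1-e^{-c}}\Bigr),
\]
which proves the lemma with $N := T_0 + 1/(1-e^{-c})$. The main obstacle is securing the exponential decay $P_t \le e^{-ct}$ with constants $c$ and $T_0$ that are \emph{uniform} in $k$: as $\pi^{(k)}$ varies, so do the data-generating rates $q_k$ and the targets $p^{(k)}(\cdot|s^*)$, yet the Hoeffding constants must not degenerate. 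This is handled by choosing $K$ large enough that $q_k \ge q$ and that all $p^{(k)}(\cdot|s^*)$ lie in a fixed compact subset of $\mathcal{N}$ for $k \ge K$, so the constants depend only on $q$, the separation margin from the boundary of $\mathcal{N}$, and the regularity exponents $\alpha_s^{(\theta)}$ of $g_2$. A secondary subtlety worth flagging is that Lemma~\ref{prop:receiver_learning} only bounds pairwise posterior odds ratios, whereas here I need total-variation concentration of the full posterior on $\Delta(\Theta)$; this may require revisiting the internal argument of \citet{fudenberg_he_imhof_2016}'s Theorem~2 to extract the stronger rate, or combining pairwise odds-ratio bounds across all $|\Theta|(|\Theta|-1)$ pairs via a union bound.
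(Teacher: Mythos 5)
Your proposal is correct and follows essentially the same route as the paper's proof: decompose the receiver population by age, absorb young cohorts into a $(1-\gamma_k)\cdot T_0$ term, apply Hoeffding once to per-type match counts and once to within-type empirical frequencies of $s^*$, invoke the regularity of $g_2$ (via \citet{fudenberg_he_imhof_2016}) to pass from empirical to posterior concentration, and use robustness of the strict best response to conclude. The concern you flag about needing full posterior concentration rather than pairwise odds ratios is resolved exactly as you suggest \textemdash{} the paper cites Proposition~1 of \citet{fudenberg_he_imhof_2016} (posterior close to empirical at a rate depending only on sample size) rather than its Theorem~2.
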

\begin{proof}
Since $a^{*}$ is a strict best response after $s^{*}$ for $\pi_{1}^{*}$,
there exists $\epsilon>0$ so that $a^{*}$ will continue to be a
strict best response after $s^{*}$ for any $\pi_{1}^{'}\in\Pi_{1}$
where for every $\theta\in\Theta$, $|\pi_{1}^{'}(s^{*}|\theta)-\pi_{1}^{*}(s^{*}|\theta)|<3\epsilon.$

Since $\pi^{(k)}\to\pi^{*}$, find large enough $K$ such that $k\ge K$
implies for every $\theta\in\Theta$, $\left|\pi_{1}^{(k)}(s^{*}|\theta)-\pi_{1}^{*}(s^{*}|\theta)\right|<\epsilon.$

Write $e_{n,\theta}^{\text{obs}}$ for the probability that an age-$n$
receiver has encountered type $\theta$ fewer than $\frac{1}{2}n\lambda(\theta)$
times. We will find a number $N^{\text{obs}}<\infty$ so that 
\[
\sum_{\theta\in\Theta}\sum_{n=0}^{\infty}e_{n,\theta}^{\text{obs}}\le N^{\text{obs}}.
\]

Fix some $\theta\in\Theta$. Write $Z_{t}^{(\theta)}\in\{0,1\}$ as
the indicator random variable for whether the receiver sees a type
$\theta$ in period $t$ of his life and write $S_{n}\coloneqq\sum_{t=1}^{n}Z_{t}^{(\theta)}$
for the total number of type $\theta$ encountered up to age $n$.
We have $\mathbb{E}[S_{n}]=n\lambda(\theta)$, so we can use Hoeffding's
inequality to bound $e_{n,\theta}^{\text{obs}}$. 
\begin{align*}
e_{n,\theta}^{\text{obs}} & \le\mathbb{P}\left[|S_{n}-\mathbb{E}[S_{n}]|\ge\frac{1}{2}n\lambda(\theta)\right]\\
 & \le2\exp\left(-\frac{2\cdot[\frac{1}{2}n\lambda(\theta)]^{2}}{n}\right).
\end{align*}
This shows $e_{n,\theta}^{\text{obs}}$ tends to 0 at the same rate
as $\exp(-n)$, so 
\[
\sum_{n=0}^{\infty}e_{n,\theta}^{\text{obs}}\le\sum_{n=0}^{\infty}2\exp\left(-\frac{2\cdot[\frac{1}{2}n\lambda(\theta)]^{2}}{n}\right)\eqqcolon N_{\theta}^{\text{obs}}<\infty.
\]
 So we set $N^{\text{obs}}\coloneqq\sum_{\theta\in\Theta}N_{\theta}^{\text{obs}}$. 

Next, write $e_{n,\theta}^{\text{bias},k}$ for the probability that,
after observing $\left\lfloor \frac{1}{2}n\lambda(\theta)\right\rfloor $
i.i.d. draws from $\pi_{1}^{(k)}(\cdot|\theta)$, the empirical frequency
of signal $s^{*}$ differs from $\pi_{1}^{(k)}(s^{*}|\theta)$ by
more than $2\epsilon$. So again, write $Z_{t}^{\theta,k}\in\{0,1\}$
to indicate if the $t$-th draw resulted in signal $s^{*}$, with
$\mathbb{E}\left[Z_{t}^{\theta,k}\right]=\pi_{1}^{(k)}(s^{*}|\theta)$,
and put $S_{n,k}\coloneqq\sum_{t=1}^{\left\lfloor \frac{1}{2}n\lambda(\theta)\right\rfloor }Z_{t}^{\theta,k}$
for total number of $s^{*}$ out of $\left\lfloor \frac{1}{2}n\lambda(\theta)\right\rfloor $
draws. We have $\mathbb{E}[S_{n,k}]=\left\lfloor \frac{1}{2}n\lambda(\theta)\right\rfloor \cdot\pi_{1}^{(k)}(s^{*}|\theta)$,
but $\left|\pi_{1}^{(k)}(s^{*}|\theta)-\pi_{1}^{*}(s^{*}|\theta)\right|<\epsilon$
whenever $k\ge K$. That means,
\begin{align*}
e_{n,\theta}^{\text{bias},k}\coloneqq & \mathbb{P}\left[\left|\frac{S_{n,k}}{\left\lfloor \frac{1}{2}n\lambda(\theta)\right\rfloor }-\pi_{1}^{*}(s^{*}|\theta)\right|\ge2\epsilon\right]\\
\le & \mathbb{P}\left[\left|\frac{S_{n,k}}{\left\lfloor \frac{1}{2}n\lambda(\theta)\right\rfloor }-\pi_{1}^{(k)}(s^{*}|\theta)\right|\ge\epsilon\right]\text{ if \ensuremath{k\ge K}}\\
= & \mathbb{P}\left[\left|S_{n,k}-\mathbb{E}[S_{n,k}]\right|\ge\left\lfloor \frac{1}{2}n\lambda(\theta)\right\rfloor \cdot\epsilon\right]\\
\le & 2\exp\left(-\frac{2\cdot(\left\lfloor \frac{1}{2}n\lambda(\theta)\right\rfloor \cdot\epsilon)^{2}}{\left\lfloor \frac{1}{2}n\lambda(\theta)\right\rfloor }\right)\text{ by Hoeffding's inequality.}
\end{align*}

Let $N_{\theta}^{\text{bias}}\coloneqq\sum_{n=1}^{\infty}2\exp\left(-\frac{2\cdot(\left\lfloor \frac{1}{2}n\lambda(\theta)\right\rfloor \cdot\epsilon)^{2}}{\left\lfloor \frac{1}{2}n\lambda(\theta)\right\rfloor }\right)$,
with $N_{\theta}^{\text{bias}}<\infty$ since the summand tends to
0 at the same rate as$\exp(-n)$. This argument shows that, whenever
$k\ge K$, we have $\sum_{n=1}^{\infty}e_{n,\theta}^{\text{bias},k}\le N_{\theta}^{\text{bias}}$.
Now let $N^{\text{bias}}\coloneqq\sum_{\theta\in\Theta}N_{\theta}^{\text{bias}}$. 

Finally, since $g$ is regular, we appeal to Proposition 1 of \citet*{fudenberg_he_imhof_2016}
to see that there exists some $\underline{N}$ so that whenever the
receiver has a data set of size $n\ge\underline{N}$ on type $\theta$'s
play, his Bayesian posterior as to the probability that $\theta$
plays $s^{*}$ differs from the empirical distribution by no more
than $\epsilon$. Put $N^{\text{age}}\coloneqq\frac{2\underline{N}}{\min_{\theta\in\Theta}\lambda(\theta)}$. 

Consider any steady state $\psi^{(k)}$ with $k\ge K$. With probability
no smaller than $1-\sum_{\theta\in\Theta}e_{n,\theta}^{\text{bias},k}$,
an age-$n$ receiver who has seen at least $\frac{1}{2}n\lambda(\theta)$
instances of type $\theta$ for every $\theta\in\Theta$ will have
an empirical distribution such that every type's probability of playing
$s^{*}$ differs from $\pi_{1}^{*}(s^{*}|\theta)$ by less than $2\epsilon$.
If, furthermore, $n\ge N^{\text{age}}$, then in fact $\frac{1}{2}n\lambda(\theta)\ge\underline{N}$
for each $\theta$ so the same probability bound applies to the event
that the receiver's Bayesian posterior on every type $\theta$ playing
$s^{*}$ is closer than $3\epsilon$ to $\pi_{1}^{*}(s^{*}|\theta)$.
By the construction of $\epsilon$, playing $a^{*}$ after $s^{*}$
is the unique best response to such a posterior. 

Therefore, for $k\ge K$, the probability that the sender population
plays some action other than $a^{*}$ after $s^{*}$ in $\psi^{(k)}$
is bounded by 
\[
N^{\text{age}}(1-\gamma_{k})+(1-\gamma_{k})\cdot\sum_{n=0}^{\infty}\gamma_{k}^{n}\cdot\sum_{\theta\in\Theta}\left(e_{n,\theta}^{\text{obs}}+e_{n,\theta}^{\text{bias},k}\right).
\]

To explain this expression, receivers aged $N^{\text{age}}$ or younger
account for no more than $N^{\text{age}}(1-\gamma_{k})$ of the population.
Among the age $n$ receivers, no more than $\sum_{\theta\in\Theta}e_{n,\theta}^{\text{obs}}$
fraction has a sample size smaller than $\frac{1}{2}n\lambda(\theta)$
for any type $\theta$, while $\sum_{\theta\in\Theta}e_{n,\theta}^{\text{bias},k}$
is an upper bound on the probability (conditional on having a large
enough sample) of having a biased enough sample so that some type's
empirical frequency of playing $s^{*}$ differs by more than $2\epsilon$
from $\pi_{1}^{*}(s^{*}|\theta)$. 

But since $\gamma_{k}\in[0,1),$ 
\[
\sum_{n=0}^{\infty}\gamma_{k}^{n}\cdot\sum_{\theta\in\Theta}e_{n,\theta}^{\text{obs}}<\sum_{n=0}^{\infty}\sum_{\theta\in\Theta}e_{n,\theta}^{\text{obs}}\le N^{\text{obs}}
\]
and 
\[
\sum_{n=0}^{\infty}\gamma_{k}^{n}\cdot\sum_{\theta\in\Theta}e_{n,\theta}^{\text{bias},k}<\sum_{n=0}^{\infty}\sum_{\theta\in\Theta}e_{n,\theta}^{\text{bias},k}\le N^{\text{bias}}.
\]

We conclude that whenever $k\ge K$, 
\[
\pi_{2}^{(k)}(a^{*}|s^{*})\ge1-(1-\gamma_{k})\cdot(N^{\text{age}}+N^{\text{obs}}+N^{\text{bias}}).
\]

Finally, observe that none of $N^{\text{age}},N^{\text{obs}},N^{\text{bias}}$
depends on the sequence $\pi^{(k)}$, so $N$ is chosen independent
of the sequence $\pi^{(k)}$. 
\end{proof}
\begin{lem}
\label{lem:appendix_affine_bound}Assume $g$ is regular. Suppose
there is some $a^{*}\in A$ and $v\in\mathbb{R}$ so that $u_{1}(\theta,s^{*},a^{*})>v$.
Then, there exist $C_{1}\in(0,1)$, $C_{2}>0$ so that in every sender
history $y_{\theta}$, $\#(s^{*},a^{*}|y_{\theta})\ge C_{1}\cdot\#(s^{*}|y_{\theta})+C_{2}$
implies $\mathbb{E}\left[u_{1}(\theta,s^{*},\pi_{2}(\cdot|s^{*}))|y_{\theta}\right]>v.$
\end{lem}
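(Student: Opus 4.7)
The plan is to combine the regularity of $g_1$ (non-doctrinaire and with independent marginals across signals) with the uniform Diaconis--Freedman-type convergence of posterior means to empirical frequencies. The target is a lower bound on $\mathbb{E}[u_1(\theta,s^*,\pi_2(\cdot|s^*))\mid y_\theta]$ that only depends on the count of $(s^*,a^*)$ and of $s^*$ in the history.

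Set $u^* := u_1(\theta,s^*,a^*)$ and $\underline{u} := \min_{a \in A} u_1(\theta,s^*,a)$. Because $u^* > v$ by hypothesis, I can choose $C_1 \in (0,1)$ and a slack $\varepsilon > 0$ so that
\[
C_1 \, u^* + (1-C_1)\, \underline{u} > v + \varepsilon.
\]
(Any $C_1 > (v-\underline{u})/(u^*-\underline{u})$ works when $\underline{u} < v$, and any $C_1 \in (0,1)$ works otherwise.) The role of $C_1$ is to force the empirical frequency of $a^*$ among responses to $s^*$ to be high enough that, even if all remaining posterior mass is shifted to the worst action, the expected payoff from $s^*$ still exceeds $v$.

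By the independence marginal of $g_1$, the sender's posterior over $\pi_2(\cdot|s^*)$ depends only on the sub-history of responses observed after sending $s^*$, whose size is $n := \#(s^*\mid y_\theta)$ and which contains $k := \#(s^*,a^*\mid y_\theta)$ instances of $a^*$. Since $g_1^{(s^*)}$ is non-doctrinaire, Proposition 1 of \citet{fudenberg_he_imhof_2016} (building on \citet{diaconis_uniform_1990}) supplies a threshold $N$ such that whenever $n \geq N$, the posterior mean $\mathbb{E}[\pi_2(a^*|s^*)\mid y_\theta]$ differs from the empirical frequency $k/n$ by at most $\eta$, where $\eta > 0$ is chosen small enough that $\eta(u^* - \underline{u}) < \varepsilon$.

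Finally, take $C_2 := (1-C_1) N$. The hypothesis $k \geq C_1 n + C_2$, combined with $k \leq n$, forces $n \geq C_2/(1-C_1) = N$, so the rate estimate applies; it also forces $k/n \geq C_1$. Hence $\mathbb{E}[\pi_2(a^*|s^*)\mid y_\theta] \geq C_1 - \eta$, and bounding the remaining posterior mass below by $\underline{u}$ gives
\[
\mathbb{E}[u_1(\theta,s^*,\pi_2(\cdot|s^*))\mid y_\theta] \,\geq\, (C_1 - \eta)\, u^* + (1 - C_1 + \eta)\, \underline{u} \,>\, v + \varepsilon - \eta(u^* - \underline{u}) \,>\, v,
\]
as required. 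The only non-routine ingredient is the cited uniform rate of posterior-mean convergence; everything else is a bookkeeping calculation choosing $C_1$, $\eta$, and $C_2$ in this order.
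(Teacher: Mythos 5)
Your proof is correct and follows essentially the same route as the paper's: pick a weight $C_1$ (the paper's $q$) so that the convex combination $C_1 u^* + (1-C_1)\underline{u}$ clears $v$, then invoke Proposition 1 of \citet{fudenberg_he_imhof_2016} to tie the posterior mean of $\pi_2(a^*|s^*)$ to the empirical counts. The only difference is bookkeeping: the paper uses the bound in the form $(1-\epsilon)\,\#(s^*,a^*|y_\theta)/(\#(s^*|y_\theta)+C_0)$, valid for all histories, whereas you use an additive $\eta$-bound valid once the sample size exceeds a threshold $N$ and then force that threshold via $C_2=(1-C_1)N$ together with $k\le n$ --- both are valid readings of the cited uniform-convergence result.
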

\begin{proof}
Write $\underline{u}\coloneqq\min_{a\in A}u_{1}(\theta,s^{*},a)$.
There exists $q\in(0,1)$ so that 
\[
q\cdot u_{1}(\theta,s^{*},a^{*})+(1-q)\cdot\underline{u}>v.
\]
Find a small enough $\epsilon>0$ so that $0<\frac{q}{1-\epsilon}<1$. 

Since $g$ is regular, Proposition 1 of \citet*{fudenberg_he_imhof_2016}
tells us there exists some $C_{0}$ so that the posterior mean belief
of sender with history $y_{\theta}$, is no less than 
\[
(1-\epsilon)\cdot\frac{\#(s^{*},a^{*}|y_{\theta})}{\#(s^{*}|y_{\theta})+C_{0}}.
\]

Whenever this expression is at least $q$, the expected payoff to
$\theta$ playing $s^{*}$ exceeds $v$. That is, it suffices to have
\[
(1-\epsilon)\cdot\frac{\#(s^{*},a^{*}|y_{\theta})}{\#(s^{*}|y_{\theta})+C_{0}}\ge q\iff\#(s^{*},a^{*}|y_{\theta})\ge\frac{q}{1-\epsilon}\#(s^{*}|y_{\theta})+\frac{q}{1-\epsilon}\cdot C_{0}.
\]

Putting $C_{1}\coloneqq\frac{q}{1-\epsilon}$ and $C_{2}\coloneqq\frac{q}{1-\epsilon}\cdot C_{0}$
proves the lemma. 
\end{proof}
\begin{lem}
\label{lem:appendix_gambler}Let $Z_{t}$ be i.i.d. Bernoulli random
variables, where $\mathbb{E}[Z_{t}]=1-\epsilon$. Write $S_{n}\coloneqq\sum_{t=1}^{n}Z_{t}.$
For $0<C_{1}<1$ and $C_{2}>0$, there exist $\bar{\epsilon},G_{1},G_{2}>0$
such that whenever $0<\epsilon<\bar{\epsilon}$, 
\[
\mathbb{P}\left[S_{n}\ge C_{1}n+C_{2}\ \forall n\ge G_{1}\right]\ge1-G_{2}\epsilon.
\]
\end{lem}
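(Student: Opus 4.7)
The plan is to bound the complement probability $\mathbb{P}\bigl[\exists\, n \geq G_1 : S_n < C_1 n + C_2\bigr]$ by a union bound over $n$, combined with a Chernoff-style tail estimate on each individual term that captures the rare-event nature of seeing many zeros when $\epsilon$ is small. Switching to the dual variable $M_n := n - S_n$, a sum of i.i.d.\ Bernoulli$(\epsilon)$ random variables, the event $\{S_n < C_1 n + C_2\}$ coincides with $\{M_n > (1-C_1)n - C_2\}$. Since $\mathbb{E}[M_n] = n\epsilon$ is far below $(1-C_1)n - C_2$ once $\epsilon$ is small and $n$ is large, these are genuine large-deviation events in the regime $\epsilon \to 0$.

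For each fixed $n$ I would bound the binomial tail $\mathbb{P}[M_n \geq K(n)]$, where $K(n)$ denotes the smallest integer strictly exceeding $(1-C_1)n - C_2$. The ratio of consecutive binomial terms, $\binom{n}{j+1}\epsilon/\binom{n}{j} = \epsilon(n-j)/(j+1)$, is at most $\epsilon C_1/(1-C_1) < 1/2$ for $j \geq K(n)$, for $n$ sufficiently large, and for $\epsilon$ below a threshold that depends only on $C_1$. The tail is therefore at most twice its leading term; combined with the Stirling-type bound $\binom{n}{k} \leq (en/k)^k$ and the inequality $K(n) \geq (1-C_1)n/2$ (valid for $n \geq 2C_2/(1-C_1)$), this gives
\[
\mathbb{P}\bigl[S_n < C_1 n + C_2\bigr] \;\leq\; 2\binom{n}{K(n)} \epsilon^{K(n)} \;\leq\; 2\left(\frac{2e\epsilon}{1-C_1}\right)^{K(n)}.
\]

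Writing $\eta := 2e\epsilon/(1-C_1)$ and choosing $\bar\epsilon$ small enough that $\eta^{1-C_1} \leq 1/2$, a geometric summation gives
\[
\sum_{n \geq G_1} \mathbb{P}\bigl[S_n < C_1 n + C_2\bigr] \;\leq\; 2\eta^{-C_2}\sum_{n \geq G_1}\eta^{(1-C_1)n} \;\leq\; 4\,\eta^{(1-C_1)G_1 - C_2}.
\]
Since $\eta^{(1-C_1)G_1 - C_2}$ factors as a constant depending only on $C_1, C_2$ times $\epsilon^{(1-C_1)G_1 - C_2}$, I would choose $G_1 \geq \max\bigl(2C_2/(1-C_1),\, (C_2+1)/(1-C_1)\bigr)$ so that the exponent of $\epsilon$ is at least $1$, and then select $G_2$ to absorb the residual constant. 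This delivers the desired bound $G_2\epsilon$.

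The main obstacle is extracting the $\epsilon^{1}$ rate on the right-hand side rather than merely an exponential decay in $G_1$ that is uniform over $\epsilon$. A naive application of Hoeffding's inequality (the \textbf{Fact} stated in the paper) produces tails of order $\exp(-c(1-C_1)^2 n)$ that are essentially insensitive to $\epsilon$, so the summed bound shrinks with $G_1$ but does not vanish linearly with $\epsilon$. The critical insight is to exploit that each zero carries probability only $\epsilon$, so that demanding roughly $(1-C_1)n$ zeros contributes an $\epsilon^{(1-C_1)n}$ factor that vanishes rapidly as $\epsilon \to 0$ and ultimately yields the $O(\epsilon)$ rate after summing over $n$. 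Care must also be taken to verify that the three thresholds $\bar\epsilon, G_1, G_2$ can all be fixed in terms of $C_1, C_2$ alone, independently of $\epsilon$.
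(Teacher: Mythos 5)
Your proof is correct, and it takes a genuinely different route from the paper's. The paper does not argue term by term: it applies a maximal inequality from \citet{fudenberg_superstition_2006} (their Lemma A.1, built on inequalities in Billingsley) to the centered failure indicators $X_{t}=1-Z_{t}$, and the $O(\epsilon)$ rate falls out because that inequality's bound, $\tfrac{2^{7}}{3}\cdot\tfrac{1}{\underline{n}}\cdot\tfrac{\mu}{\epsilon^{4}}$, is \emph{linear in the mean} $\mu=\epsilon$ of the summands (the ``$\epsilon$'' in its denominator is the deviation threshold, set to the constant $\bar{\epsilon}/2$). You instead use a union bound over $n$ together with an explicit binomial upper-tail estimate, extracting the $\epsilon$-dependence from the factor $\epsilon^{K(n)}$ with $K(n)\gtrsim(1-C_{1})n$. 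Your diagnosis that the plain Hoeffding bound stated in the paper cannot deliver the $O(\epsilon)$ rate is exactly right; the paper escapes this by importing a mean-sensitive maximal inequality, whereas you escape it with a mean-sensitive pointwise tail bound. Your argument is self-contained and elementary, and in fact proves more than is needed: by enlarging $G_{1}$ the bound becomes $O(\epsilon^{m})$ for any $m$, while the paper's citation-based route caps out at $O(\epsilon)$ but is shorter. One small quantitative slip: for $j$ at the left edge of the tail, $j\approx(1-C_{1})n/2$, the ratio of consecutive terms is bounded by roughly $\tfrac{(1+C_{1})\epsilon}{(1-\epsilon)(1-C_{1})}$ rather than $\tfrac{C_{1}\epsilon}{1-C_{1}}$; this changes only the constant in the definition of $\bar{\epsilon}$ and does not affect the argument, since the ratio is still $O(\epsilon)$ with a constant depending only on $C_{1}$.
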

\begin{proof}
We make use of a lemma from \citet{fudenberg_superstition_2006},
which in turn extends some inequalities from \citet{billingsley1995probability}.

\textbf{FL06 Lemma A.1}: \emph{Suppose $\{X_{k}\}$ is a sequence
of i.i.d. Bernoulli random variables with $\mathbb{E}[X_{k}]=\mu$,
and define for each $n$ the random variable
\[
S_{n}\coloneqq\frac{|\sum_{k=1}^{n}(X_{k}-\mu)|}{n}.
\]
}

\emph{Then for any $\underline{n},\bar{n}\in\mathbb{N}$, 
\[
\mathbb{P}\left[\max_{\underline{n}\le n\le\bar{n}}S_{n}>\epsilon\right]\le\frac{2^{7}}{3}\cdot\frac{1}{\underline{n}}\cdot\frac{\mu}{\epsilon^{4}}.
\]
}

For every $G_{1}>0$ and every $0<\epsilon<1$, 
\begin{align*}
\mathbb{P}[S_{n}\ge C_{1}n+C_{2}\ \forall n\ge G_{1}] & =1-\mathbb{P}\left[(\exists n\ge G_{1})\sum_{t=1}^{n}Z_{t}<C_{1}n+C_{2}\right]\\
 & =1-\mathbb{P}\left[(\exists n\ge G_{1})\sum_{t=1}^{n}(X_{t}-\epsilon)>(1-\epsilon-C_{1})n-C_{2}\right],
\end{align*}
where $X_{t}\coloneqq1-Z_{t}$. Let $\bar{\epsilon}\coloneqq\frac{1}{2}(1-C_{1})$
and $G_{1}\coloneqq2C_{2}/\bar{\epsilon}$. Suppose $0<\epsilon<\bar{\epsilon}$.
Then for every $n\ge G_{1}$, $(1-\epsilon-C_{1})n-C_{2}\ge\bar{\epsilon}n-C_{2}\ge\frac{1}{2}\bar{\epsilon}n$.
Hence, 
\[
\mathbb{P}\left[S_{n}\ge C_{1}n+C_{2}\ \forall n\ge G_{1}\right]\ge1-\mathbb{P}\left[(\exists n\ge G_{1})\sum_{t=1}^{n}(X_{t}-\epsilon)>\frac{1}{2}\bar{\epsilon}n\right]
\]
and, by FL06 Lemma A.1, the probability on the right-hand side is
at most $G_{2}\epsilon$ with $G_{2}\coloneqq2^{11}/(3G_{1}\bar{\epsilon}^{4})$. 
\end{proof}
We now prove Theorem \ref{thm:eqm_dominated_types}. 

\textbf{Theorem }\ref{thm:eqm_dominated_types}: Suppose $\pi^{*}$
is on-path strict for the receiver and patiently stable. Then it satisfies
the strong compatibility criterion. 
\begin{proof}
Let some $a^{'}\notin\text{BR}(\Delta(\tilde{J}(s^{'},\pi^{*})),s^{'})$
and $h>0$ be given. We will show that $\pi_{2}^{*}(a^{'}|s^{'})\le3h$. 

\textbf{\noun{Step 1}}: Defining the constants $\xi,\theta^{J},a_{\theta},s_{\theta},C_{1},C_{2},G_{1},G_{2},\text{ and }N^{\text{recv}}$. 

(i) For each $\xi>0$, define the $\xi$-approximations to $\Delta(\tilde{J}(s^{'},\pi^{*}))$
as the probability distributions with weight no more than $\xi$ on
types outside of $\tilde{J}(s^{'},\pi^{*})$, 
\[
\Delta_{\xi}(\tilde{J}(s^{'},\pi^{*}))\coloneqq\left\{ p\in\Delta(\Theta):p(\theta)\le\xi\ \forall\theta\notin\tilde{J}(s^{'},\pi^{*})\right\} .
\]

Because the best-response correspondence has closed graph, there exists
some $\xi>0$ so that $a^{'}\notin\text{BR}(\Delta_{\xi}(\tilde{J}(s^{'},\pi^{*})),s^{'})$. 

(ii) Since $\tilde{J}(s^{'},\pi^{*})$ is nonempty, we can fix some
$\theta^{J}\in\tilde{J}(s^{'},\pi^{*})$. 

(iii) For each equilibrium-dominated type $\theta\in\Theta\backslash\tilde{J}(s^{'},\pi^{*})$,
identify some on-path signal $s_{\theta}$ so that $\pi_{1}^{*}(s_{\theta}|\theta)>0$.
By assumption of on-path strictness for the receiver, there is some
$a_{\theta}\in A$ so that $\pi_{2}^{*}(a_{\theta}|s_{\theta})=1$,
and furthermore, $a_{\theta}$ is the strict best response to $s_{\theta}$
in $\pi^{*}$. By the definition of equilibrium dominance, 
\[
u_{1}(\theta,s_{\theta},a_{\theta})>\max_{a\in A}u_{1}(\theta,s^{'},a)\eqqcolon v_{\theta}.
\]

By applying Lemma \ref{lem:appendix_affine_bound} to each $\theta\in\Theta\backslash\tilde{J}(s^{'},\pi^{*})$,
we obtain some $C_{1}\in(0,1)$, $C_{2}>0$ so for every $\theta\in\Theta\backslash\tilde{J}(s^{'},\pi^{*})$
and in every sender history $y_{\theta}$, $\#(s_{\theta},a_{\theta}|y_{\theta})\ge C_{1}\cdot\#(s_{\theta}|y_{\theta})+C_{2}$
implies $\mathbb{E}\left[u_{1}(\theta,s_{\theta},\pi_{2}(\cdot|s_{\theta}))|y_{\theta}\right]>v_{\theta}.$

(iv) By Lemma \ref{lem:appendix_gambler}, find $\bar{\epsilon},G_{1},G_{2}>0$
such that if $\mathbb{E}[Z_{t}]=1-\epsilon$ are i.i.d. Bernoulli
and $S_{n}\coloneqq\sum_{t=1}^{n}Z_{t}$, then whenever $0<\epsilon<\bar{\epsilon}$,
\[
\mathbb{P}\left[S_{n}\ge C_{1}n+C_{2}\ \forall n\ge G_{1}\right]\ge1-G_{2}\epsilon.
\]

(v) Because at $\pi^{*}$, $a_{\theta}$ is a strict best response
to $s_{\theta}$ for every $\theta\in\Theta\backslash\tilde{J}(s^{'},\pi^{*})$,
from Lemma \ref{lem:appendix_receiver_rate} we may find a $N^{\text{recv}}$
so that for each sequence $\pi^{(k)}\in\Pi^{*}(g,\delta_{k},\gamma_{k})$
where $\gamma_{k}\to1$,$\pi^{(k)}\to\pi^{*}$, there corresponds
$K^{\text{recv}}\in\mathbb{N}$ so that $k\ge K^{\text{recv}}$ implies
$\pi_{2}^{(k)}(a_{\theta}|s_{\theta})\ge1-(1-\gamma_{k})\cdot N^{\text{recv}}$
for every $\theta\in\Theta\backslash\tilde{J}(s^{'},\pi^{*})$. 

\textbf{\noun{Step 2}}: Two conditions to ensure that all but $3h$
receivers believe in $\Delta_{\xi}(\tilde{J}(s^{'},\pi^{*}))$. 

Consider some steady state $\psi\in\Psi^{*}(g,\delta,\gamma)$ for
$g$ regular, $\delta,\gamma\in[0,1)$. 

In Theorem 2 of \citet*{fudenberg_he_imhof_2016}, put $c=\frac{2}{\xi}\cdot\frac{\max_{\theta\in\Theta}\lambda(\theta)}{\lambda(\theta^{J})}$
and $\delta=\frac{1}{2}$. We conclude that there exists some $N^{\text{rare}}$
(not dependent on $\psi$) such that whenever $\pi_{1}(s^{'}|\theta^{J})\ge c\cdot\pi_{1}(s^{'}|\theta^{\text{D}})$
for every equilibrium-dominated type $\theta^{\text{D}}\notin\tilde{J}(s^{'},\pi^{*})$
and 
\begin{equation}
n\cdot\pi_{1}(s^{'}|\theta^{J})\ge N^{\text{rare}},\label{eq:eqd_sample}
\end{equation}
then an age-$n$ receiver in steady state $\psi$ where $\pi=\sigma(\psi)$
has probability at least $1-h$ of holding a posterior belief $g_{2}(\cdot|y_{2})$
such that $\theta^{J}$ is at least $\frac{1}{2}c$ times as likely
to play $s^{'}$ as $\theta^{\text{D}}$ is for every $\theta^{\text{D }}\notin\tilde{J}(s^{'},\pi^{*})$.
Thus history $y_{2}$ generates a posterior belief after $s^{'},$
$p(\cdot|s^{'};y_{2})$ such that 
\[
\frac{p(\theta^{\text{D}}|s^{'};y_{2})}{p(\theta^{J}|s^{'};y_{2})}\le\frac{\lambda(\theta^{\text{D}})}{\lambda(\theta^{J})}\cdot\xi\cdot\frac{\lambda(\theta^{J})}{\max_{\theta\in\Theta}\lambda(\theta)}\le\xi.
\]

In particular, $p(\cdot|s^{'};y_{2})$ must assign weight no greater
than $\xi$ to each type not in $\tilde{J}(s^{'},\pi^{*})$; therefore,
the belief belongs to $\Delta_{\xi}(\tilde{J}(s^{'},\pi^{*}))$. By
construction of $\xi$, $a^{'}$ is then not a best response to $s^{'}$
after history $y_{2}$. 

A receiver whose age $n$ satisfies Equation (\ref{eq:eqd_sample})
plays $a^{'}$ with probability less than $h$, provided $\pi_{1}(s^{'}|\theta^{J})\ge c\cdot\pi_{1}(s^{'}|\theta^{\text{D}})$
for every $\theta^{\text{D}}\notin\tilde{J}(s^{'},\pi^{*})$. However,
to bound the overall probability of $a^{'}$ in the entire receiver
population in steady state $\psi$, we ensure that Equation (\ref{eq:eqd_sample})
is satisfied for all except $2h$ fraction of receivers in $\psi$.
We claim that when $\gamma$ is large enough, a sufficient condition
is for $\pi=\sigma(\psi)$ to satisfy $\pi_{1}(s^{'}|\theta^{J})\ge(1-\gamma)N^{*}$
for some $N^{*}\ge N^{\text{rare}}/h$. This is because under this
condition, any agent aged $n\ge\frac{h}{1-\gamma}$ satisfies Equation
(\ref{eq:eqd_sample}), while the fraction of receivers younger than
$\frac{h}{1-\gamma}$ is $1-\left(\gamma^{\frac{h}{1-\gamma}}\right)\le2h$
for $\gamma$ near enough to 1. 

To summarize, in Step 2 we have found a constant $N^{\text{rare}}$
and shown that if $\gamma$ is near enough to 1, then $\pi=\sigma(\psi)$
has $\pi_{2}(a^{'}|s^{'})\le3h$ if the following two conditions are
satisfied: 

(\textbf{C1}) $\pi_{1}(s^{'}|\theta^{J})\ge c\cdot\pi_{1}(s^{'}|\theta^{\text{D}})$
for every equilibrium-dominated type $\theta^{\text{D}}\notin\tilde{J}(s^{'},\pi^{*})$

(\textbf{C2}) $\pi_{1}(s^{'}|\theta^{J})\ge(1-\gamma)N^{*}$ for some
$N^{*}\ge N^{\text{rare}}/h.$

In the following step, we show there is a sequence of steady states
$\psi^{(k)}\in\Psi^{*}(g,\delta_{k},\gamma_{k})$ with $\delta_{k}\to1,$
$\gamma_{k}\to1$, and $\sigma(\psi^{(k)})=\pi^{(k)}\to\pi^{*}$ such
that, in every $\pi^{(k)}$, the above two conditions are satisfied.
Using the fact that $\gamma_{k}\to1,$ we conclude that, for large
enough $k$, we get $\pi_{2}^{(k)}(a^{'}|s^{'})\le3h$, which in turn
shows $\pi^{*}(a^{'}|s^{'})\le3h$ due to the convergence $\pi^{(k)}\to\pi^{*}$. 

\textbf{\noun{Step 3}}: Extracting a suitable subsequence of steady
states. 

In the statement of Lemma \ref{lem:nondom_message}, put $\theta^{'}\coloneqq\theta^{J}$.
We obtain some number $\epsilon$ and functions $\bar{\delta}(N),$
$\bar{\gamma}(N,\delta)$. Put $N^{\text{ratio}}\coloneqq\frac{2}{\xi}G_{2}\cdot N^{\text{recv}}\frac{\max_{\theta\in\Theta}\lambda(\theta)}{\lambda(\theta^{J})}$
and $N^{*}\coloneqq\max(N^{\text{ratio}},N^{\text{rare}}/h)$. 

Since $\pi^{*}$ is patiently stable, it can be written as the limit
of some strategy profiles $\pi^{*}=\lim_{k\to\infty}\pi^{(k)}$, where
each $\pi^{(k)}$ is $\delta_{k}$-stable with $\delta_{k}\to1$.
By the definition of $\delta$-stable, each $\pi^{(k)}$ is the limit
$\pi^{(k)}=\lim_{j\to\infty}\pi^{(k,j)}$ with $\pi^{(k,j)}\in\Pi^{*}(g,\delta_{k},\gamma_{k,j})$
with $\lim_{j\to\infty}\gamma_{k,j}=1$. It is without loss to assume
that for every $k\ge1,$ $\delta_{k}\ge\bar{\delta}(N^{*})$, and
that the $L_{1}$ distance between $\pi^{(k)}$ and $\pi^{*}$ is
less than $\epsilon/2$. Now, for each $k$, find a large enough index
$j(k)$ so that (i) $\gamma_{k,j(k)}\ge\gamma(N^{*},\delta_{k})$,
(ii) $L_{1}$ distance between $\pi^{(k,j)}$ and $\pi^{(k)}$ is
less than $\min(\frac{\epsilon}{2},\frac{1}{k})$, and (iii) $\lim_{k\to\infty}\gamma_{k,j(k)}=1$.
This generates a sequence of $k$-indexed steady states, $\psi^{(k,j(k))}\in\Psi^{*}(g,\delta_{k},\gamma_{k,j(k)})$.
We will henceforth drop the dependence through the function $j(k)$
and just refer to $\psi^{(k)}$ and $\gamma_{k}$. The sequence $\psi^{(k)}\in\Psi^{*}(g,\delta_{k},\gamma_{k})$
satisfies: (1) $\delta_{k}\to1,\gamma_{k}\to1$; (2) $\delta_{k}\ge\bar{\delta}(N^{*})$
for each $k$; (3) $\gamma_{k}\ge\bar{\gamma}(N^{*},\delta_{k})$
for each $k$; (4) $\pi^{(k)}\to\pi^{*}$; (5) the $L_{1}$ distance
between $\bar{\psi}^{(k)}$ and $\pi^{*}$ is no larger than $\epsilon$.
Lemma \ref{lem:nondom_message} implies that, for every $k$, $\pi_{1}^{(k)}(s^{'}|\theta^{J})\ge(1-\gamma_{k})N^{*}$.
So, every member of the sequence thus constructed satisfies condition
(\textbf{C2}). 

\textbf{\noun{Step 4}}: An upper bound on experimentation probability
of equilibrium-dominated types. 

It remains to show that eventually condition (\textbf{C1}) is also
satisfied in the sequence constructed in \textbf{\noun{Step 3}}\noun{.}\textbf{\noun{
}}We first bound the rate at which the aggregate receiver strategy
$\pi_{2}^{(k)}$ converges to $\pi_{2}^{*}$. By Lemma \ref{lem:appendix_receiver_rate},
there exists some $K^{\text{recv}}$ so that $k\ge K^{\text{recv}}$
implies $\pi_{2}^{(k)}(a_{\theta}|s_{\theta})\ge1-(1-\gamma_{k})\cdot N^{\text{recv}}$
for every $\theta\in\Theta\backslash\tilde{J}(s^{'},\pi^{*})$. Find
next a large enough $K^{\text{error}}$ so that $k\ge K^{\text{error}}$
implies $(1-\gamma_{k})\cdot N^{\text{recv}}<\bar{\epsilon}$ (where
$\bar{\epsilon}$ was defined in \textbf{\noun{Step 1}}). 

We claim that when $k\ge\max(K^{\text{recv}},K^{\text{error}})$,
a type $\theta\notin\tilde{J}(s^{'},\pi^{*})$ sender who always sends
signal $s_{\theta}$ against a receiver population that plays $\pi_{2}^{(k)}(\cdot|s_{\theta})$
has less than $(1-\gamma_{k})\cdot N^{\text{recv}}\cdot G_{2}$ chance
of ever having a posterior belief that the expected payoff to $s_{\theta}$
is no greater than $v_{\theta}$ in some period $n\ge G_{1}$. This
is because by Lemma \ref{lem:appendix_gambler}, 
\[
\mathbb{P}\left[S_{n}\ge C_{1}n+C_{2}\ \forall n\ge G_{1}\right]\ge1-G_{2}\cdot\pi_{2}^{(k)}(\{a\ne a_{\theta}\}|s_{\theta})\ge1-G_{2}\cdot(1-\gamma_{k})\cdot N^{\text{recv}}
\]
where $S_{n}$ refers to the number of times that the receiver population
responded to $s_{\theta}$ with $a_{\theta}$ in the first $n$ times
that $s_{\theta}$ was sent. But Lemma \ref{lem:appendix_affine_bound}
guarantees that provided $S_{n}\ge C_{1}n+C_{2}$, sender's expected
payoff for $s_{\theta}$ is strictly above $v_{\theta}$, so we have
established the claim. 

Finally, find a large enough $K^{\text{Gittins}}$ so that $k\ge K^{\text{Gittins}}$
implies the effective discount factor $\delta_{k}\gamma_{k}$ is so
near 1 that for every $\theta\notin\tilde{J}(s^{'},\pi^{*}),$ the
Gittins index for signal $s_{\theta}$ cannot fall below $v_{\theta}$
if $s_{\theta}$ has been used no more than $G_{1}$ times. (This
is possible since the prior is non-doctrinaire.) Then for $k\ge\max(K^{\text{recv}},K^{\text{error}},K^{\text{Gittins}})$,
there is less than $G_{2}\cdot(1-\gamma_{k})\cdot N^{\text{recv}}$
chance that the equilibrium-dominated sender $\theta\notin\tilde{J}(s^{'},\pi^{*})$
will play $s^{'}$ even once. To see this, we observe that according
to the prior, the Gittins index for $s_{\theta}$ is higher than that
of $s^{'}$, whose index is no higher than its highest possible payoff
$v_{\theta}$. This means the sender will not play $s^{'}$ until
her Gittins index for $s_{\theta}$ has fallen below $v_{\theta}$.
Since $k\ge K^{\text{recv}}$, this will not happen before the sender
has played $s_{\theta}$ at least $G_{1}$ times, and since $k\ge\max(K^{\text{error}},K^{\text{recv}}$),
the previous claim establishes that the probability of the expected
payoff to $s_{\theta}$ (and, a fortiori, the Gittins index for $s_{\theta})$
ever falling below $v_{\theta}$ sometime after playing $s_{\theta}$
for the $G_{1}$-th time is no larger than $G_{2}\cdot(1-\gamma_{k})\cdot N^{\text{recv}}$. 

This shows that, for $k\ge\max(K^{\text{recv}},K^{\text{error}},K^{\text{Gittins}})$,
$\pi_{1}^{(k)}(s^{'}|\theta)\le G_{2}N^{\text{recv}}\cdot(1-\gamma_{k})$
for every $\theta\notin\tilde{J}(s^{'},\pi^{*})$. But since $\pi_{1}^{(k)}(s^{'}|\theta^{J})\ge N^{*}\cdot(1-\gamma_{k})$
where $N^{*}\ge N^{\text{ratio}}=\frac{2}{\xi}G_{2}\cdot N^{\text{recv}}\frac{\max_{\theta\in\Theta}\lambda(\theta)}{\lambda(\theta^{J})}$,
we see that condition (\textbf{C1}) is satisfied whenever $k\ge\max(K^{\text{recv}},K^{\text{error}},K^{\text{Gittins}})$.
\end{proof}

\end{document}